\def\BibTeX{{\rm B\kern-.05em{\sc i\kern-.025em b}\kern-.08em
T\kern-.1667em\lower.7ex\hbox{E}\kern-.125emX}}
\def\Snospace~{\S{}}
\newcommand{\header}[1]{\smallskip\noindent\textbf{#1}}
\newcommand{\bullethdr}[1]{\noindent\textbullet~\textbf{#1}}
\newcommand{\remarks}{\noindent\textbf{Remarks}}
\newcommand{\remark}{\noindent\textbf{Remark}}
\NewDocumentCommand{\E}{o m}{\mathbb{E}\IfValueTF{#1}{_{#1}}{}\left\{#2\right\}}
\newcommand{\argmax}{\mathop{\arg\max}}
\newcommand{\CA}{\mathcal{A}}
\newcommand{\CS}{\mathcal{S}}
\newcommand{\OPT}{\mathrm{OPT}}
\newcommand{\bx}{\mathbf{x}}
\newtheorem{definition}{Definition}
\newtheorem{problem}{Problem}
\newtheorem{theorem}{Theorem}
\newtheorem{lemma}{Lemma}
\newtheorem{example}{Example}
\title{Tracking Influential Nodes in Time-Decaying Dynamic Interaction Networks}
\author{%
\IEEEauthorblockN{%
Junzhou Zhao\textsuperscript{1}\quad
Shuo Shang\textsuperscript{2}\quad
Pinghui Wang\textsuperscript{3}\quad
John C.S. Lui\textsuperscript{4}\quad
Xiangliang Zhang\textsuperscript{1}%
}
\medskip
\IEEEauthorblockA{%
\textsuperscript{1}\textit{King Abdullah University of Science and
Technology, Saudi Arabia} \\
{\fontsize{9}{9}\ttfamily\upshape\{junzhou.zhao, xiangliang.zhang\}@kaust.edu.sa} \\
\textsuperscript{2}\textit{Inception Institute of Artificial Intelligence, UAE} \\
{\fontsize{9}{9}\ttfamily\upshape jedi.shang@gmail.com} \\
\textsuperscript{3}\textit{Xi'an Jiaotong University, China}\\
{\fontsize{9}{9}\ttfamily\upshape phwang@mail.xjtu.edu.cn}\\
\textsuperscript{4}\textit{The Chinese University of Hong Kong, Hong Kong}\\
{\fontsize{9}{9}\ttfamily\upshape cslui@cse.cuhk.edu.hk}}
\thanks{*Shuo Shang and Xiangliang Zhang are the corresponding authors.}}
\begin{document}
\maketitle

\begin{abstract}
Identifying influential nodes that can jointly trigger the maximum influence
spread in networks is a fundamental problem in many applications such as viral
marketing, online advertising, and disease control.
Most existing studies assume that social influence is static and they fail to
capture the dynamics of influence in reality.
In this work, we address the dynamic influence challenge by designing efficient
streaming methods that can identify influential nodes from highly dynamic node
interaction streams.
We first propose a general {\em time-decaying dynamic interaction network} (TDN)
model to model node interaction streams with the ability to smoothly discard
outdated data.
Based on the TDN model, we design three algorithms, i.e., {\sc SieveADN}, {\sc
BasicReduction}, and {\sc HistApprox}.
{\sc SieveADN} identifies influential nodes from a special kind of TDNs with
efficiency.
{\sc BasicReduction} uses {\sc SieveADN} as a basic building block to identify
influential nodes from general TDNs.
{\sc HistApprox} significantly improves the efficiency of {\sc BasicReduction}.
More importantly, we theoretically show that all three algorithms enjoy constant
factor approximation guarantees.
Experiments conducted on various real interaction datasets demonstrate that our
approach finds near-optimal solutions with speed at least $5$ to $15$ times faster
than baseline methods.

\end{abstract}

\section{\textbf{Introduction}}
\label{sec:introduction}

Online social networks allow their users to connect and interact with each other
such as one user re-tweets/re-shares another user's tweets/posts on
Twitter/Facebook.
Interactions between connected users can cause members in the network to be
influenced.
For example, a video goes viral on Twitter after being re-tweeted many times, a
rumor spreads like a wildfire on Facebook via re-sharing, etc.
In these scenarios, users in the network are influenced (i.e., they watched the
video or got the rumor) via a cascade of user interactions.
Understanding and leveraging social influence have been hot in both academia and
business.
For example, in academia, identifying $k$ users who can jointly trigger the
maximum influence spread in a network is known as the {\em influence maximization}
(IM) problem~\cite{Kempe2003}; in business, leveraging social influence to boost
product sales is known as viral marketing.

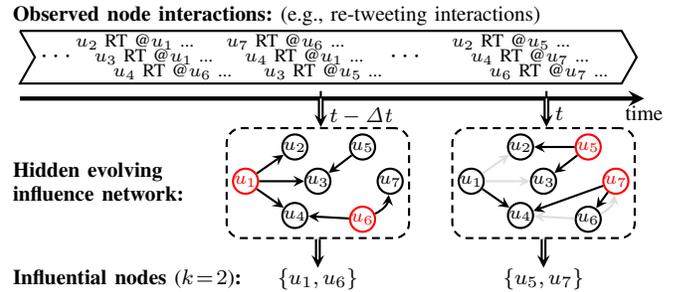
\begin{figure}[t]
\centering
\begin{tikzpicture}[
every node/.style={font=\footnotesize,inner sep=0pt},
nd/.style={circle,thick,draw,minimum size=9pt,font=\scriptsize},
rt/.style={anchor=north west, inner sep=1pt,font=\scriptsize},
arr/.style={->,>=stealth,thick},
flow/.style={->,>=stealth,thick,double},
del/.style={gray!25},
box/.style={rectangle,draw,thick,densely dashed,rounded corners=2mm,inner sep=2pt},
hd/.style={anchor=west,align=left,font=\bfseries\footnotesize},
tic/.style={anchor=north west,inner sep=3pt},
]

\coordinate (bc) at (-1,0);

\node[above left=.2 and .4 of bc,hd] {Observed node interactions: {\normalfont (e.g., re-tweeting interactions)}};

\coordinate[left=.3 of bc] (st_nw);
\coordinate[below=.7 of st_nw] (st_sw);
\coordinate[right=8 of st_nw] (st_ne);
\coordinate[right=8 of st_sw] (st_se);
\coordinate[below right=.35 and .2 of st_nw] (st_w);
\coordinate[below right=.35 and .2 of st_ne] (st_e);
\draw[thick] (st_nw) -- (st_ne) -- (st_e) -- (st_se) -- (st_sw) -- (st_w) -- (st_nw);

\coordinate[right=.4 of bc] (rb1);
\node[rt] at (rb1) {$u_2$ RT @$u_1$ ...};
\node[rt,below right=.2 and .25 of rb1] (tmp) {$u_3$ RT @$u_1$ ...};
\node[rt,below right=.4 and .5 of rb1] {$u_4$ RT @$u_6$ ...};

\node[left=.2 of tmp,font=\normalsize] {$\cdots$};

\coordinate[right=2 of rb1] (rb2);
\node[rt] at (rb2) {$u_7$ RT @$u_6$ ...};
\node[rt,below right=.2 and .25 of rb2] (tmp) {$u_4$ RT @$u_1$ ...};
\node[rt,below right=.4 and .5 of rb2] {$u_3$ RT @$u_5$ ...};

\node[right=.3 of tmp,font=\normalsize] {$\cdots$};

\coordinate[right=3 of rb2] (rb3);
\node[rt] at (rb3) {$u_2$ RT @$u_5$ ...};
\node[rt,below right=.2 and .25 of rb3] {$u_4$ RT @$u_7$ ...};
\node[rt,below right=.4 and .5 of rb3] {$u_6$ RT @$u_7$ ...};

\coordinate[below left=.9 and .3 of bc] (rb4);
\draw[arr,ultra thick] (rb4) -- ++(8.3,0) node[below=2pt,anchor=north] {time};

\coordinate[right=4 of rb4] (tc1);
\draw[thick] (tc1) node[tic] (nt) {$t-\mathit{\Delta t}$} -- ++(0,.1);

\coordinate[right=3 of tc1] (tc2);
\draw[thick] (tc2) node[tic] (ntd) {$t$} -- ++(0,.1);

\node[below left=2 and .4 of bc,hd] {Hidden evolving\\influence network:};
\node[below left=3.3 and .4 of bc,hd] (h3) {Influential nodes {\normalfont($k\!=\!2$)}:};

\begin{scope}[xshift=1.7cm,yshift=-2cm]
\node[nd,red] (u1) at (0,0) {$u_1$};
\node[nd,above right=.2 and .4 of u1] (u2) {$u_2$};
\node[nd,right=.6 of u1] (u3) {$u_3$};
\node[nd,below right=.2 and .4 of u1] (u4) {$u_4$};
\node[nd,above right=.2 and 1.3 of u1] (u5) {$u_5$};
\node[nd,red,below right=.25 and 1.3 of u1] (u6) {$u_6$};
\node[nd,right=.6 of u3] (u7) {$u_7$};

\draw[arr] (u1) -- (u2);
\draw[arr] (u1) -- (u3);
\draw[arr] (u1) -- (u4);
\draw[arr] (u5) -- (u3);
\draw[arr] (u6) -- (u4);
\draw[arr] (u6) edge[bend right] (u7);

\node[box,fit={(u1) (u2) (u6) (u7)}] (gt1) {};
\draw[flow] (tc1) -- (gt1.north -| tc1);
\coordinate[right=.3 of u7] (arrow_left);

\node at (h3 -| gt1) (nds1) {$\{u_1,u_6\}$};
\draw[flow] (gt1) -- (nds1);
\end{scope}

\begin{scope}[xshift=4.7cm,yshift=-2cm]
\node[nd] (u1) at (0,0) {$u_1$};
\node[nd,above right=.2 and .4 of u1] (u2) {$u_2$};
\node[nd,right=.6 of u1] (u3) {$u_3$};
\node[nd,below right=.2 and .4 of u1] (u4) {$u_4$};
\node[nd,red,above right=.2 and 1.3 of u1] (u5) {$u_5$};
\node[nd,below right=.25 and 1.3 of u1] (u6) {$u_6$};
\node[nd,red,right=.6 of u3] (u7) {$u_7$};

\draw[arr,del] (u1) -- (u2);
\draw[arr,del] (u1) -- (u3);
\draw[arr] (u1) -- (u4);
\draw[arr] (u5) -- (u2);
\draw[arr] (u5) -- (u3);
\draw[arr,del] (u6) -- (u4);
\draw[arr,del] (u6) edge[bend right] (u7);
\draw[arr] (u7) -- (u4);
\draw[arr] (u7) -- (u6);

\node[box,fit={(u1) (u2) (u6) (u7)}] (gt2) {};
\draw[flow] (tc2) -- (gt2.north -| tc2);
\coordinate[left=.3 of u1] (arrow_right);

\node at (h3 -| gt2) (nds2) {$\{u_5,u_7\}$};
\draw[flow] (gt2) -- (nds2);
\end{scope}

\end{tikzpicture}

\caption{Observed node interactions (``$u$ RT @$v$ ...''
denotes that user $u$ re-tweeted user $v$'s tweet) and hidden evolving
influence network (a directed edge $(u,v)$ denotes that $u$ influenced $v$).
Influential nodes also evolve.}
\label{fig:dyn}
\end{figure}

User interactions are first-hand evidences reflecting one user's influence on
another, e.g., user $a$ re-tweeting user $b$'s tweet implies that $b$ influenced
$a$.
Most studies on social
influence~\cite{Kempe2003,Borgs2014,Chen2010,Tang2014,Lucier2015,Tang2015,Chen2016d,Litou2017,Lin2017}
need to estimate {\em influence probabilities} based on observed user
interactions~\cite{Saito2008,Goyal2010,Kutzkov2013}.
Then, user influence is evaluated on an {\em influence network}, which is a
directed graph with influence probabilities among nodes.

\header{Dynamic influence challenge.}
One major issue in these studies is that both influence probabilities and
influence network are assumed to be static, i.e., {\em social influence is assumed
to be static}.
However, social influence in real-world could be dynamic driven by the highly
dynamic user interactions.
For instance, user $a$ frequently re-tweeted user $b$'s tweets in the past few
weeks, but stopped re-tweeting recently because $b$ posted offensive content and
$a$ unfollowed $b$, thereby $b$ no longer exerting influence on $a$.
Indeed, \cite{Myers2014} reported that Twitter network is highly dynamic with
about $9\%$ of all connections changing in every month.
Moreover, user interactions can be drastically affected by external out-of-network
sources such as mass media, newspapers, and TV stations~\cite{Myers2012}.
Consequently, it is no longer suitable to assume that social influence is static;
otherwise the identified influential users in IM may just quickly become outdated.
This raises the following problem: {\em in a world with highly dynamic user
interactions, how can we efficiently identify $k$ most influential users at any
time, as illustrated in Fig.~\ref{fig:dyn}?}

A straightforward way to handle the dynamics of influence is that we re-compute
everything from scratch every time we need to query influential users, i.e.,
re-estimate influence probabilities and re-identify influential users on the
updated influence network.
Obviously, this approach incurs too much computational overhead which may be
unaffordable if we need to query influential users frequently.
There have been some recent research efforts trying to address the dynamic
influence challenge, such as the heuristic
approaches~\cite{Aggarwal2012,Zhuang2013}, building updatable
sketches~\cite{Ohsaka2016,Yang2017c}, and the interchange greedy
approach~\cite{Song2017}.
However, these methods either do not have theoretical guarantees on the quality of
selected users (e.g., heuristic approaches~\cite{Aggarwal2012,Zhuang2013}), or
they cannot handle highly dynamic data (e.g., the interchange greedy
approach~\cite{Song2017} actually degrades to the re-computation approach when
influential users change significantly over time).
In addition, these methods assume that influence probabilities are given in
advance; however, estimating influence probabilities itself could be a high
complexity inference task~\cite{Saito2008,Goyal2010,Xiang2010,Kutzkov2013},
especially when influence probabilities are time-varying.

\header{Present work: a streaming optimization approach.}
In this work, we explore the potential of designing efficient streaming methods to
address the dynamic influence challenge.

When user interactions are continuously generated and aggregated in chronological
order, they form a stream.
An appealing approach for identifying influential users is to process this user
interaction stream directly, in a streaming fashion~\cite{Alon1999}.
Specifically, we attempt to maintain some compact {\em intermediate results} while
processing the stream.
We keep updating these intermediate results when new user interactions are
examined.
At any query time, we can quickly obtain a solution using the maintained
intermediate results.
This streaming approach has the potential to allow us to track influential users
over time continuously.
However, to materialize this ideal streaming approach, we need to carefully
address following concerns.

\textbullet\,\textit{Data recency.}
Older user interactions are less significant than more recent ones in evaluating
users' current influence.
For example, the observation that user $b$'s tweet was re-tweeted a year ago, is
less valuable than the observation that $b$'s tweet was re-tweeted yesterday, when
evaluating $b$'s current influence.
The streaming approach is required to have a mechanism that can properly discard
outdated data.

\textbullet\,\textit{Space and update efficiency.}
Space used by storing intermediate results should be compact and upper bounded
with the progression of the stream.
Meanwhile, the update operation should be as efficient as possible so that we can
handle high-speed user interaction streams which are common in practice.

\textbullet\,\textit{Solution quality.}
The output solution should be close to the optimal solution at any query time.

This paper is essentially devoted to address above concerns by proposing a general
streaming model and designing a set of streaming algorithms.

To address the data recency concern, a commonly used streaming model in the
literature is the {\em sliding-window}
model~\cite{Datar2002,Braverman2007,Epasto2017} where only the most recent $W$
elements in the stream remain active and the rest are discarded.
For example, \cite{Wang2017} recently developed a streaming method based on the
sliding-window model to solve IM in a streaming fashion.
However, the sliding-window model has its limitation which can be exposed by the
following example.

\begin{example}\label{eg:sw}
Suppose we want to identify influential users on Twitter based on re-tweeting
interactions, i.e., if a user's tweets were re-tweeted by many other users, the
user is considered to be influential.
Alice is an influential user on Twitter for many years.
Recently, Alice is ill and has been in hospital for weeks.
During this period, Alice cannot use Twitter.
Because Alice disappeared from her followers' timelines, no user re-tweeted her
tweets during this period.
\end{example}

In Example~\ref{eg:sw}, if the sliding-window size is too small that no
re-tweeting interaction related to Alice is observed, then Alice will be
considered not influential, even though she has been influential for many years
and her absence is merely temporal.
This example demonstrates that sliding-window model does not discard outdated data
in a smooth manner and results in unstable solutions.
It thus motivates us to find better models.

\bullethdr{TDN model.}
In this work, we propose a general {\em \textbf{t}ime-decaying \textbf{d}ynamic
interaction \textbf{n}etwork} (TDN) model to enable smoothly discarding outdated
user interactions, rather than the non-smooth manner in sliding-window model.
In TDN, each user interaction is assigned a \emph{lifetime}.
The lifetime is the maximum time span that a user interaction can remain active.
Lifetime automatically decreases as time elapses.
If the lifetime becomes zero, the corresponding user interaction is discarded.
By choosing different lifetime assignments, TDN model can be configured to discard
outdated user interactions in various ways, which include the sliding-window model
as a special case.
In short, TDN is a general streaming model to address the data recency issue.

\bullethdr{Efficient streaming algorithms.}
We address the other concerns by designing three streaming algorithms, i.e., {\sc
SieveADN}, {\sc BasicReduction}, and {\sc HistApprox}, all based on the TDN
model.
{\sc SieveADN} can identify influential nodes over a special kind of TDNs.
{\sc BasicReduction} leverages {\sc SieveADN} as a basic building block to
identify influential nodes over general TDNs.
{\sc HistApprox} significantly improves the efficiency of {\sc BasicReduction}.
Our streaming algorithms are inspired by the streaming submodular optimization
(SSO) techniques~\cite{Badanidiyuru2014a,Epasto2017}.
Current SSO techniques can only handle insertion-only~\cite{Badanidiyuru2014a} and
sliding-window~\cite{Epasto2017} streams.
To the best of our knowledge, the work in this paper is the first to handle more
general time-decaying streams.
More importantly, we theoretically show that our approach can find near-optimal
solutions with both time and space efficiency.

\header{Contributions.}
In summary, our contributions are as follows:
\begin{itemize}
\item We propose a general TDN model to model user interaction streaming data with
the ability to discard outdated user interactions smoothly
(\autoref{sec:preliminaries}).
\item We design three streaming algorithms based on the TDN model, namely {\sc
SieveADN}, {\sc BasicReduction}, and {\sc HistApprox}.
Our algorithms are applicable to time-decaying streams and achieve a constant
$(1/2-\epsilon)$ approximation guarantee (\autoref{sec:basicreduction}
and~\autoref{sec:histapprox}).
\item We conduct extensive experiments on various real interaction datasets.
The results demonstrate that our approach outputs near-optimal solutions with
speed at least $5$ to $15$ times faster than baseline methods.
(\autoref{sec:experiments}).
\end{itemize}

\section{Preliminaries and Problem Formulation}
\label{sec:preliminaries}

Notice that interactions are not necessarily occurred between two users but could
be between any two entities, or nodes in networks.
In this section, we first formally define the general node interaction data.
Then we propose a time-decaying dynamic interaction network model.
Finally, we formulate the problem of tracking influential nodes.

\subsection{Node Interactions}
\label{ss:interactions}

\begin{definition}[Interaction] \label{def:interaction} An interaction between two
nodes in a network is a triplet $\langle u,v,\tau\rangle$ representing that node
$u$ exerts an influence on node $v$ at time $\tau$.
\end{definition}

For example, user $v$ re-tweets/re-shares user $u$'s tweet/post at time $\tau$ on
Twitter/Facebook, user $v$ adopted a product recommended by user $u$ at time
$\tau$, etc.
In these scenarios, $u$ influenced $v$.
An interaction $\langle u,v,\tau\rangle$ is a direct evidence indicating that $u$
influences $v$.
If we observe many such evidences, then we say that $u$ has strong influence on
$v$.

In some scenarios, we may not directly observe the interaction between two nodes,
but if they do have an influence relationship, we are still able to convert these
scenarios to the scenario in Definition~\ref{def:interaction}, e.g., by one-mode
projection.

\begin{example}[One-mode Projection]
User $u$ bought a T-shirt recently.
His friend $v$ also bought a same T-shirt two days later at time $\tau$.
Then, it is very likely that $u$ influenced $v$.
We still denote this interaction by $\langle u,v,\tau\rangle$.
\end{example}

When interactions are continuously generated and aggregated, they form an
interaction stream.

\begin{definition}[Interaction Stream]\label{def:stream}
An interaction stream is an infinite set of interactions generated in discrete
time, denoted by $\CS\triangleq \{\langle u,v,\tau \rangle\colon u,v$ are two
distinct nodes, $\tau=1,2,\ldots\}$.
\end{definition}

For ease of presentation, we use discrete time in this work, and we allow a batch
of node interactions arriving at the same time.
Interaction stream will be the input of our algorithms.

As we discussed previously, older interactions are less significant than more
recent ones in evaluating current influence.
Next, we propose a time-decaying mechanism to satisfy this recency requirement
desired by the streaming approach.

\subsection{Time-Decaying Dynamic Interaction Network Model}
\label{sec:TDN_model}

We propose a simple and general dynamic network model to model an interaction
stream.
The model leverages a {\em time-decaying mechanism} to smoothly discard
outdated interactions.
We refer to our model as the {\em \textbf{t}ime-decaying \textbf{d}ynamic
interaction \textbf{n}etwork} (TDN) model.

Formally, a TDN at time $t$ is simply a directed network denoted by $G_t\triangleq
(V_t,E_t)$, where $V_t$ is a set of nodes and $E_t$ is a set of edges {\em
survived} by time $t$.
Each edge $(u, v, \tau)\in E_t$ is directed and timestamped representing an
interaction.
We assume there is no self-loop edge (i.e., a user cannot influence himself) but
allow multiple edges between two nodes (e.g., a user influences another user
multiple times at different time).

TDN model leverages a {\em time-decaying mechanism} to handle continuous node/edge
additions and deletions while evolving.
The time-decaying mechanism works as follows: an edge is added to the TDN at its
creation time; the edge survives in the network for a while then expires; when the
edge expires, the edge is removed from the network; if edges attached to a node
all expire, the node is removed from the network.

Formally, for an edge $e=(u,v,\tau)$ arrived at time $\tau$, it is assigned a
lifetime $l_\tau(e)\in\{1,\ldots,L\}$ upper bounded by $L$.
The edge's lifetime decreases as time elapses, and at time $t\geq\tau$, its
lifetime decreases to $l_t(e)=l_\tau(e) - t + \tau$.
If $l_{t'}(e)=0$ at some time $t'>\tau$, edge $e$ is removed from the network.
This also implies that $e\in E_t$ iff $\tau\leq t < \tau + l_\tau(e)$.

Note that lifetime plays the same role as a {\em time-decaying weight}.
If an edge has a long lifetime at its creation time, the edge is considered to be
important and will survive in the network for a long time.
An example of such a TDN evolving from time $t$ to $t+1$ is given in
Fig.~\ref{fig:tdn}.

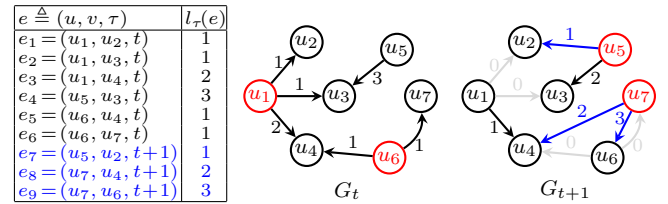
\begin{figure}[htp]
\centering
\begin{tikzpicture}[
every node/.style={font=\footnotesize,inner sep=0pt},
nd/.style={circle,thick,draw,minimum size=12pt},
arr/.style={-stealth,thick},
lb/.style={font=\scriptsize,inner sep=2pt, midway,above},
del/.style={gray!30},
]

\node[inner sep=0,font=\scriptsize] at (0,-.1) {
\renewcommand{\arraystretch}{.8}
\begin{tabular}{|@{\,}l@{\,}|@{\,}c@{}|}
\hline
$e\triangleq (u,v,\tau)$ & $l_\tau\!(e)$ \\
\hline
$e_1\!=\!(u_1,u_2,t)$ & $1$ \\
$e_2\!=\!(u_1,u_3,t)$ & $1$ \\
$e_3\!=\!(u_1,u_4,t)$ & $2$ \\
$e_4\!=\!(u_5,u_3,t)$ & $3$ \\
$e_5\!=\!(u_6,u_4,t)$ & $1$ \\
$e_6\!=\!(u_6,u_7,t)$ & $1$ \\
\textcolor{blue}{$e_7\!=\!(u_5,u_2,t\!+\!1)$} & \textcolor{blue}{$1$} \\
\textcolor{blue}{$e_8\!=\!(u_7,u_4,t\!+\!1)$} & \textcolor{blue}{$2$} \\
\textcolor{blue}{$e_9\!=\!(u_7,u_6,t\!+\!1)$} & \textcolor{blue}{$3$} \\
\hline
\end{tabular}
};

\begin{scope}[xshift=1.9cm]
\node[nd,red] (u1) at (0,0) {$u_1$};
\node[nd,above right=.4 and .3 of u1] (u2) {$u_2$};
\node[nd,right=.6 of u1] (u3) {$u_3$};
\node[nd,below right=.4 and .3 of u1] (u4) {$u_4$};
\node[nd,above right=.3 and 1.5 of u1] (u5) {$u_5$};
\node[nd,red,below right=.5 and 1.4 of u1] (u6) {$u_6$};
\node[nd,above right=.5 and .1 of u6] (u7) {$u_7$};

\draw[arr] (u1) -- (u2) node[lb,pos=.3] {$1$};
\draw[arr] (u1) -- (u3) node[lb] {$1$};
\draw[arr] (u1) -- (u4) node[lb,pos=.2,below]{$2$};
\draw[arr] (u5) -- (u3) node[lb,pos=.2,below] {$3$};
\draw[arr] (u6) -- (u4) node[lb,pos=.4] {$1$};
\draw[arr] (u6) edge[bend right] node[lb,pos=.6,below=1pt] {$1$} (u7);

\node[below right=1.1 and 1 of u1,anchor=center] {$G_t$};
\end{scope}

\begin{scope}[xshift=4.8cm]
\node[nd] (u1) at (0,0) {$u_1$};
\node[nd,above right=.4 and .3 of u1] (u2) {$u_2$};
\node[nd,right=.6 of u1] (u3) {$u_3$};
\node[nd,below right=.4 and .3 of u1] (u4) {$u_4$};
\node[nd,red,above right=.3 and 1.5 of u1] (u5) {$u_5$};
\node[nd,below right=.5 and 1.4 of u1] (u6) {$u_6$};
\node[nd,red,above right=.5 and .1 of u6] (u7) {$u_7$};

\draw[arr,del] (u1) -- (u2) node[lb,pos=.2] {$0$};
\draw[arr,del] (u1) -- (u3) node[lb] {$0$};
\draw[arr] (u1) -- (u4) node[lb,pos=.2,below]{$1$};
\draw[arr,blue] (u5) -- (u2) node[lb,midway] {$1$};
\draw[arr] (u5) -- (u3) node[lb,pos=.2,below] {$2$};
\draw[arr,del] (u6) -- (u4) node[lb,pos=.4] {$0$};
\draw[arr,del] (u6) edge[bend right] node[lb,pos=.6,below=1pt] {$0$} (u7);
\draw[arr,blue] (u7) -- (u4) node[lb] {$2$};
\draw[arr,blue] (u7) -- (u6) node[lb,above=1pt,pos=.7] {$3$};

\node[below right=1.1 and 1 of u1,anchor=center] {$G_{t+1}$};
\end{scope}

\end{tikzpicture}

\caption{A TDN example.
Label on each edge denotes the edge's current lifetime.
Influential nodes ($k=2$) also evolve from time $t$ to $t+1$.}
\label{fig:tdn}
\end{figure}

We find that such a simple time-decaying mechanism makes TDN model highly
configurable by choosing different lifetime assignment methods.
Here, we consider a few special TDNs.

\begin{example}\label{eg:adn}
A network only grows with no node/edge deletions.
It is equivalent to saying that every edge in a TDN has an infinite lifetime,
i.e., $l_\tau(e)=\infty$.
We refer to such networks as \textbf{a}ddition-only \textbf{d}ynamic interaction
\textbf{n}etworks (ADNs).
\end{example}

\begin{example}
A network consists of edges in the most recent $W$ time steps.
It is equivalent to saying that every edge in a TDN has same lifetime $W$, i.e.,
$l_\tau(e)=W$.
We refer to such networks as sliding-window dynamic interaction networks.
\end{example}

\begin{example}\label{eg:pdn}
At each time step, we first delete each existing edge with probability $p$, then
add the new arrival edges.
To understand why this kind of dynamics can also be modeled using TDN, we can
think of deleting an edge as a Bernoulli trial with success probability $p$.
Therefore, an edge surviving for $l$ time steps in the graph has probability
$(1-p)^{l-1}p$, aka the geometric distribution.
Hence, above dynamic process is equivalent to saying that each edge in a TDN has
a lifetime independently sampled from a geometric distribution, i.e.,
$\mathit{Pr}(l_\tau(e)=l)=(1-p)^{l-1}p$.
We refer to such networks as probabilistic time-decaying dynamic interaction
networks.
\end{example}

The time-decaying mechanism not only helps discard outdated edges in $G_t$ but
also reduce the storage of maintaining $G_t$ in computer main memory.
For example, if edge lifetimes follow a geometric distribution
$\mathit{Pr}(l_\tau(e)=l)=(1-p)^{l-1}p$ (with $L=\infty$), and we assume at most
$m$ new edges arrive at each time, then the memory needed to store $G_t$ at any
time $t$ is upper bounded by $O(\sum_{i=0}^\infty m(1-p)^i)=O(m/p)$.

In the following discussion, in order to keep our methodology general, we will not
assume a particular form of $l_\tau(e)$ but {\bf assume that $l_\tau(e)$ is given
as a user-chosen input} to our framework so that $G_t$ can be stored in computer
main memory.
We also introduce some shortcut notations for later use.
Given an edge $e\in E_t$, we will use $u_e, v_e, \tau_e$ and $l_e$ to denote the
edge's attributes, and lifetime at time $t$, respectively.

\subsection{Problem Formulation}

Up till now, we have modeled an interaction stream as a TDN.
We now define the {\em influence spread} on TDNs.

\begin{definition}[Influence Spread on TDNs]\label{def:influence}
At time $t$, the influence spread of a set of nodes $S\subseteq V_t$ is the
number of distinct nodes that are reachable from $S$ in $G_t$, i.e.,
\[
f_t(S)\triangleq
|\{v\in V_t\colon v\text{ is reachable from nodes $S$ in }G_t\}|.
\]
\end{definition}

Remember that each edge $e\in E_t$ represents that node $u_e$ can influence node
$v_e$.
Thus, Definition~\ref{def:influence} actually states that, if nodes in $S$ can
influence many nodes in $G_t$ in a cascading manner, $S$ has large influence in
$G_t$.
It is not hard to see that $f_t$ satisfies following property~\cite{Kempe2003}.

\begin{theorem}\label{thm:submodular}
$f_t\colon 2^{V_t}\mapsto\mathbb{R}_{\geq 0}$ defined in
Definition~\ref{def:influence} is a normalized monotone submodular set
function\footnote{A set function $f\colon 2^V\mapsto \mathbb{R}_{\geq 0}$ is
monotone if $f(S)\leq f(T)$ holds for all $S\subseteq T\subseteq V$; $f$ is
submodular if $f(S\cup \{s\}) - f(S)\geq f(T\cup\{s\}) - f(T)$ holds for all
$S\subseteq T\subseteq V$ and $s\in V\backslash T$; $f$ is normalized if
$f(\emptyset)=0$ (\cite{Nemhauser1978}).}.
\end{theorem}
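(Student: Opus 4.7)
The plan is to express $f_t$ as a coverage function on $V_t$ and then invoke the standard fact that coverage functions are normalized, monotone, and submodular. For each node $u \in V_t$, let $R(u) \subseteq V_t$ denote the set of nodes reachable from $u$ in the directed graph $G_t$ (including $u$ itself via the length-$0$ path). The key observation is that a node $v$ is reachable from a set $S$ if and only if it is reachable from at least one $u \in S$, so
\[
f_t(S) \;=\; \Bigl|\bigcup_{u \in S} R(u)\Bigr|.
\]
This identity turns $f_t$ into a weighted coverage function with uniform unit weights on the ground set $V_t$.

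From this representation the three required properties follow in order. First, normalization is immediate: $f_t(\emptyset) = |\bigcup_{u \in \emptyset} R(u)| = |\emptyset| = 0$. Second, monotonicity is equally immediate: if $S \subseteq T$, then $\bigcup_{u \in S} R(u) \subseteq \bigcup_{u \in T} R(u)$, so $f_t(S) \le f_t(T)$.

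Third, for submodularity, fix $S \subseteq T \subseteq V_t$ and $s \in V_t \setminus T$. Using the coverage representation,
\[
f_t(S \cup \{s\}) - f_t(S) \;=\; \Bigl| R(s) \setminus \bigcup_{u \in S} R(u) \Bigr|,
\]
and analogously for $T$. Since $\bigcup_{u \in S} R(u) \subseteq \bigcup_{u \in T} R(u)$, the set being measured on the $T$-side is a subset of the set being measured on the $S$-side, which gives
\[
f_t(S \cup \{s\}) - f_t(S) \;\ge\; f_t(T \cup \{s\}) - f_t(T).
\]

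I do not expect a real obstacle here: the statement is a standard consequence of the fact that any set-union coverage function on a finite universe is submodular (a fact contained in the Nemhauser--Wolsey--Fisher framework already cited in the footnote), and the only substantive step is the decomposition $f_t(S) = |\bigcup_{u \in S} R(u)|$, which holds because reachability from a set is literally the union of reachabilities from its elements. If anything merits a brief remark, it is just clarifying the convention that $v \in R(v)$ so that singletons have $f_t(\{v\}) \ge 1$ and the empty union is $\emptyset$, ensuring the normalization clause.
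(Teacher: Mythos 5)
Your proof is correct and is the standard argument for this fact; the paper itself does not supply a proof (it merely cites Kempe et al.\ and remarks the claim is ``not hard to see''), so there is nothing to contrast against. The decomposition $f_t(S)=\bigl|\bigcup_{u\in S}R(u)\bigr|$ is exactly the right move: since the influence spread here is deterministic reachability (no random live-edge sampling as in the general IC model of the cited reference), $f_t$ is literally a coverage function, and normalization, monotonicity, and submodularity then follow from the three one-line observations you give. Your closing remark about the convention $v\in R(v)$ is also the right thing to flag, since it is what guarantees $f_t(\{v\})\geq 1$ for $v\in V_t$ while still giving $f_t(\emptyset)=0$ via the empty union.
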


Of course, readers can define more complicated influence spread in TDNs.
As long as Theorem~\ref{thm:submodular} holds, our developed framework, which will
be elaborated on in the remainder of this paper, is still applicable.
We are now ready to formulate the influential nodes tracking problem.

\begin{problem}[Tracking Influential Nodes in TDNs]\label{problem:tracking}
Let $G_t=(V_t,E_t)$ denote an evolving TDN at time $t$.
Let $k>0$ denote a given budget.
At any time $t$, we want to find a subset of nodes $S_t^*\subseteq V_t$ with
cardinality at most $k$ such that these nodes have the maximum influence spread
on $G_t$, i.e., $f_t(S_t^*) = \max_{S\subseteq V_t\wedge |S|\leq k}f_t(S)$.
\end{problem}

\remarks.

\textbullet\,Figure~\ref{fig:tdn} gives an example to show the time-varying nature
of influential nodes in TDNs.

\textbullet\,Problem~\ref{problem:tracking} is NP-hard in general.
When $G_t$ is large in scale, it is only practical to find approximate solutions.
We say a solution $S_t$ is an $\alpha$-approximate solution if $f_t(S_t)\geq
\alpha f_t(S_t^*)$ where $0<\alpha<1$.

\textbullet\,A straightforward way to solve Problem~\ref{problem:tracking} is to
re-run existing algorithms designed for static networks, e.g., the greedy
algorithm~\cite{Kempe2003}, at every time the network is updated.
This approach gives a $(1-1/e)$-approximate solution with time complexity
$O(k|V_t|\gamma)$ where $\gamma$ is the time complexity of evaluating $f_t$ on
$G_t$.
We hope to find faster methods than this baseline method with comparable
approximation guarantees.

\section{A Basic Approach}
\label{sec:basicreduction}

In this section, we elaborate a basic approach on solving
Problem~\ref{problem:tracking}.
To motivate this basic approach, we first consider solving a special problem:
tracking influential nodes over addition-only dynamic interaction networks (ADNs,
refer to Example~\ref{eg:adn}).
We find that this special problem is closely related to a well-studied
insertion-only streaming submodular optimization problem~\cite{Badanidiyuru2014a}.
It thus inspires us to design {\sc SieveADN} to solve this special problem
efficiently.
Using {\sc SieveADN} as a basic building block, we show how to design {\sc
BasicReduction} to solve Problem~\ref{problem:tracking} on general TDNs.

\subsection{\textsc{SieveADN}: Tracking Influential Nodes over ADNs}

In an ADN, each edge has an infinite lifetime.
Arriving new edges simply accumulate in $G_t$.
Therefore, the influence spread of a fixed set of nodes cannot decrease, i.e.,
$f_t(S)\geq f_{t'}(S)$ holds for all $S\subseteq V_{t'}$ whenever $t\geq t'$.

We find that identifying influential nodes on ADNs is related to a well-studied
{\em insertion-only streaming submodular optimization (SSO)
problem}~\cite{Badanidiyuru2014a}.
In the follows, we first briefly recap the insertion-only SSO problem, and
describe a streaming algorithm, called {\sc SieveStreaming}, that solves the
insertion-only SSO problem efficiently.

The insertion-only SSO problem considers maximizing a monotone submodular set
function $f$ over a set of elements $U$ with a cardinality constraint, i.e.,
choosing at most $k$ elements from $U$ to maximize $f$.
Each element in the set is allowed to be accessed only once in a streaming fashion.
The goal is to find algorithms that use sublinear memory and time.
One such algorithm is {\sc SieveStreaming}~\cite{Badanidiyuru2014a}.

\textsc{SieveStreaming} lazily maintains a set of thresholds
$\Theta\triangleq\{\frac{(1+\epsilon)^i}{2k}\colon (1+\epsilon)^i\in [\Delta,
2k\Delta], i\in\mathbb{Z}\}$, where $\Delta\triangleq\max_uf(\{u\})$ is the
maximum value of a singleton element seeing in the stream so far.
Each threshold $\theta\in\Theta$ is associated with a set $S_\theta$ which is
initialized to be empty.
For each arriving element $v$, its marginal gain w.r.t.~each set $S_\theta$ is
calculated, i.e., $\delta_{S_\theta}(v)\triangleq
f(S_\theta\cup\{v\})-f(S_\theta)$.
If $\delta_{S_\theta}(v)\geq\theta$ and $|S_\theta|<k$, $v$ is saved into
$S_\theta$; otherwise $v$ is not saved.
At query time, \textsc{SieveStreaming} returns a set $S_{\theta^\star}$ that has
the maximum value, i.e.,
$f(S_{\theta^\star})=\max\{f(S_\theta)\colon\theta\in\Theta\}$.
\textsc{SieveStreaming} is proven to achieve an $(1/2-\epsilon)$ approximation
guarantee.

We leverage {\sc SieveStreaming} to solve our special problem of tracking
influential nodes over ADNs as follows.
Let $\bar{E}_t$ denote a set of edges arrived at time $t$.
Let $\bar{V}_t$ denote a set of nodes whose influence spread changes due to adding
new edges $\bar{E}_t$ in $G_t$.
We feed each node in $\bar{V}_t$ to {\sc SieveStreaming} whose output is our
solution.
We refer to this algorithm as {\sc SieveADN}, as illustrated in Fig.~\ref{fig:ADN}
and described in Alg.~\ref{alg:addition-only}.

\begin{figure}[htp]
\centering

\begin{tikzpicture}[
every node/.style={inner sep=0pt,text depth=.25ex,font=\footnotesize},
nd/.style={circle, draw, thick, minimum size=5pt},
txt/.style={rectangle,inner sep=2pt,align=center},
box/.style={rectangle, draw, thick, blue, inner sep=2pt},
dbox/.style={box, inner sep=2pt, rounded corners=1mm, dashed},
th/.style={draw,rectangle,minimum width=2ex,minimum height=2ex},
item/.style={rectangle,draw=black,fill=gray,minimum width=2ex, minimum height=1.6pt},
arr/.style={thick, blue, -stealth},
]

\draw[step=0.4] (0,0) grid (4.8,.4);
\coordinate (O) at (.2,.2);
\foreach \i in {0,...,8}\node[nd, right=\i*0.4-.1 of O] (u\i) at (O) {};
\foreach \i in {9,...,11}\node[nd, blue, fill, right=\i*0.4-.1 of O] (u\i) {};
\node[dbox, fit={(u9) (u11)}] (b1) {};
\node[txt,blue,inner sep=0, right=2pt of b1] (t_t) {$\bar{V}_t$};

\node[txt, left=.2 of u0] (t_ins) {node stream:};

\node[box, txt, above=.25 of b1] (b2) {new edges $\bar{E}_t$};
\draw[arr] (b2) -- (b1);

\coordinate[below=.6 of u6] (A1);
\node[th] (th1) at (A1) {};
\draw (A1) ++(-.8ex, -.3ex) -- ++(.8ex, 0) -- ++(0, .6ex) -- ++(.8ex, 0);
\node[th,below=.2 of th1] (C1) {};
\foreach \i in {0,...,3} \node[item, above =\i*0.05 of C1.south] {};
\draw[-stealth] (th1) -- (C1);
\node[txt,left=.2 of th1] {thresholds $\Theta$:};
\node[txt,left=.2 of C1] {candidates $\{S_\theta\}_{\theta\in\Theta}$:};

\coordinate[right=.8 of A1] (A2);
\node[th] (th2) at (A2) {};
\draw (A2) ++(-.8ex, -.5ex) -- ++(.8ex, 0) -- ++(0, 1ex) -- ++(.8ex, 0);
\node[th,below=.2 of th2] (C2) {};
\foreach \i in {0,...,2} \node[item, above =\i*0.05 of C2.south] {};
\draw[-stealth] (th2) -- (C2);

\coordinate[right=.8 of A2] (Ad);
\node[txt] at (Ad) {$\cdots$};
\node[txt,below=.4 of Ad] {$\cdots$};

\coordinate[right=.8 of Ad] (A3);
\node[th] (th3) at (A3) {};
\draw (A3) ++(-.8ex, -.7ex) -- ++(.8ex, 0) -- ++(0, 1.4ex) -- ++(.8ex, 0);
\node[th,below=.2 of th3] (C3) {};
\foreach \i in {0} \node[item, above =\i*0.05 of C3.south] {};
\draw[-stealth] (th3) -- (C3);

\draw[-stealth] (u9.south) -- (th1.north east);
\draw[-stealth] (u9.south) -- (th2.north east);
\draw[-stealth] (u9.south) -- (th3.north west);

\node[box,black,below=1.5 of u9] (Am) {max};
\draw[-stealth] (C1.south east) -- (Am);
\draw[-stealth] (C2.south) -- (Am);
\draw[-stealth] (C3.south west) -- (Am);

\node[box,txt,below=.25 of Am] (rst) {influential nodes $S_t$};
\draw[arr] (Am) -- (rst);

\end{tikzpicture}

\caption{Illustration of {\sc SieveADN}}
\label{fig:ADN}
\end{figure}
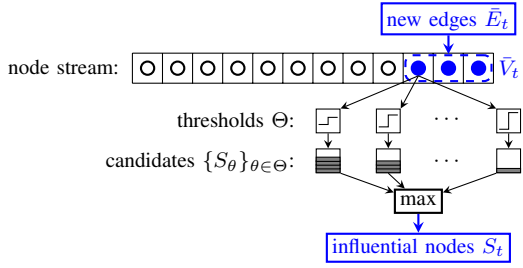

\begin{algorithm}[htp]
\KwIn{A sequence of edges arriving over time; $k$ and $\epsilon$}
\KwOut{A set of influential nodes at time $t$}

$\Delta\gets 0, \Theta\gets\emptyset$, and $\{S_\theta\}_{\theta\in\Theta}$
denotes a family of sets\;
\For{arriving edges $\bar{E}_t$ at time $t=1,2,\ldots$}{
$\bar{V}_t\gets$ a set of nodes whose influence spread changes\;
\tcp{Lines~\ref{ln:lazy1}-\ref{ln:lazy2} lazily maintain a set of thresholds.}
$\Delta\gets\max\{\Delta, \max_{v\in\bar{V}_t} f_t(\{v\})\}$\;\label{ln:lazy1}
$\Theta'\gets\{\frac{(1+\epsilon)^i}{2k}\colon (1+\epsilon)^i\in
[\Delta, 2k\Delta], i\in\mathbb{Z}\}$\;
Delete $S_\theta$ for $\theta\in\Theta\backslash\Theta'$ and let
$S_\theta\gets\emptyset$ for $\theta\in\Theta'\backslash\Theta$\;
$\Theta\gets\Theta'$\;\label{ln:lazy2}
\tcp{Lines~\ref{ln:for_node}-\ref{ln:add} filter nodes by thresholds.}
\ForEach{node $v\in \bar{V}_t$}{\label{ln:for_node}
\ForEach{threshold $\theta\in\Theta$}{\label{ln:th}
\If{$|S_\theta|<k$ and $\delta_{S_\theta}(v)\geq\theta$}{
$S_\theta\gets S_\theta\cup\{v\}$\; \label{ln:add}
}
}
}
\tcp{Return a solution having the maximum value.}
$S_t\gets S_{\theta^\star}$ where $\theta^\star=\argmax_{\theta\in\Theta}f_t(S_\theta)$\;
}
\caption{\textsc{SieveADN}}
\label{alg:addition-only}
\end{algorithm}

We emphasize that our special problem has two major differences with the
insertion-only SSO problem.
In the insertion-only SSO problem, each element appears only once in the stream,
and the objective $f$ is invariant to time.
While in our problem, we allow same nodes to appear multiple times in the node
stream, and our objective $f_t$ is time-varying.
Therefore, we still need to strictly prove that {\sc SieveADN} outputs solutions
with constant approximation guarantees.
Thanks to the property of ADNs: when new edges are added, the influence spread of
a fixed set of nodes cannot decrease.
We can leverage this property and demonstrate that {\sc SieveADN} indeed preserves
the approximation guarantee of {\sc SieveStreaming}, and achieves an approximation
factor of $(1/2-\epsilon)$.

\begin{theorem}\label{thm:ADN}
{\sc SieveADN} achieves an $(1/2-\epsilon)$ approximation guarantee.
\end{theorem}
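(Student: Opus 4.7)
The plan is to adapt the $(1/2-\epsilon)$-approximation proof of \textsc{SieveStreaming}~\cite{Badanidiyuru2014a} to our setting, in which (i) the objective $f_t$ depends on $t$ and (ii) the same node may be examined many times during the stream. The extra ingredient required is the ADN time-monotonicity property $f_t(S)\ge f_{t'}(S)$ for $S\subseteq V_{t'}$ and $t\ge t'$, which lets me transfer guarantees obtained at earlier consideration times to the query time.

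I would fix a query time $t$ and an optimal solution $S_t^\star$ with $\OPT=f_t(S_t^\star)$. The first step is to show that the $\Delta$ maintained by the algorithm equals $\max_{v\in V_t}f_t(\{v\})$, using ADN monotonicity plus the fact that each $v\in V_t$ enters some $\bar V_s$ at its last reachability change; then $\Delta\le\OPT\le k\Delta$ by submodularity, so $\Theta$ contains a threshold $\theta^\star$ with $\tfrac{(1-\epsilon)\OPT}{2k}\le\theta^\star\le\tfrac{\OPT}{2k}$. I would then split on whether $|S_{\theta^\star}|=k$. If equality holds, I would order the $k$ insertions as $t_1<\cdots<t_k$ with prefixes $S_0,\ldots,S_k$, chain the insertion guarantee $f_{t_i}(S_i)\ge f_{t_i}(S_{i-1})+\theta^\star$ with time-monotonicity $f_{t_i}(S_{i-1})\ge f_{t_{i-1}}(S_{i-1})$, and induct to obtain $f_t(S_{\theta^\star})\ge k\theta^\star\ge(1-\epsilon)\OPT/2$. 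Otherwise, for each $v\in S_t^\star\setminus S_{\theta^\star}$ let $s_v\le t$ be its last consideration time; the key step is to establish $f_t(S_{\theta^\star}\cup\{v\})-f_t(S_{\theta^\star})<\theta^\star$, after which submodularity at $t$ gives $\OPT\le f_t(S_{\theta^\star})+k\theta^\star\le f_t(S_{\theta^\star})+\OPT/2$. Either case yields $f_t(S_{\theta^\star})\ge(1/2-\epsilon)\OPT$, and since the algorithm outputs the best candidate the theorem follows.

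The step I expect to be the main obstacle is exactly this marginal-gain bound for rejected optimum elements, because the marginal gain of a general monotone submodular function need not be monotone in time in either direction. I would close the gap by exploiting the concrete reachability structure of $f_t$: the rejection at $s_v$ reads $|R_{s_v}(v)\setminus R_{s_v}(S_{\theta^\star}(s_v^-))|<\theta^\star$; the last-consideration property forces $R_t(v)=R_{s_v}(v)$; and ADN monotonicity together with $S_{\theta^\star}(s_v^-)\subseteq S_{\theta^\star}$ gives $R_{s_v}(S_{\theta^\star}(s_v^-))\subseteq R_t(S_{\theta^\star})$. Combining these, the set $R_t(v)\setminus R_t(S_{\theta^\star})$ sits inside $R_{s_v}(v)\setminus R_{s_v}(S_{\theta^\star}(s_v^-))$, so the cardinality bound transfers directly to time $t$. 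This is the place where the argument has to lean on the combinatorial structure of reachability rather than on abstract monotonicity and submodularity alone.
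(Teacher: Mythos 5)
Your proof is correct and follows essentially the same route as the paper's: identify a threshold $\theta^\star\in[(1-\epsilon)\OPT_t/(2k),(1+\epsilon)\OPT_t/(2k)]$ (the paper uses a slightly looser bracket but the effect is identical), handle $|S_{\theta^\star}|=k$ by chaining per-insertion gains with ADN time-monotonicity, and handle $|S_{\theta^\star}|<k$ by bounding the marginal gain of each rejected optimal element and invoking submodularity at time $t$.

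There is one genuinely useful difference in how you close the harder case. The paper splits $S_t^*$ into elements that are in $\bar V_t$ and elements that are not, and for the latter repeatedly peels one time step off the marginal gain, asserting that ``adding edges cannot increase $y$'s marginal gain'' when $y\notin\bar V_s$; the fact that this peeling step is \emph{not} a consequence of abstract submodularity plus ADN monotonicity alone is left implicit. You go directly to each rejected element's last consideration time $s_v$ and derive the transfer $R_t(v)\setminus R_t(S_{\theta^\star})\subseteq R_{s_v}(v)\setminus R_{s_v}(S_{\theta^\star}(s_v^-))$ from the reachability structure, which makes the dependence on the concrete form of $f_t$ explicit and avoids the per-time-step iteration entirely. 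This is the place where a reader of the paper's proof could be misled into thinking the argument is purely abstract; your phrasing is the more robust one. One tiny wording fix: when ordering the $k$ insertion times you should allow ties ($t_1\le\cdots\le t_k$), since several nodes from the same $\bar V_s$ can be accepted at the same time step; your chain of inequalities still goes through with weak inequalities.
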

\begin{proof}
Please refer to our Technique Report~\cite{TR}.
\end{proof}

The intermediate results maintained by {\sc SieveADN} are sets
$\{S_\theta\}_{\theta\in\Theta}$.
The following theorem states the time and space complexity of updating and storing
these sets.

\begin{theorem}\label{thm:SieveADN-complexity}
Let $b$ denote the average size of set $\bar{V}_t$.
Then {\sc SieveADN} uses $O(b\gamma\epsilon^{-1}\log k)$ time to process each
batch of edges (where $\gamma$ is the time complexity of evaluating $f_t$) and
$O(k\epsilon^{-1}\log k)$ space to maintain intermediate results.
\end{theorem}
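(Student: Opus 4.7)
The plan is a direct accounting of the work done inside the for-loop of Alg.~\ref{alg:addition-only} and the space used by the maintained data structures. The core quantity to pin down first is $|\Theta|$, since both the time and space bounds scale with it.

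First I would bound $|\Theta|$. By construction, $\Theta$ consists of thresholds of the form $(1+\epsilon)^i/(2k)$ with $(1+\epsilon)^i \in [\Delta, 2k\Delta]$. Hence the number of admissible integers $i$ is at most $\log_{1+\epsilon}(2k\Delta) - \log_{1+\epsilon}(\Delta) + 1 = \log_{1+\epsilon}(2k) + 1$. Using $\log(1+\epsilon) = \Theta(\epsilon)$ for small $\epsilon$, this gives $|\Theta| = O(\epsilon^{-1}\log k)$. This is the step I expect to do carefully, but it is standard.

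Next, for the space bound, I observe that the only intermediate state is the family $\{S_\theta\}_{\theta \in \Theta}$, with $|S_\theta| \le k$ enforced on Line~\ref{ln:add}. Combining with the bound on $|\Theta|$ yields $O(k\epsilon^{-1}\log k)$ space, as claimed. The set $\Theta$ itself is subsumed by this bound.

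For the time bound per batch, I would break the work into three parts. (i) Updating $\Delta$ on Line~\ref{ln:lazy1} requires evaluating $f_t(\{v\})$ for each $v \in \bar{V}_t$, costing $O(b\gamma)$. (ii) Refreshing $\Theta$ on Lines~\ref{ln:lazy1}--\ref{ln:lazy2} costs $O(\epsilon^{-1}\log k)$ since only the symmetric difference with the previous $\Theta$ needs to be touched. (iii) The double loop on Lines~\ref{ln:for_node}--\ref{ln:add} iterates over $|\bar{V}_t|\cdot|\Theta|$ pairs, and each iteration computes one marginal gain $\delta_{S_\theta}(v) = f_t(S_\theta \cup \{v\}) - f_t(S_\theta)$ at cost $O(\gamma)$; the membership/update into $S_\theta$ is $O(1)$ amortized. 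Summing gives $O(b\gamma\epsilon^{-1}\log k)$, which dominates parts (i) and (ii).

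The main obstacle, if any, is bookkeeping rather than a mathematical subtlety: one must argue that $f_t(S_\theta)$ can be cached from the previous evaluation so the marginal-gain computation really is a single evaluation of $f_t$ rather than two, and that extracting $\theta^\star$ at query time does not inflate the complexity beyond a constant factor of $|\Theta|$ evaluations of $f_t$. Both are straightforward to verify once the loop structure is laid out as above.
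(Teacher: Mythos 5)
Your proposal is correct and follows essentially the same route as the paper's proof: bound $|\Theta|=O(\epsilon^{-1}\log k)$, multiply by $b$ nodes and $\gamma$ per marginal-gain evaluation for time, and multiply by $k$ per set for space. You add some bookkeeping the paper omits (caching $f_t(S_\theta)$, the cost of refreshing $\Theta$ and extracting $\theta^\star$), but these only confirm that the dominant term is the one the paper counts.
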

\begin{proof}
Please refer to our Technique Report~\cite{TR}.
\end{proof}

\remarks.

\textbullet\,$b$ is typically small and $b\ll |V_t|$ in practice.
Note that even if $b=|V_t|$, \textsc{SieveADN} still has lower time complexity
than greedy algorithm which has time complexity $O(k|V_t|\gamma)$.

\textbullet\,Lines~\ref{ln:for_node}-\ref{ln:add} in Alg.~\ref{alg:addition-only}
can be easily implemented using parallel computation to further reduce the running
time.

\subsection{\textsc{BasicReduction}: Using \textsc{SieveADN} as A
Basic Building Block}

{\sc SieveADN} can be used as a basic building block to design a basic method,
called {\sc BasicReduction}, to solve Problem~\ref{problem:tracking}.
In the follows, we first describe {\sc BasicReduction}, and then use an example to
help readers understand its correctness.

Recall that $\bar{E}_t$ is a set of edges arrived at time $t$.
We partition edges $\bar{E}_t$ into (at most) $L$ groups by their lifetimes.
Let $\bar{E}_l^{(t)}\subseteq\bar{E}_t$ denote the group having lifetime $l$,
i.e., $\bar{E}_l^{(t)}\triangleq\{e\colon e\in\bar{E}_t\wedge l_e=l\}$, and
$\bar{E}_t=\cup_{l=1}^L\bar{E}_l^{(t)}$.

{\sc BasicReduction} maintains $L$ {\sc SieveADN} instances at each time $t$,
denoted by $\{\CA_i^{(t)}\}_{i=1}^L$.
At each time $t$, $\CA_i^{(t)}$ only processes edges
$\cup_{l=i}^L\bar{E}_l^{(t)}$, i.e., edges with lifetime no less than $i$.
The relationship between input edges and {\sc SieveADN} instances is
illustrated in Fig.~\ref{f:reduction}.

\begin{figure}[htp]
\centering
\subfloat[Processing new edges $\bar{E}_t$ at time $t$\label{f:reduction}]{%
\newcommand{\DrawBrac}[2][]{%
\draw [thick, #1] (#2) ++(-1ex, 2ex) -- ++(0ex, -2ex) -- ++(2ex, 0ex)
-- ++(0ex, 2ex)}

\begin{tikzpicture}[
>=stealth,
every node/.style={thick, inner sep=0pt, font=\scriptsize},
snd/.style={circle, draw, minimum size=15pt},
item/.style={rectangle,fill=black, minimum width=1ex, minimum height=1.2pt},
rarr/.style={<-, thick},
txt/.style={inner sep=0pt,align=center,font=\footnotesize},
]

\coordinate (c1) at(0,0) {};
\foreach \i in {1,...,4}{\node[snd, right=\i*1.2 of c1] (s\i) {$\CA_\i^{(t)}$};}
\node[right=.25 of s4] (sd) {$\cdots$};
\node[snd, right=.25 of sd] (sL) {$\CA_L^{(t)}$};
\node[txt,left=1ex of s1] {{\sc SieveADN}\\instances};

\node[txt, inner sep=2pt, below =.3 of s1] (temp) {$S_t$};
\draw[rarr] (temp) -- (s1);
\foreach \i [count=\x] in {2,3,4,L}{\draw[->, thick] (s\i) -- ++(0, -6mm);}

\foreach \i in {1,2,3,4,L} \coordinate[above = 1.25 of s\i] (p\i);
\node[right of=p4] (pd) {$\cdots$};

\foreach \i in {1,2,3,4,L} \draw[rarr]        (s1) -- (p\i);
\foreach \i in {2,3,4,L}   \draw[blue, rarr]  (s2) -- (p\i);
\foreach \i in {3,4,L}     \draw[red, rarr]   (s3) -- (p\i);
\foreach \i in {4,L}       \draw[green, rarr] (s4) -- (p\i);
\draw[purple, rarr] (sL) -- (pL);

\node[txt,above left=.15 and .6 of p1] {new edges\\at time $t$};
\foreach \p/\r in {1/4, 2/3, 3/2, 4/1, L/1}{
\DrawBrac{p\p};
\foreach \i in {1,...,\r}{\node[item, above =(\i-1)*0.1+0.05 of p\p] {};}
}

\coordinate[above=.7 of p1] (pbase);
\foreach \i in {1,2,3,4,L}{\node[txt] at (pbase -| p\i) {$\bar{E}_\i^{(t)}$};}

\end{tikzpicture}

}\\
\subfloat[Shifting {\sc SieveADN} instances from time $t$ to time $t+1$
\label{f:shifting}]{\newcommand{\DrawCross}[2][]{%
\draw[red,thick,#1] (#2)++(-10pt, -10pt) -- ++(20pt, 20pt)
(#2)++(-10pt, 10pt)  -- ++(20pt, -20pt)}

\begin{tikzpicture}[
every node/.style={thick, inner sep=0pt, minimum size=19pt, font=\tiny},
snd/.style={circle, draw},
dots/.style={circle, font=\footnotesize},
item/.style={rectangle,draw,fill=black, minimum width=1ex, minimum height=2pt},
arr/.style={-latex, thick, dashed},
txt_over_nd/.style={rectangle,fill=white,opacity=.1,text opacity=1, minimum size=0pt},
txt/.style={minimum size=0pt,inner sep=0pt,font=\footnotesize,anchor=west},
]

\coordinate (c1) at(0,1.4) {};
\node[snd, right=1.25 of c1, gray, dashed] (o1) {$\CA_1^{(t)}$};
\foreach \i in {2,...,3}{
\node[snd, right = \i*1.2 of c1] (o\i) {$\CA_\i^{(t)}$};}
\node[dots,right=4*1.2 of c1] (od) {$\cdots$};
\node[snd, right=5*1.2 of c1] (oL1) {};
\node[txt_over_nd] at (oL1) {$\CA_{L\!-\!1}^{(t)}$};
\node[snd, right=6*1.2 of c1] (oL) {$\CA_L^{(t)}$};

\coordinate (c2) at(0,0) {};
\foreach \i in {1,...,3}{
\node[snd, right = \i*1.2 of c2] (n\i) {};
\node[txt_over_nd] at (n\i) {$\CA_\i^{(\!t\!+\!1\!)}$};
}
\node[dots,right=4*1.2 of c2] (nd) {$\cdots$};
\node[snd, right=5*1.2 of c2] (nL1) {};
\node[txt_over_nd] at (nL1) {$\CA_{L\!-\!1}^{(\!t\!+\!1\!)}$};
\node[snd, right=6*1.2 of c2, blue] (nL) {};
\node[txt_over_nd, blue,fill=white] at (nL) {$\CA_L^{(\!t\!+\!1\!)}$};

\draw[arr] (o2) -- (n1);
\draw[arr] (o3) -- (n2);
\draw[arr] (od) -- (n3);
\draw[arr] (oL1) -- (nd);
\draw[arr] (oL) -- (nL1);

\DrawCross{o1};

\node[txt,right=.2 of c1] {$t$};
\node[txt,right=.2 of c2] {$t\!+\!1$};
\draw[-stealth,thick] (c1) -- (c2);
\end{tikzpicture}

}
\caption{Illustration of {\sc BasicReduction}.}
\label{fig:reduction}
\end{figure}

These {\sc SieveADN} instances are maintained in a way that existing instances
gradually expire and are terminated as edges processed in them expire; meanwhile,
new instances are created as new edges arrive.
Specifically, after processing edges $\bar{E}_t$, we do following operations at
the beginning of time step $t+1$ to prepare for processing edges $\bar{E}_{t+1}$:
(1) $\CA_1^{(t)}$ expires and is terminated;
(2) $\CA_i^{(t)}$ is renamed to $\CA_{i-1}^{(t+1)}$, for $i=2,\ldots,L$;
(3) a new {\sc SieveADN} instance
$\CA_L^{(t+1)}$ is created and appended at the tail.
In short, {\sc SieveADN} instances from index $2$ to $L$ at time $t$ are
``shifted'' one unit to the left, and a new {\sc SieveADN} instance is appended at
the tail, as illustrated in Fig.~\ref{f:shifting}.
Then, {\sc BasicReduction} processes $\bar{E}_{t+1}$ similarly to processing
$\bar{E}_t$ previously.
It is easier to understand its execution using the following example.

\begin{example}
{\sc BasicReduction} processes the TDN in Fig.~\ref{fig:tdn} as follows.
$L=3$ {\sc SieveADN} instances are maintained.
$\CA_1^{(t)},\CA_2^{(t)}$ and $\CA_3^{(t)}$ process edges arrived at time $t$
according to their lifetimes (see the table below).
At time $t+1$, $\CA_1^{(t)}$ expires; $\CA_2^{(t)}$ is renamed to
$\CA_1^{(t+1)}$ and continues processing edges $\{e_7,e_8,e_9\}$; $\CA_3^{(t)}$
is renamed to $\CA_2^{(t+1)}$ and continues processing edges $\{e_8,e_9\}$.
$\CA_3^{(t+1)}$ is newly created, and processes edge $e_9$.
\begin{center}
\scriptsize
\begin{tabular}{|l||l|}
\hline
\ding{115}\quad $\CA_1^{(t)}\colon\{e_1,e_2,e_3,e_4,e_5,e_6\}$
& \ding{108}\quad $\CA_1^{(t+1)}\colon\{e_3,e_4,{\color{blue}e_7,e_8,e_9}\}$ \\
\hline
\ding{108}\quad $\CA_2^{(t)}\colon\{e_3,e_4\}$
& \ding{110}\quad $\CA_2^{(t+1)}\colon\{e_4,{\color{blue} e_8,e_9}\}$        \\
\hline
\ding{110}\quad $\CA_3^{(t)}\colon\{e_4\}$
& \ding{117}\quad $\CA_3^{(t+1)}\colon\{{\color{blue} e_9}\}$                \\
\hline
\end{tabular}
\end{center}
The procedure repeats.
It is clear to see that $\CA_1^{(t)}$ always processed all of the edges in $G_t$
at any time $t$.
\end{example}

Because at any time $t$, $\CA_1^{(t)}$ processed all of the edges in $G_t$, the
output of $\CA_1^{(t)}$ is the solution at time $t$.
The complete pseudo-code of {\sc BasicReduction} is given in
Alg.~\ref{alg:reduction}.

\begin{algorithm}[ht]
\KwIn{A sequence of edges arriving over time}
\KwOut{A set of influential nodes at time $t$}
Initialize {\sc SieveADN} instances $\{\CA_i^{(1)}\colon i=1,\ldots,L\}$\;
\For{$t=1,2,\ldots$}{
\lFor{$i=1,\ldots,L$}{Feed $\CA_i^{(t)}$ with edges $\cup_{l=i}^L\bar{E}_l^{(t)}$}
$S_t\gets$ output of $\CA_1^{(t)}$\;
Terminate $\CA_1^{(t)}$\;
\lFor{$i=2,\ldots,L$}{$\CA_{i-1}^{(t+1)}\gets\CA_i^{(t)}$}
Create and initialize $\CA_L^{(t+1)}$\;
}
\caption{\textsc{BasicReduction}}
\label{alg:reduction}
\end{algorithm}

Since $\CA_1^{(t)}$ is a {\sc SieveADN} instance, its output has an approximation
factor of $(1/2-\epsilon)$ according to Theorem~\ref{thm:ADN}.
We hence have the following conclusion.

\begin{theorem}\label{thm:basic-reduction}
{\sc BasicReduction} achieves an $(1/2-\epsilon)$ approximation guarantee on
TDNs.
\end{theorem}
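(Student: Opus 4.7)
The plan is to reduce the theorem to Theorem~\ref{thm:ADN} (the guarantee for \textsc{SieveADN}) via a careful bookkeeping argument. The key observation I will establish is the \emph{invariant} that, at any time $t$, the instance $\CA_1^{(t)}$ has been fed exactly the multiset of edges constituting $E_t$, in the order of their arrival times. Once this invariant is in hand, the conclusion is essentially immediate from Theorem~\ref{thm:ADN}.

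First, I would prove the invariant by induction on $t$. At time $t$, unroll how $\CA_1^{(t)}$ was created and what it has received. By the shifting rule, $\CA_1^{(t)}$ is the instance that was born as $\CA_j^{(t-j+1)}$ for some $j \in \{1,\ldots,L\}$; concretely, the instance currently at slot $1$ was fed edges $\cup_{l=i}^{L}\bar{E}_l^{(t-i+1)}$ at each earlier time $t-i+1$ (for $i = j,\ldots,1$). Thus the total set of edges $\CA_1^{(t)}$ has processed is
\[
\bigcup_{i=1}^{L}\bigcup_{l\geq i}\bar{E}_l^{(t-i+1)}
 \;=\; \bigl\{\,e=(u,v,\tau)\colon t-L+1\le\tau\le t,\ l_\tau(e)\ge t-\tau+1\bigr\}.
\]
But by the TDN definition an edge $e$ with arrival $\tau$ and lifetime $l_\tau(e)$ belongs to $E_t$ iff $\tau\le t<\tau+l_\tau(e)$, i.e.\ iff $l_\tau(e)\ge t-\tau+1$; and edges with $\tau<t-L+1$ cannot be in $E_t$ since $l_\tau(e)\le L$. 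So the union above equals $E_t$ exactly, and the edges were fed in increasing arrival-time order, which is a valid insertion-only (ADN) stream whose resulting graph is $G_t$.

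Next, I would observe that \textsc{SieveADN} is oblivious to the semantic meaning of the stream: it only uses marginal-gain queries of the form $\delta_{S_\theta}(v)=f_t(S_\theta\cup\{v\})-f_t(S_\theta)$ with respect to the \emph{current} network. Since the network $\CA_1^{(t)}$ sees after processing its input is exactly $G_t$, the function it is implicitly maximizing at query time is precisely $f_t$. By Theorem~\ref{thm:submodular}, $f_t$ is normalized, monotone, and submodular; by Theorem~\ref{thm:ADN}, the output $S_t$ of $\CA_1^{(t)}$ therefore satisfies $f_t(S_t)\ge(1/2-\epsilon)\,f_t(S_t^\star)$, where $S_t^\star$ is the true optimum over $V_t$ with $|S_t^\star|\le k$. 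This is the desired bound.

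The main obstacle, and the only nontrivial point, is the combinatorial bookkeeping that $\CA_1^{(t)}$ has truly processed exactly $E_t$ — nothing more and nothing less. In particular, one must verify two things simultaneously: (i) every edge currently alive in $G_t$ has been delivered to $\CA_1^{(t)}$ (which is where the rule ``$\CA_i^{(t)}$ receives edges of lifetime $\ge i$'' is essential — an edge born at time $\tau$ with lifetime $l_\tau\ge t-\tau+1$ must have been routed to slot $i=t-\tau+1$ when it arrived, and that slot has by now shifted down to slot $1$); and (ii) no expired edge is ever fed to $\CA_1^{(t)}$, because the instance that would contain it is precisely the one discarded when its slot reaches $0$. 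Once this bijective matching between alive edges and edges processed is spelled out via the induction, invoking Theorem~\ref{thm:ADN} closes the proof.
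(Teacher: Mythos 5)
Your proposal is correct and takes essentially the same approach as the paper: the paper observes (without spelling out the bookkeeping) that $\CA_1^{(t)}$ has processed exactly the edges of $G_t$ and then invokes Theorem~\ref{thm:ADN}, which is precisely what you do, except that you explicitly verify the shifting/lifetime accounting via the union $\bigcup_{i=1}^{L}\bigcup_{l\geq i}\bar{E}_l^{(t-i+1)} = E_t$ and note that the instance's accumulating view is an ADN stream.
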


Furthermore, because {\sc BasicReduction} contains $L$ {\sc SieveADN} instances,
so its time complexity and space complexity are both $L$ times larger (assuming
{\sc SieveADN} instances are executed in series).

\begin{theorem}
{\sc BasicReduction} uses $O(Lb\gamma\epsilon^{-1}\log k)$ time to process each
batch of arriving edges, and $O(Lk\epsilon^{-1}\log k)$ memory to store the
intermediate results.
\end{theorem}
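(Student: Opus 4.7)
The plan is to invoke Theorem~\ref{thm:SieveADN-complexity} on each of the $L$ {\sc SieveADN} instances that {\sc BasicReduction} maintains at any time and then sum the per-instance bounds. Since {\sc BasicReduction} is essentially a bank of {\sc SieveADN} instances (executed in series per batch) together with a shift-and-append bookkeeping step at each time tick, both bounds ought to follow as immediate corollaries of the previous theorem.

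For the time bound, I would first observe that at time $t$ the instance $\CA_i^{(t)}$ is fed exactly the edge subset $\bigcup_{l=i}^{L}\bar{E}_l^{(t)}\subseteq\bar{E}_t$. Consequently the set of nodes whose influence spread changes inside $\CA_i^{(t)}$ is a subset of the overall affected set $\bar{V}_t$, whose average cardinality is $b$. Applying Theorem~\ref{thm:SieveADN-complexity} to each instance therefore yields $O(b\gamma\epsilon^{-1}\log k)$ time per batch, and summing over $i=1,\ldots,L$ gives the claimed $O(Lb\gamma\epsilon^{-1}\log k)$ bound. I would also remark that the end-of-step maintenance---terminating $\CA_1^{(t)}$, renaming $\CA_i^{(t)}$ to $\CA_{i-1}^{(t+1)}$, and initializing $\CA_L^{(t+1)}$---costs only $O(L)$ pointer operations and is absorbed.

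For the space bound, I would argue that the intermediate state at time $t$ is simply the disjoint union of the intermediate states of $\CA_1^{(t)},\ldots,\CA_L^{(t)}$; by Theorem~\ref{thm:SieveADN-complexity} each occupies $O(k\epsilon^{-1}\log k)$ memory, and summing gives $O(Lk\epsilon^{-1}\log k)$.

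The only subtlety I anticipate is that the per-instance affected-node set could in principle differ from $\bar{V}_t$, which might make one worry that the $L$-fold sum is loose. However, because this per-instance set is always a subset of $\bar{V}_t$, the $b$-average bound from Theorem~\ref{thm:SieveADN-complexity} is a valid upper bound for every instance, and the straightforward multiplication by $L$ is exactly what the statement claims. So this is less a genuine obstacle than a clarification to state cleanly; the theorem really is a direct corollary of the preceding complexity result.
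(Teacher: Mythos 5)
Your proposal is correct and follows exactly the reasoning the paper gives (the paper's ``proof'' is just the one-line remark preceding the theorem: {\sc BasicReduction} runs $L$ {\sc SieveADN} instances in series, so both bounds from Theorem~\ref{thm:SieveADN-complexity} are multiplied by $L$). Your careful note that each instance's affected-node set is a subset of $\bar{V}_t$ (so the $b$-bound applies instance-wise) and that the shift/rename bookkeeping is absorbed, is the right way to make that one-liner rigorous.
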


\remarks.

\textbullet\,{\sc SieveADN} instances in {\sc BasicReduction} can be executed in
parallel.
In this regard, the computational efficiency can be greatly improved.

\textbullet\,Notice that edges with lifetime $l$ will be input to $\CA_i^{(t)}$
with $i\leq l$.
Hence, edges with large lifetime will fan out to a large fraction of {\sc
SieveADN} instances, and incur high CPU and memory usage, especially when $L$ is
large.
This is the main bottleneck of {\sc BasicReduction}.

\textbullet\,On the other hand, edges with small lifetime only need to be
processed by a few {\sc SieveADN} instances.
If edge lifetime is mainly distributed over small lifetimes, e.g., geometrically
distributed, exponentially distributed or power-law distributed, then {\sc
BasicReduction} could be as efficient as {\sc SieveADN}.

\section{HistApprox: Improving Efficiency Using
Histogram Approximation}
\label{sec:histapprox}

{\sc BasicReduction} needs to maintain $L$ {\sc SieveADN} instances.
Processing large lifetime edges is a bottleneck.
In this section, we design {\sc HistApprox} to address its weakness.
\textsc{HistApprox} allows infinitely large $L$ and improves the efficiency of
\textsc{BasicReduction} significantly.

\subsection{Basic Idea}

{\sc BasicReduction} does not leverage the outputs of {\sc SieveADN} instances
until they are shifted to the head (refer to Fig.~\ref{f:shifting}).
We show that these intermediate outputs are actually useful and can be used to
determine whether a {\sc SieveADN} instance is redundant.
Roughly speaking, if outputs of two {\sc SieveADN} instances are close to each
other, it is not necessary to maintain both of them because one of them is
redundant; hence, we can terminate one of them earlier.
In this way, because we maintain less than $L$ {\sc SieveADN} instances, the
update time and memory usage both decrease.
On the other hand, early terminations of {\sc SieveADN} instances will incur a
loss in solution quality.
We will show how to bound this loss by using the {\em smooth submodular histogram}
property~\cite{Chen2016f,Epasto2017}.

Specifically, let $g_t(l)$ denote the value of output of instance $\CA_l^{(t)}$ at
time $t$.
We know that $g_t(1)\geq (1/2-\epsilon)\OPT_t$ at any time $t$.
Instead of maintaining $L$ {\sc SieveADN} instances, we propose {\sc HistApprox}
that removes redundant {\sc SieveADN} instances whose output values are close to
an {\em active} {\sc SieveADN} instance.
{\sc HistApprox} can be viewed as using a histogram $\{g_t(l)\colon l\in\bx_t\}$
to approximate $g_t(l)$, as illustrated in Fig.~\ref{fig:histogram}.
Here, $\bx_t=\{x_1^{(t)},x_2^{(t)},\ldots\}$ is a set of indices in descending
order, and each index $x_i^{(t)}\in\{1,\ldots,L\}$ indexes an {\em active} {\sc
SieveADN} instance\footnote{We will omit superscript $t$ if time $t$ is clear
from context.}.
Finally, {\sc HistApprox} approximates $g_t(1)$ by $g_t(x_1)$ at any time $t$.
Because {\sc HistApprox} only maintains a few {\sc SieveADN} instances at any time
$t$, i.e., those $\CA_l^{(t)}$'s with $l\in\bx_t$ and $|\bx_t|\leq L$, {\sc
HistApprox} reduces both update time and memory usage with a little loss in
solution quality which can still be bounded.

\begin{figure}[htp]
\centering
\begin{tikzpicture}[
declare function={f(\x)=1.4*exp(-0.5*\x);},
every node/.style={inner sep=0pt},
vline/.style={dashed},
hline/.style={very thick},
txt/.style={inner sep=2pt, font=\footnotesize},
circ/.style={circle, draw, thick, minimum size=2.5pt},
disk/.style={circ, fill},
]

\draw[domain=0:7,blue,smooth,very thick,densely dashed,variable=\x]
plot({\x},{f(\x)});

\draw (0,0) -- (7,0);
\draw (0,0) -- ++(0,.1);
\draw (7,0) -- ++(0,.1);

\foreach \x[count=\i] in {.25,.7,1.2,1.8,2.6,3.6,5.2}{
\draw[vline] (\x,0) -- (\x,{f(\x)});
\node[txt,anchor=north] at (\x,0) {$x_\i$};
}

\node[txt,inner sep=2pt,anchor=east] at (0,0) {$1$};
\node[txt,inner sep=2pt,anchor=west] at (7,0) {$L$};

\foreach \x/\y in {0/.25,.25/.7,.7/1.2,1.2/1.8,1.8/2.6,2.6/3.6,3.6/5.2}{
\draw[hline] (\x,{f(\y)}) -- (\y,{f(\y)});
}
\draw[hline] (5.2,0) -- (7,0);

\foreach \x/\y in {.25/.7,.7/1.2,1.2/1.8,1.8/2.6,2.6/3.6,3.6/5.2,5.2/10}{
\node[disk] at (\x,{f(\x)}) {};
\node[circ] at (\x,{f(\y)}) {};
}

\draw[-latex,blue] (1.8,1.1)
node[txt,anchor=west,blue]{$g_t(l),l=1,\ldots,L$} to[out=180,in=45] (.5,{f(.45)});

\draw[-latex] (3,.6)
node[txt,anchor=west]{$g_t(l),l\in\bx_t$ where $|\bx_t|\leq L$} to[out=180,in=45] (1.8,{f(1.7)});

\end{tikzpicture}

\caption{Approximating $\{g_t(l)\colon l=1,\ldots,L\}$ by $\{g_t(l)\colon
l\in\bx_t\}$.
(Note that $g_t(l)$ may not be a monotone function of $l$.)}
\label{fig:histogram}
\end{figure}
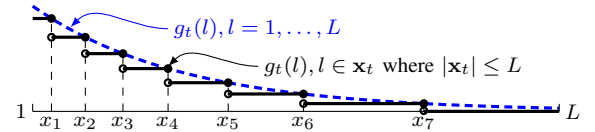

\subsection{Algorithm Description}

{\sc HistApprox} mainly consists of two steps: 1) creating and updating {\sc
SieveADN} instances; and 2) removing redundant {\sc SieveADN} instances.

\header{Creating and Updating \textsc{SieveADN} Instances.}
Consider a set of edges $\bar{E}_l^{(t)}$ with lifetime $l$ arrived at time $t$.
If $l\in\bx_t$, we simply feed $\bar{E}_l^{(t)}$ to $\{\CA_i^{(t)}\colon
i\in\bx_t\wedge i\leq l\}$ (as illustrated in Fig.~\ref{f:insert_exists}).
Otherwise, we need to create a new {\sc SieveADN} instance $\CA_l^{(t)}$ first,
and then feed $\bar{E}_l^{(t)}$ to $\{\CA_i^{(t)}\colon i\in\bx_t\wedge i\leq
l\}$.
There are two cases when creating $\CA_l^{(t)}$.

(1) If $l$ has no successor in $\bx_t$, as illustrated in
Fig.~\ref{f:insert_tail}, we simply create a new {\sc SieveADN} instance as
$\CA_l^{(t)}$.

(2) Otherwise, let $l^*$ denote $l$'s successor in $\bx_t$, as illustrated in
Fig.~\ref{f:insert_middle}.
Let $\CA_l^{(t)}$ denote a copy of $\CA_{l^*}^{(t)}$.
Recall that $\CA_l^{(t)}$ needs to process edges with lifetime no less than $l$ at
time $t$.
Because $\CA_{l^*}^{(t)}$ has processed edges with lifetime no less than $l^*$,
then, we only need to feed $\CA_l^{(t)}$ with edges in $G_t$ such that their
lifetimes are in interval $[l,l^*)$.

After this step, we guarantee that each active {\sc SieveADN} instances processed
the edges that they should process.

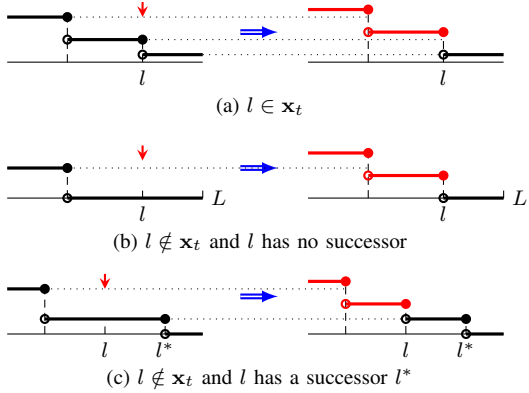
\begin{figure}[htp]
\centering
\subfloat[$l\in\bx_t$\label{f:insert_exists}]{%
\begin{tikzpicture}[
every node/.style={inner sep=0pt},
txt/.style={inner sep=3pt, anchor=north, font=\footnotesize},
add/.style={red,<-,thick,>=stealth},
vline/.style={densely dashed},
hline/.style={very thick},
circ/.style={circle, draw, thick, minimum size=3pt},
disk/.style={circ, fill},
change/.style={-latex, thick, blue,double},
]

\path[use as bounding box] (0,-.3) rectangle (7,.7);
\begin{scope}
\draw (.2,0) -- (2.8,0);
\draw[vline] (1,0) -- (1,.6) (2,0) -- (2,.3);
\draw[hline] (.2,.6) -- (1,.6) (1,.3)--(2,.3) (2,.1)--(2.8,.1);
\node[circ] at (1,.3) {};
\node[circ] at (2,.1) {};
\node[disk] at (1,.6) {};
\node[disk] at (2,.3) {};
\node[txt] at (1,0) {};
\node[txt] at (2,0) {$l$};
\draw[add] (2,.6) -- ++(0,.2);
\draw[change] (3.3,.4) -- ++(.5,0);

\coordinate (A1) at (1,.6);
\coordinate (B1) at (2,.3);
\coordinate (C1) at (2.8,.1);
\end{scope}

\begin{scope}[xshift=4cm]
\draw (.2,0) -- (2.8,0);
\draw[vline] (1,0) -- (1,.7) (2,0) -- (2,.4);
\draw[hline,red] (.2,.7) -- (1,.7) (1,.4) -- (2,.4);
\draw[hline] (2,.1)--(2.8,.1);
\node[circ,red] at (1,.4) {};
\node[circ] at (2,.1) {};
\node[disk,red] at (1,.7) {};
\node[disk,red] at (2,.4) {};
\node[txt] at (1,0) {};
\node[txt] at (2,0) {$l$};

\coordinate (A2) at (1,.6);
\coordinate (B2) at (2,.3);
\coordinate (C2) at (2,.1);
\end{scope}

\draw[dotted] (A1) -- (A2) (B1) -- (B2) (C1) -- (C2);
\end{tikzpicture}

}\\
\subfloat[$l\notin\bx_t$ and $l$ has no successor\label{f:insert_tail}]{%
\begin{tikzpicture}[
every node/.style={inner sep=0pt},
curve/.style={domain=.2:2.8,smooth,thick,densely dotted,variable=\x},
txt/.style={inner sep=3pt, anchor=north, font=\footnotesize},
add/.style={red,<-,thick,>=stealth},
vline/.style={densely dashed},
hline/.style={very thick},
circ/.style={circle, draw, thick, minimum size=3pt},
disk/.style={circ, fill},
change/.style={-latex, thick, blue,double},
]

\path[use as bounding box] (0,-.3) rectangle (7,.7);

\begin{scope}
\draw (.2,0) -- (2.8,0);
\draw[vline] (1,0) -- (1,.4);
\draw[hline] (.2,.4) -- (1,.4) (1,0) -- (2.8,0);
\node[circ] at (1,0) {};
\node[disk] at (1,.4) {};
\node[txt] at (1,0) {};
\node[txt] at (2,0) {$l$};
\node[txt,anchor=west] at (2.8,0) {$L$};
\draw[add] (2,.5) -- ++(0,.2);
\draw (2,0) -- ++(0,.1);
\draw (2.8,0) -- ++(0,.1);
\draw[change] (3.3,.4) -- ++(.5,0);

\coordinate (A1) at (1,.4);
\end{scope}

\begin{scope}[xshift=4cm]
\draw (.2,0) -- (2.8,0);
\draw[vline] (1,0) -- (1,.6) (2,0) -- (2,.3);
\draw[hline,red] (.2,.6) -- (1,.6) (1,.3) -- (2,.3);
\draw[hline] (2,0) -- (2.8,0);
\node[circ,red] at (1,.3) {};
\node[circ] at (2,0) {};
\node[disk,red] at (1,.6) {};
\node[disk,red] at (2,.3) {};
\node[txt] at (1,0) {};
\node[txt] at (2,0) {$l$};
\node[txt,anchor=west] at (2.8,0) {$L$};
\draw (2.8,0) -- ++(0,.1);

\coordinate (A2) at (1,.4);
\end{scope}

\draw[dotted] (A1) -- (A2);
\end{tikzpicture}

}\\
\subfloat[$l\notin\bx_t$ and $l$ has a successor $l^*$\label{f:insert_middle}]{%
\begin{tikzpicture}[
every node/.style={inner sep=0pt},
txt/.style={inner sep=3pt, anchor=north, font=\footnotesize},
add/.style={red,<-,thick,>=stealth},
vline/.style={densely dashed},
hline/.style={very thick},
circ/.style={circle, draw, thick, minimum size=3pt},
disk/.style={circ, fill},
change/.style={-latex, thick, blue,double},
]

\path[use as bounding box] (0,-.3) rectangle (7,.7);
\begin{scope}
\draw (.2,0) -- (2.8,0);
\draw[vline] (.7,0) -- (.7,.6) (2.3,0) -- (2.3,.2);
\draw[hline] (.2,.6)--(.7,.6) (.7,.2)--(2.3,.2) (2.3,0)--(2.8,0);
\node[circ] at (.7,.2) {};
\node[circ] at (2.3,0) {};
\node[disk] at (.7,.6) {};
\node[disk] at (2.3,.2) {};
\node[txt] at (.7,0) {};
\node[txt] at (1.5,0) {$l$};
\node[txt] at (2.3,0) {$l^*$};
\draw[add] (1.5,.6) -- ++(0,.2);
\draw (1.5,0) -- ++(0,.1);
\draw[change] (3.3,.5) -- ++(.5,0);

\coordinate (A1) at (.7,.6);
\coordinate (B1) at (2.3,.2);
\end{scope}

\begin{scope}[xshift=4cm]
\draw (.2,0) -- (2.8,0);
\draw[vline] (.7,0) -- (.7,.7) (1.5,0) -- (1.5,.4) (2.3,0) -- (2.3,.2);
\draw[hline,red] (.2,.7) -- (.7,.7) (.7,.4) -- (1.5,.4);
\draw[hline] (1.5,.2)--(2.3,.2) (2.3,0)--(2.8,0);
\node[circ,red] at (.7,.4) {};
\node[circ] at (1.5,.2) {};
\node[circ] at (2.3,0) {};
\node[disk,red] at (.7,.7) {};
\node[disk,red] at (1.5,.4) {};
\node[disk] at (2.3,.2) {};
\node[txt] at (.7,0) {};
\node[txt] at (1.5,0) {$l$};
\node[txt] at (2.3,0) {$l^*$};

\coordinate (A2) at (.7,.6);
\coordinate (B2) at (2.3,.2);
\end{scope}

\draw[dotted] (A1) -- (A2) (B1) -- (B2);
\end{tikzpicture}

}
\caption{Processing arrived edges with lifetime $l$}
\end{figure}

\header{Removing Redundant \textsc{SieveADN} Instances.}
Intuitively, if the outputs of two {\sc SieveADN} instances are close to each
other, there is no need to maintain both of them because one of them is redundant.
To quantify the redundancy of {\sc SieveADN} instances, we need the following
formal definition.

\begin{definition}[$\epsilon$-redundancy]
At time $t$, consider two {\sc SieveADN} instances $\CA_i^{(t)}$ and
$\CA_l^{(t)}$ with $i<l$.
We say $\CA_l^{(t)}$ is $\epsilon$-redundant with $\CA_i^{(t)}$ if their exists
$j>l$ such that $g_t(j)\geq (1-\epsilon)g_t(i)$.
\end{definition}

The above definition simply states that, since $\CA_i^{(t)}$ and $\CA_j^{(t)}$ are
already close with each other, then the {\sc SieveADN} instances between them are
considered to be redundant and it is not necessary to maintain them.
In {\sc HistApprox}, after processing each batch of edges, we check the outputs of
each {\sc SieveADN} instances and terminate those redundant ones.

The complete {\sc HistApprox} is given in Alg.~\ref{alg:histogram}.
Pseudo-codes of the two steps are described in \ProcessEdges and
\ReduceRedundancy, respectively.

\begin{algorithm}[ht]
\KwIn{A sequence of edges arriving over time}
\KwOut{A set of influential nodes at each time $t$}

$\bx_1\gets\emptyset$\;
\For{$t=1,2,\ldots$}{
\lForEach{$l=1,\ldots,L$}{\ProcessEdges{$\bar{E}_l^{(t)}$}}
$S_t\gets$ output of $\CA_{x_1}^{(t)}$\;
\lIf{$x_1=1$}{Terminate $\CA_1^{(t)}$, let $\bx_t\gets\bx_t\backslash\{1\}$}
\For{$i=1,\ldots,|\bx_t|$}{
$\CA^{(t+1)}_{x_i-1}\gets\CA^{(t)}_{x_i}$,
$x_i^{(t+1)}\gets x_i^{(t)}-1$\;
}
}

\myproc{\ProcessEdges{$\bar{E}_l^{(t)}$}}{ \label{ln:process_edges}
\If{$l\notin\bx_t$}{
\uIf(\tcp*[f]{refer to Fig.~\ref{f:insert_tail}})
{$l$ has no successor in $\bx_t$}{
Create and initialize $\CA_l^{(t)}$\;
}
\Else(\tcp*[f]{refer to Fig.~\ref{f:insert_middle}}){
Let $l^*$ denote the successor of $l$ in $\bx_t$\;
$\CA_l^{(t)}\gets$ a copy of $\CA_{l^*}^{(t)}$\;
Feed $\CA_l^{(t)}$ with edges $\{e\colon e\in E_t\wedge l\leq l_e<l^*\}$.
}
$\bx_t\gets\bx_t\cup\{l\}$\;
}
\lForEach{$l'\in\bx_t$ and $l'\leq l$}{Feed $\CA_{l'}^{(t)}$ with $\bar{E}_l^{(t)}$}
\ReduceRedundancy{}\;
}

\myproc{\ReduceRedundancy{}}{\label{ln:remove_redundancy}
\ForEach{$i\in\bx_t$}{
Find the largest $j>i$ in $\bx_t$ s.t. $g_t(j)\geq (1-\epsilon)g_t(i)$\;
Delete each index $l\in\bx_t$ s.t. $i<l<j$ and kill $\CA_l^{(t)}$\;
}
}
\caption{\textsc{HistApprox}}
\label{alg:histogram}
\end{algorithm}

\subsection{Algorithm Analysis}
\label{ss:analysis}

We now theoretically show that {\sc HistApprox} achieves a constant approximation
factor.

Notice that indices $x\in\bx_t$ and $x+1\in\bx_{t-1}$ are actually the same index
but appear at different time.
In general, we say $x'\in\bx_{t'}$ is an {\em ancestor} of $x\in\bx_t$ if $t'\leq
t$ and $x'=x+t-t'$.
An index and its ancestors will be considered as the same index.
We will use $x'$ to denote $x$'s ancestor in the follows.

First, histogram $\{g_t(l)\colon l\in\bx_t\}$ maintained by {\sc HistApprox} has
the following property.

\begin{theorem}\label{thm:histogram_property}
For two consecutive indices $x_i,x_{i+1}\in\bx_t$ at any time $t$, one of the
following two cases holds:

\begin{itemize}
\item[\textbf{C1}] $G_t$ contains no edge with lifetime in interval $(x_i,
x_{i+1})$.
\item[\textbf{C2}] $g_{t'}(x_{i+1}')\geq (1-\epsilon)g_{t'}(x_i')$ at some time
$t'\leq t$, and from time $t'$ to $t$, there is no edge with lifetime between
$x_i'$ and $x_{i+1}'$ arrived, exclusive.
\end{itemize}
\end{theorem}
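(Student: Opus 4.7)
The plan is to induct on time $t$, with the invariant being the statement of the theorem applied to the current set $\bx_t$. The base case is immediate: before any edge is processed, $\bx=\emptyset$ and the claim is vacuously true. For the inductive step, fix consecutive indices $x_i<x_{i+1}$ in $\bx_t$. If no edge in $G_t$ has lifetime strictly between $x_i$ and $x_{i+1}$, case \textbf{C1} holds, so assume some edge $e\in E_t$ has lifetime $l_e\in(x_i,x_{i+1})$.

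Since edge lifetimes and all active indices decrement by one per time step, the assigned lifetime $l_{\tau_e}(e)$ at the arrival time $\tau_e$ of $e$ lies strictly between the ancestors $x_i'$ and $x_{i+1}'$ at time $\tau_e$. When {\sc ProcessEdges} handled $e$, it inserted an index equal to $l_{\tau_e}(e)$ into $\bx_{\tau_e}$, strictly between those ancestors. For $x_i$ and $x_{i+1}$ to end up consecutive at time $t$, every in-gap index ever created must have been eliminated by {\sc ReduceRedundancy} at some step in $[\tau_e,t]$. Let $t'$ denote the latest such step. Maximality of $t'$ delivers the no-arrival clause of \textbf{C2}: any edge with in-gap lifetime arriving at a step strictly later than $t'$ would create a new in-gap index that either survives to time $t$ (contradicting consecutiveness of $x_i,x_{i+1}$) or is removed at a later step (contradicting the choice of $t'$).

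Next I would examine the state of $\bx_{t'}$ at the end of step $t'$. The ancestors $x_i'$ and $x_{i+1}'$ must already be consecutive there, because any residual in-gap index would shift unchanged into the interval $(x_i,x_{i+1})$ by time $t$. Inspecting the {\sc ReduceRedundancy} iteration rooted at $x_i'$, it picks the largest $j>x_i'$ in $\bx_{t'}$ with $g_{t'}(j)\geq(1-\epsilon)g_{t'}(x_i')$ and deletes everything strictly in between. For the in-gap indices to all vanish without also erasing $x_{i+1}'$, this $j$ must equal $x_{i+1}'$: a smaller $j$ would leave either $j$ itself or some successor index surviving strictly inside $(x_i',x_{i+1}')$, while a larger $j$ would sweep away $x_{i+1}'$ (which lies strictly between $x_i'$ and $j$). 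This yields $g_{t'}(x_{i+1}')\geq(1-\epsilon)g_{t'}(x_i')$, which is exactly \textbf{C2}.

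The main obstacle I expect is the interleaving within a single time step: {\sc ReduceRedundancy} is invoked once per batch $\bar{E}_l^{(t)}$ in the outer loop over $l$, and each invocation iterates over all currently surviving indices, so several chains of deletions may occur before the step ends. The delicate work is to argue that the iteration rooted specifically at the ancestor of $x_i$ is the one that licences the $(1-\epsilon)$ ratio --- rather than some chain of iterations rooted at intermediate indices that could clear the gap indirectly and leave a spurious survivor --- and to verify that the maximality of $t'$ is robust to edges that are inserted and immediately eliminated within the same step.
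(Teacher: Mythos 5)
Your proof is correct and takes essentially the same approach as the paper's: both case-split on whether any edge with in-gap lifetime is present in $G_t$, invoke the \ReduceRedundancy{} deletion at a carefully chosen most-recent clearing time $t'$ to obtain the $(1-\epsilon)$ ratio between $g_{t'}(x_{i+1}')$ and $g_{t'}(x_i')$, and derive the no-arrival clause by contradiction from the maximality of $t'$. The multi-iteration interleaving issue you flag at the end is genuine, but the paper's proof glosses over it just as yours does, simply asserting that ``procedure \ReduceRedundancy{} guarantees'' the ratio without tracing the incremental clearing; your observation that the last deletion leaving $x_i'$ and $x_{i+1}'$ adjacent must itself be the one rooted at $x_i'$ with target exactly $x_{i+1}'$ is the right way to close that gap.
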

\begin{proof}
Please refer to our Technique Report~\cite{TR}.
\end{proof}

Histogram with property C2 is known as a {\em smooth
histogram}~\cite{Braverman2007}.
Smooth histogram together with the submodularity of objective function can ensure
a constant factor approximation factor of $g_t(x_1)$.

\begin{theorem}\label{thm:histogram_guarantee}
{\sc HistApprox} achieves a $(1/3 - \epsilon)$ approximation guarantee, i.e.,
$g_t(x_1)\geq (1/3-\epsilon)\OPT_t$ at any time $t$.
\end{theorem}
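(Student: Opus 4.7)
My plan is case analysis on whether $x_1 = 1$. In the easy case $x_1 = 1$, the head instance $\CA_1^{(t)}$ has processed every edge currently in $E_t$, so Theorem~\ref{thm:ADN} immediately gives $g_t(x_1) \geq (1/2-\epsilon)\OPT_t$, which already dominates the claimed $(1/3-\epsilon)\OPT_t$.

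The substantive case is $x_1 > 1$. Here I would exploit Theorem~\ref{thm:histogram_property} to locate, in the algorithm's past, a time $t^* \leq t$ and a pair of ancestor indices $i^* < j^*$ in $\bx_{t^*}$, with $j^* = x_1 + (t - t^*)$ the ancestor of $x_1$ and $i^*$ being the (most recent) ancestor of the missing ``position~$1$'' candidate that was eliminated by \textsc{ReduceRedundancy}. By Theorem~\ref{thm:histogram_property}, case C2 must hold for this pair: $g_{t^*}(j^*) \geq (1-\epsilon) g_{t^*}(i^*)$, and no edge with lifetime strictly between $i^*$ and $j^*$ arrived in $[t^*, t]$. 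A direct lifetime calculation identifies the subgraph processed by $\CA_{i^*}^{(t^*)}$ with precisely the edges $E_{t^*} \cap E_t$ that persist from $t^*$ all the way to $t$, so that $g_{t^*}(i^*) \geq (1/2-\epsilon)\,\OPT(E_{t^*}\cap E_t)$ by Theorem~\ref{thm:ADN}.

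I would then chain two more estimates. Monotonicity of each set $S_\theta$ kept by \textsc{SieveStreaming}, combined with monotonicity of $f_t$ in the edge set and the fact that $\CA_{x_1}^{(t)}$ is literally $\CA_{j^*}^{(t^*)}$ shifted forward, yields $g_t(x_1) \geq g_{t^*}(j^*) \geq (1-\epsilon)\,g_{t^*}(i^*)$. Meanwhile, subadditivity of $f_t$ in the edge set (which follows from submodularity of reachability in edges and $f(S,\emptyset)=|S|$) supplies the decomposition $\OPT_t \leq \OPT(E_{t^*}\cap E_t) + \OPT(E_t\setminus E_{t^*})$. The first term is already controlled by $g_t(x_1)$ through the chain above; the second term is handled by invoking C2 to force every edge arriving in $(t^*,t]$ to have lifetime outside the gap $(i^*, j^*)$, so its remaining lifetime at $t$ is either $\geq x_1$ (and thus absorbed into $\CA_{x_1}^{(t)}$'s processed set) or very small, in which case a short recursive application of the same argument to a later sub-epoch disposes of it. Combining the three estimates and solving for $g_t(x_1)$ reproduces the standard $\alpha/(1+\alpha)$ collapse with the underlying factor $\alpha = 1/2 - \epsilon$, delivering the $(1/3-\epsilon)\OPT_t$ bound after absorbing the lower-order $\epsilon$ terms.

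The main technical obstacle, and where I expect most of the bookkeeping to live, is the edge-accounting in the last step: precisely identifying which edges of $E_t$ escape $\CA_{x_1}^{(t)}$'s processed set (particularly the ones that arrived in $(t^*, t]$ with lifetime $\leq i^*$) and verifying that the recursive/telescoping application of Theorem~\ref{thm:histogram_property} collapses cleanly without degrading the constant. Once that ledger is balanced, everything else is standard smooth-submodular-histogram arithmetic, which I would relegate to the technical report~\cite{TR}.
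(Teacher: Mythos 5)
Your case split on $x_1=1$ versus $x_1>1$ is the right starting point and matches the paper, and the step establishing $g_t(x_1)\geq g_{t^*}(j^*)\geq(1-\epsilon)g_{t^*}(i^*)$ is sound: an active \textsc{SieveADN} instance strictly between index $1$ and its creation time never sees any of its processed edges expire, so the maintained sets only gain value over time. But two of the remaining steps have real problems.

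First, you assert that Theorem~\ref{thm:histogram_property} gives case~\textbf{C2} for the pair $(i^*,j^*)$. That is not automatic. The predecessor of $x_1$ may have been eliminated by \textsc{ReduceRedundancy} (your scenario), but it is also possible that $j^*$ was simply \emph{inserted} just after $i^*$ and no edge with intermediate lifetime ever arrived, which is case~\textbf{C1}. The paper handles this explicitly: in~\textbf{C1} one shows $\CA_{x_1}^{(t)}$ in fact processed every edge of $G_t$, so Theorem~\ref{thm:ADN} directly yields $(1/2-\epsilon)\OPT_t$. Your proof must include this branch; it is not a sub-case of~\textbf{C2}. Second, and more seriously, the ``standard $\alpha/(1+\alpha)$ collapse'' you invoke does not fall out of the subadditive decomposition $\OPT_t\leq\OPT(E_{t^*}\cap E_t)+\OPT(E_t\setminus E_{t^*})$. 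Running that ledger honestly gives $\OPT_t \leq \tfrac{1}{(1-\epsilon)\beta}g_t(x_1) + \tfrac{1}{\beta}g_t(x_1)$ with $\beta=1/2$, which collapses to $g_t(x_1)\geq(1/4-\epsilon)\OPT_t$, the Chen~et~al.~\cite{Chen2016f} bound, not $(1/3-\epsilon)$. The paper gets $1/3$ precisely because it invokes Lemma~\ref{lem:smooth_histogram} (Epasto et al.~\cite{Epasto2017}) as a black box; that lemma's proof requires a sharper argument than plain edge-set subadditivity (one has to exploit the internal structure of \textsc{SieveStreaming}'s threshold sets, not just that it is a $\beta$-approximation oracle). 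Relatedly, your ``recursive application to a later sub-epoch'' for edges with small residual lifetime is unnecessary and risky: the paper instead observes that once the left index $x_0'$ reaches position~$1$ at time $t'$, every edge with lifetime at most $x_0'$ has expired by time $t$ and can be dropped from the accounting outright, with no recursion and no accumulating loss. If you want to avoid citing Lemma~\ref{lem:smooth_histogram}, you would need to reconstruct its proof rather than appeal to subadditivity; otherwise your sketch proves $(1/4-\epsilon)$, not the claimed $(1/3-\epsilon)$.
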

\begin{proof}
The high-level idea is to leverage the property found in
Theorem~\ref{thm:histogram_property} and {\em smooth submodular histogram}
property reported in~\cite{Chen2016f,Epasto2017}.
Please refer to our Technique Report~\cite{TR} for details.
\end{proof}

{\sc HistApprox} also significantly reduces the complexity.

\begin{theorem}\label{thm:histogram_complexity}
{\sc HistApprox} uses $O(b(\gamma+1)\epsilon^{-2}\log^2k)$ time to process each
batch of edges and $O(k\epsilon^{-2}\log^2k)$ memory to store intermediate
results.
\end{theorem}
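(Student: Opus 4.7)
\textbf{Proof Proposal for Theorem~\ref{thm:histogram_complexity}.}

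The plan is to first establish that $|\bx_t| = O(\epsilon^{-1}\log k)$ at every time $t$, and then combine this head-count with the per-instance complexity given by Theorem~\ref{thm:SieveADN-complexity}.

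I would begin by bounding $|\bx_t|$ through a smooth-histogram argument. Each value $g_t(l)$ of an active \textsc{SieveADN} instance is trapped in a ratio-bounded interval: by monotonicity and submodularity $g_t(l) \leq \OPT_t \leq k\Delta$, and by the admission rule of \textsc{SieveStreaming} the instance's largest threshold always retains the maximum-singleton node, so $g_t(l) \geq \Delta/2$ up to $(1-\epsilon)$-factors. By Theorem~\ref{thm:histogram_property}, every pair of consecutive indices $x_i, x_{i+1} \in \bx_t$ either has no edge lifetime strictly between them (case C1, in which case the gap can never be refined) or admits, at some creation-related time $t' \leq t$, a multiplicative drop $g_{t'}(x_{i+1}') \geq (1-\epsilon) g_{t'}(x_i')$ (case C2). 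Telescoping the case-C2 drops across the histogram and using the $O(k)$ overall ratio yields $|\bx_t| = O(\log_{1/(1-\epsilon)} k) = O(\epsilon^{-1}\log k)$.

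The space bound is then immediate: each \textsc{SieveADN} instance occupies $O(k\epsilon^{-1}\log k)$ memory by Theorem~\ref{thm:SieveADN-complexity}, and multiplying by $|\bx_t|$ gives $O(k\epsilon^{-2}\log^2 k)$. For the per-batch time bound I would sum three contributions. The dominant cost is the main processing loop in \textsc{ProcessEdges}: each of the $b$ nodes in the batch is fed into every active instance, and inside each instance is tested against $O(\epsilon^{-1}\log k)$ thresholds at cost $\gamma$ per evaluation, for a total of $O(b\gamma) \cdot O(\epsilon^{-2}\log^2 k)$. The \textsc{ReduceRedundancy} step compares only cached output values among $O(\epsilon^{-1}\log k)$ instances, contributing $O(\epsilon^{-2}\log^2 k)$ comparisons with no function evaluations; this is the origin of the additive ``$+1$'' in the $(\gamma+1)$ factor. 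Finally, creating a new instance as a copy of its successor costs $O(k\epsilon^{-1}\log k)$ and happens at most a handful of times per batch, so it is absorbed.

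The main obstacle will be rigorously justifying $|\bx_t| = O(\epsilon^{-1}\log k)$. The snapshot invariant of \textsc{ReduceRedundancy} alone is too weak, since a consecutive pair for which no larger ``witness'' satisfied the $(1-\epsilon)$ test can persist with $g_t$ values arbitrarily close; the argument has to invoke Theorem~\ref{thm:histogram_property} and carefully propagate creation-time comparisons forward across edge expirations, while being mindful that $g_t$ is not monotone in $l$. A secondary subtlety is bounding the replay cost incurred when a new index is inserted between two existing indices, so that the replayed edges are charged correctly and do not inflate the per-batch bound.
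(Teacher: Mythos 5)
Your overall decomposition---bound the number of live \textsc{SieveADN} instances $|\bx_t|$, multiply by the per-instance time/space from Theorem~\ref{thm:SieveADN-complexity}, and add the \ReduceRedundancy{} overhead---matches the paper, and the space bound and the accounting of the $O(b\gamma\epsilon^{-2}\log^2 k)$ main loop are correct. However, the route you propose for bounding $|\bx_t|$ has the inequality pointing the wrong way. You cite case \textbf{C2} of Theorem~\ref{thm:histogram_property}, $g_{t'}(x_{i+1}')\geq(1-\epsilon)g_{t'}(x_i')$, and call it a ``multiplicative drop'' to telescope. But that inequality is a \emph{lower} bound on the successor's value (consecutive instances are \emph{close}); telescoping it only yields $g(x_m)\geq(1-\epsilon)^{m-1}g(x_1)$, which places no upper bound on $m$. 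Theorem~\ref{thm:histogram_property} is the tool for the approximation guarantee (Theorem~\ref{thm:histogram_guarantee}), not for the size bound. What actually bounds $|\bx_t|$ is the complementary invariant that \ReduceRedundancy{} enforces at the current time: since $x_{i+1}$ is taken to be the \emph{largest} index with $g_t(\cdot)\geq(1-\epsilon)g_t(x_i)$, every surviving index beyond it satisfies $g_t(x_{i+2})<(1-\epsilon)g_t(x_i)$. Telescoping this genuine drop over $x_1,x_3,x_5,\ldots$ together with $g_t(l)\in[\Delta,k\Delta]$ gives $|\bx_t|=O(\epsilon^{-1}\log k)$. Your worry that ``the snapshot invariant of \ReduceRedundancy{} alone is too weak'' because adjacent pairs can be close is thus off-target: adjacency is allowed to be close precisely because the telescoping skips every other index, and non-monotonicity of $g_t$ in $l$ is harmless for the same reason.

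Two smaller points. First, the \ReduceRedundancy{} term must be multiplied by the number of times it is invoked per batch (once per distinct lifetime present, bounded by $b$) to produce the stated $b(\gamma+1)$ form; you noted it as the source of the ``$+1$'' but did not attach the $b$ factor. Second, you are right to flag the replay cost of seeding a freshly cloned instance with the backlog of edges having lifetime in $[l,l^*)$; the paper's proof is in fact silent on this, so it should be treated as an unresolved charging issue rather than something ``absorbed.''
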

\begin{proof}
Please refer to our Technique Report~\cite{TR}.
\end{proof}

\remark.
Can {\sc HistApprox} achieve the $(1/2-\epsilon)$ approximation guarantee?
Notice that {\sc HistApprox} outputs slightly worse solution than {\sc
BasicReduction} due to the fact that $\CA_{x_1}^{(t)}$ did not process all of
the edges in $G_t$, i.e., those edges with lifetime less than $x_1$ in $G_t$ are
not processed.
Therefore, we can slightly modify Alg.~\ref{alg:histogram} and feed
$\CA_{x_1}^{(t)}$ with these unprocessed edges.
Then, {\sc HistApprox} will output a solution with $(1/2-\epsilon)$ approximation
guarantee.
In practice, we observe that $g_t(x_1)$ is already close to $\OPT_t$, hence it is
not necessary to conduct this additional processing.
More discussion on further improving the efficiency of {\sc HistApprox} can be
found in our technical report~\cite{TR}.

\section{Experiments}
\label{sec:experiments}

In this section, we validate the performance of our methods on various real-world
interaction datasets.
A summary of these datasets is given in Table~\ref{tab:data}.

\begin{table}[htp]
\centering
\caption{Summary of interaction datasets}
\label{tab:data}
\begin{tabular}{|c|r|r|}
\hline
{\bf interaction dataset} & {\bf \# of nodes}   & {\bf \# of interactions} \\
\hline
Brightkite (users/places) & $51,406/772,966$    & $4,747,281$              \\
Gowalla (users/places)    & $107,092/1,280,969$ & $6,442,892$              \\
\hline
Twitter-Higgs             & $304,198$           & $555,481$                \\
Twitter-HK                & $49,808$            & $2,930,439$              \\
\hline
StackOverflow-c2q         & $1,627,635$         & $13,664,641$             \\
StackOverflow-c2a         & $1,639,761$         & $17,535,031$             \\
\hline
\end{tabular}
\end{table}

\subsection{Interaction Datasets}
\label{sec:dataset}

\bullethdr{LBSN Check-in Interactions}~\cite{snap}.
Brightkite and Gowalla are two location based online social networks (LBSNs) where
users can check in places.
A check-in record is viewed as an interaction between a user and a place.
If user $u$ checked in a place $y$ at time $t$, we denote this interaction by
$\langle {y,u,t} \rangle$.
Because $\langle {y,u,t} \rangle$ implies that place $y$ attracts user $u$ to
check in, it thus reflects $y$'s influence on $u$.
In this particular example, a place's influence (or {\em popularity}), is
equivalent to the number of distinct users who checked in the place.
Our goal is to maintain $k$ most popular places at any time.

\bullethdr{Twitter Re-tweet/Mention Interactions}~\cite{snap,Zhao2015}.
In Twitter, a user $v$ can re-tweet another user $u$'s tweet, or mention another
user $u$ (i.e., @$u$).
We denote this interaction by $\langle {u,v,t} \rangle$, which reflects $u$'s
influence on $v$ at time $t$.
We use two Twitter re-tweet/mention interaction datasets Twitter-Higgs and
Twitter-HK.
Twitter-Higgs dataset is built after monitoring the tweets related to the
announcement of the discovery of a new particle with the features of the elusive
Higgs boson on 4th July 2012.
The detailed description of this dataset is given in~\cite{snap}.
Twitter-HK dataset is built after monitoring the tweets related to Umbrella
Movement happened in Hong Kong from September to December in 2014.
The detailed collection method and description of this dataset is given
in~\cite{Zhao2015}.
Our goal is to maintain $k$ most influential users at any time.

\bullethdr{Stack Overflow Interactions}~\cite{snap}.
Stack Overflow is an online question and answer website where users can ask
questions, give answers, and comment on questions/answers.
We use the comment on question interactions (StackOverflow-c2q) and comment on
answer interactions (StackOverflow-c2a).
In StackOverflow-c2q, if user $v$ comments on user $u$'s question at time $t$, we
create an interaction $\langle u,v,t \rangle$ (which reflects $u$'s influence on
$v$ because $u$ attracts $v$ to answer his question).
Similarly, in StackOverflow-c2a, if user $v$ comments on user $u$'s answer at time
$t$, we create an interaction $\langle u,v,t \rangle$.
Our goal is to maintain $k$ most influential users at any time.

\subsection{Settings}

We assign each interaction a lifetime sampled from a geometric distribution
$\mathit{Pr}(l_\tau(e)=l)\propto (1-p)^{l-1}p$ truncated at the maximum lifetime
$L$.
As discussed in Example~\ref{eg:pdn}, this particular lifetime assignment actually
means that we forget each existing interaction with probability $p$.
Here $p$ controls the skewness of the distribution, i.e., for larger $p$, more
interactions tend to have short lifetimes.
We emphasize that other lifetime assignment methods are also allowed in our
framework.

Each interaction will be input to our algorithms sequentially according to their
timestamps and we assume one interaction arrives at a time.

Note that in the following experiments, all of our algorithms are {\bf implemented
in series}, on a laptop with a $2.66$GHz Intel Core I3 CPU running Linux Mint
19.
We refer to our technical report~\cite{TR} for more discussion on implementation
details.

\subsection{Baselines}

\bullethdr{Greedy}~\cite{Nemhauser1978}.
We run a greedy algorithm on $G_t$ which chooses a node with the maximum marginal
gain in each round, and repeats $k$ rounds.
We apply the {\em lazy evaluation} trick~\cite{Minoux1978} to further reduce the
number of oracle calls\footnote{An {\em oracle call} refers to an evaluation of
$f_t$.}.

\bullethdr{Random.}
We randomly pick a set of $k$ nodes from $G_t$ at each time $t$.

\bullethdr{DIM}\footnote{\url{https://github.com/todo314/dynamic-influence-analysis}}~\cite{Ohsaka2016}.
DIM updates the index (or called sketch) dynamically and supports handling fully
dynamical networks, i.e., edges arrive, disappear, or change the diffusion
probabilities.
We set the parameter $\beta=32$ as suggested in~\cite{Ohsaka2016}.

\bullethdr{IMM}\footnote{\url{https://sourceforge.net/projects/im-imm/}}~\cite{Tang2015}.
IMM is an index-based method that uses martingales, and it is designed for
handling {\em static graphs}.
We set the parameter $\epsilon=0.3$.

\bullethdr{TIM+}\footnote{\url{https://sourceforge.net/projects/timplus/}}~\cite{Tang2014}.
TIM+ is an index-based method with the two-phase strategy, and it is designed for
handling {\em static graphs}.
We set the parameter $\epsilon=0.3$.

Note that the first five methods can be used to address our problem (even though
some of them are obviously not efficient); while SIM cannot be used to address our
general problem as it is only designed for the sliding-window model.
Some of the above methods assume that the diffusion probabilities on each edge in
$G_t$ are given in advance.
Strictly, these probabilities should be learned use complex inference
methods~\cite{Saito2008,Goyal2010,Kutzkov2013}, which will further harm the
efficiency of existing methods.
Here, for simplicity, if node $u$ imposed $x$ interactions on node $v$ at time
$t$, we assign edge $(u,v)$ a diffusion probability $p_{uv}=2/(1+\exp(-0.2x))-1$.

When evaluating computational efficiency, we prefer to use the {\em number of
oracle calls}.
Because an oracle call is the most expensive operation in each algorithm, and,
more importantly, the number of oracle calls is independent of algorithm
implementation (e.g., parallel or series) and experimental hardwares.
The less the number of oracle calls an algorithm uses, the faster the algorithm
is.

\subsection{Results}
\label{sec:results}

\header{Comparing \textsc{BasicReduction} with \textsc{HistApprox}.}
We first compare the performance of {\sc HistApprox} and {\sc BasicReduction}.
Our aim is to understand how lifetime distribution affects the efficiency of {\sc
BasicReduction}, and how significantly {\sc HistApprox} improves the efficiency
upon {\sc BasicReduction}.

We run the two methods on the two LBSN datasets for $5000$ time steps, and
maintain $k=10$ places at each time step.
We vary $p$ from $0.001$ to $0.008$.
For each $p$, we average the solution value and number of oracle calls along time,
and show the results in Fig.~\ref{fig:cmp_basic_hist}.

\begin{figure}[htp]
\centering
\subfloat[Solution value (Brightkite)]{%
\includegraphics[width=.5\linewidth]{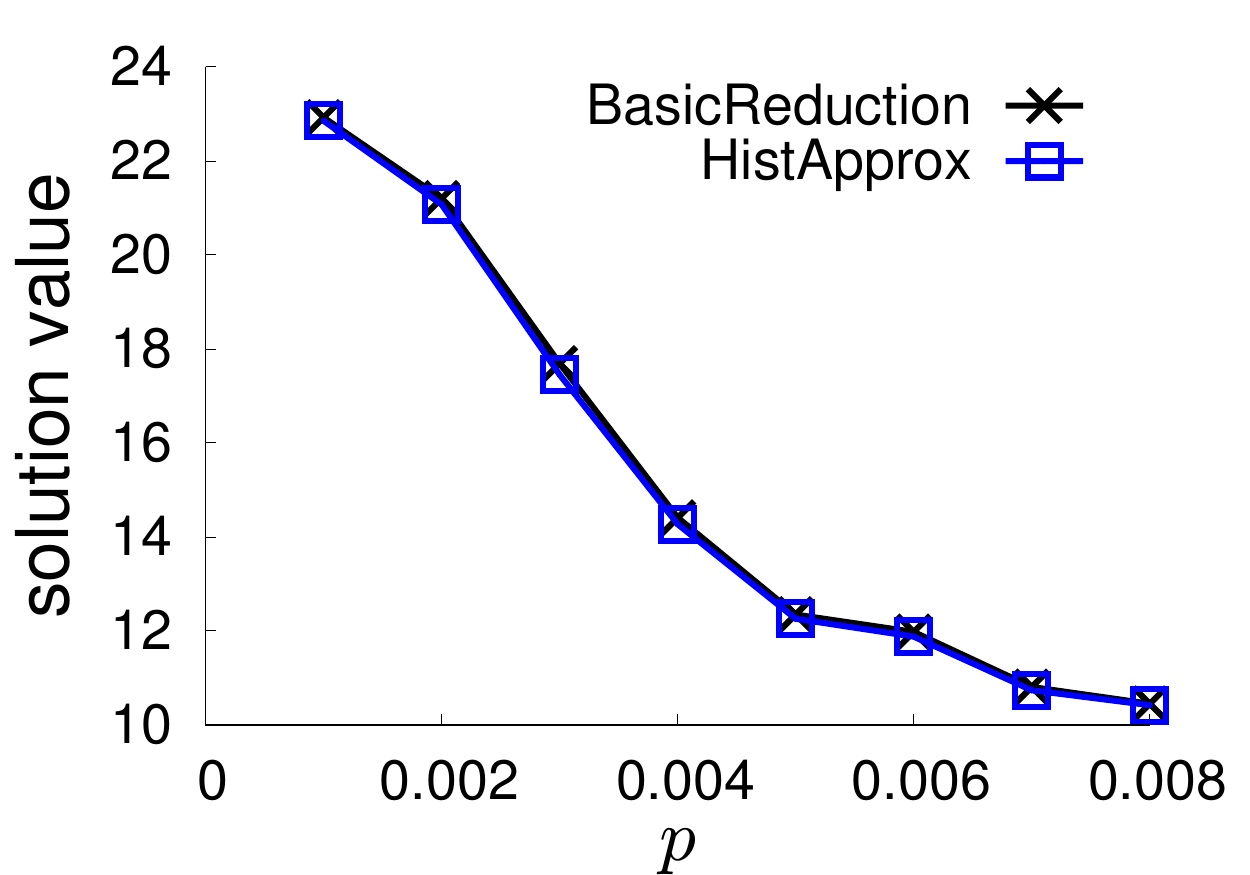}}
\subfloat[Efficiency (Brightkite)]{%
\includegraphics[width=.5\linewidth]{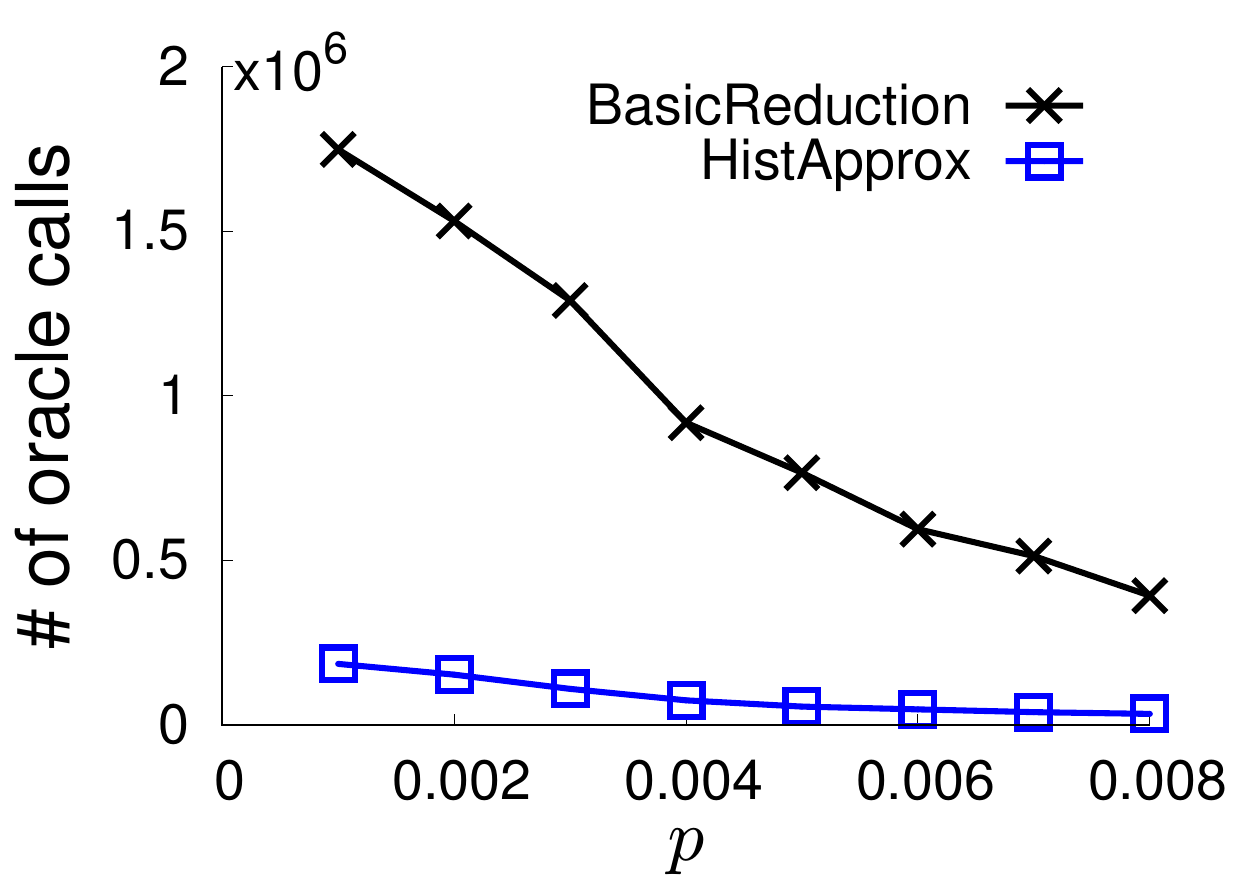}} \\
\subfloat[Solution value (Gowalla)]{%
\includegraphics[width=.5\linewidth]{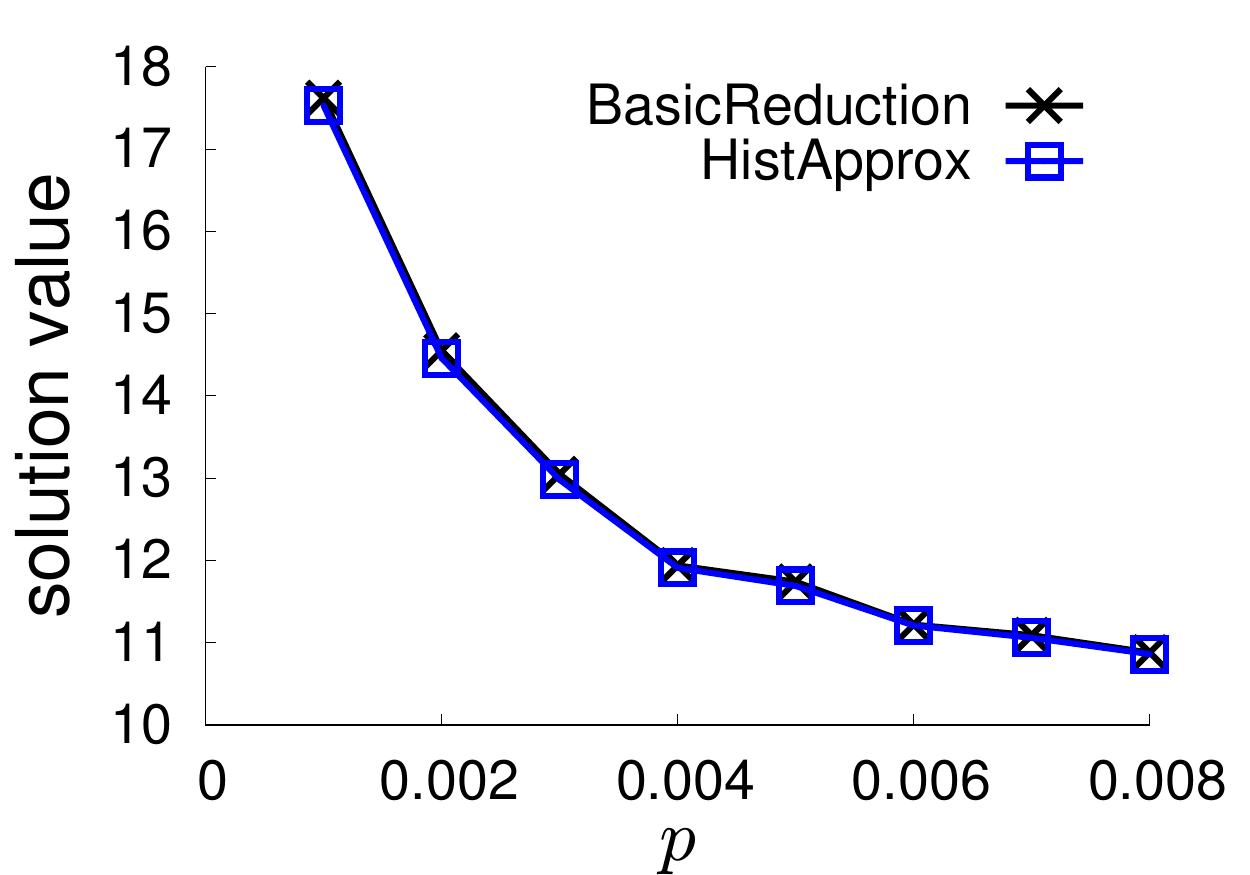}}
\subfloat[Efficiency (Gowalla)]{%
\includegraphics[width=.5\linewidth]{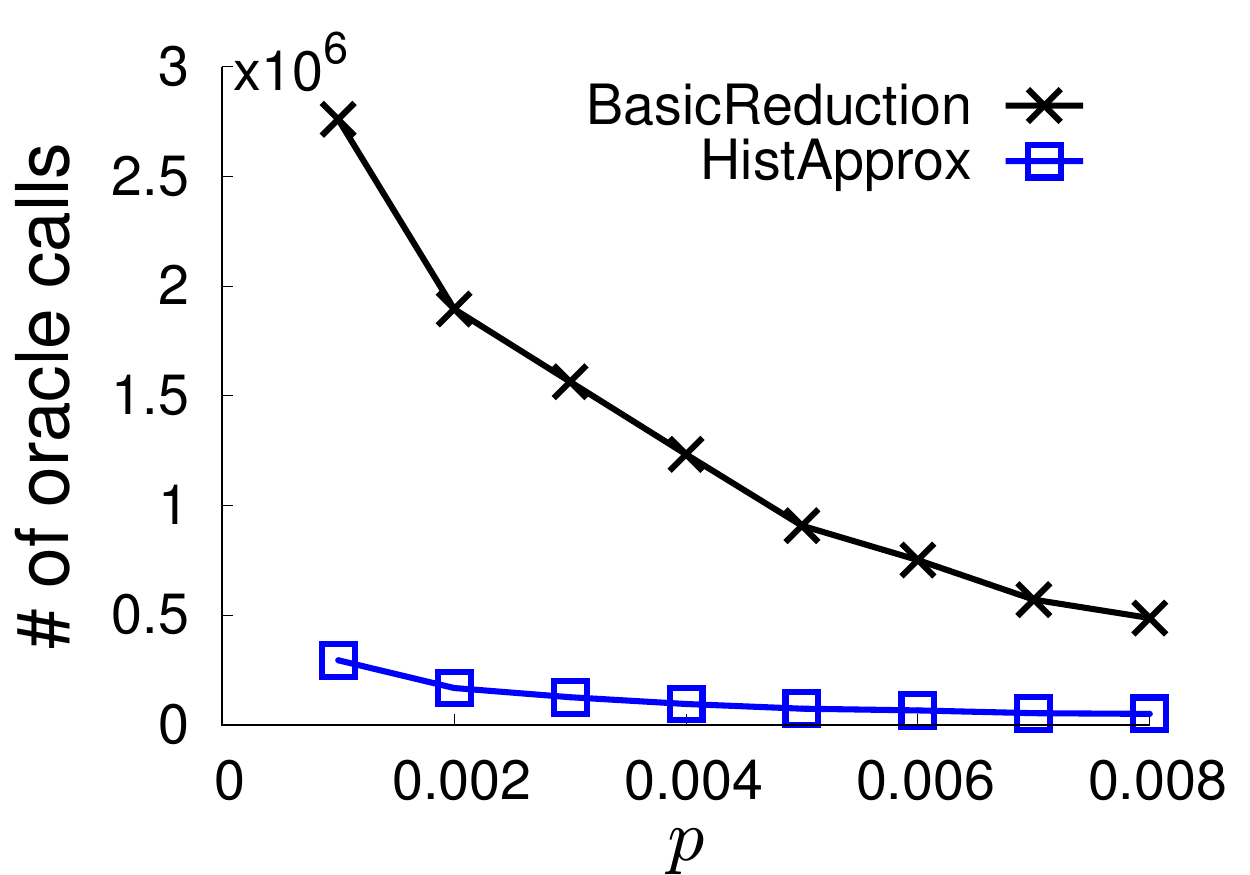}}
\caption{Comparing {\sc BasicReduction} with {\sc HistApprox}. ($\epsilon=0.1,
k=10, L=1000$, each point is averaged over $5000$ time steps)}
\label{fig:cmp_basic_hist}
\end{figure}

First, we observe that the solution value of {\sc HistApprox} is very close to
{\sc BasicReduction}.
The solution value ratio of {\sc HistApprox} to {\sc BasicReduction} is larger
than $0.98$.
Second, we observe that the number of oracle calls of {\sc BasicReduction}
decreases as $p$ increases, i.e., more interactions tend to short lifetimes.
This observation consists with our analysis that processing edges with long
lifetimes is the main bottleneck of {\sc BasicReduction}.
Third, {\sc HistApprox} needs much less oracle calls than {\sc BasicReduction},
and the ratio of the number of oracle calls of {\sc HistApprox} to {\sc
BasicReduction} is less than $0.1$.

This experiment demonstrates that {\sc HistApprox} outputs solutions with value
very close to {\sc BasicReduction} but is much efficient than {\sc
BasicReduction}.

\header{Evaluating Solution Quality of \textsc{HistApprox} Over Time.}
We conduct more in-depth analysis of {\sc HistApprox} in the following
experiments.
First, we evaluate the solution quality of {\sc HistApprox} in comparison with
other baseline methods.

We run {\sc HistApprox} with $\epsilon=0.1, 0.15$ and $0.2$ for $5000$ time steps
on the six datasets, respectively, and maintain $k=10$ nodes at each time step.
We compare the solution value of {\sc HistApprox} with Greedy and Random.
The results are depicted in Fig.~\ref{fig:quality}.

\begin{figure}[t]
\centering
\subfloat[Brightkite]{%
\includegraphics[width=.5\linewidth]{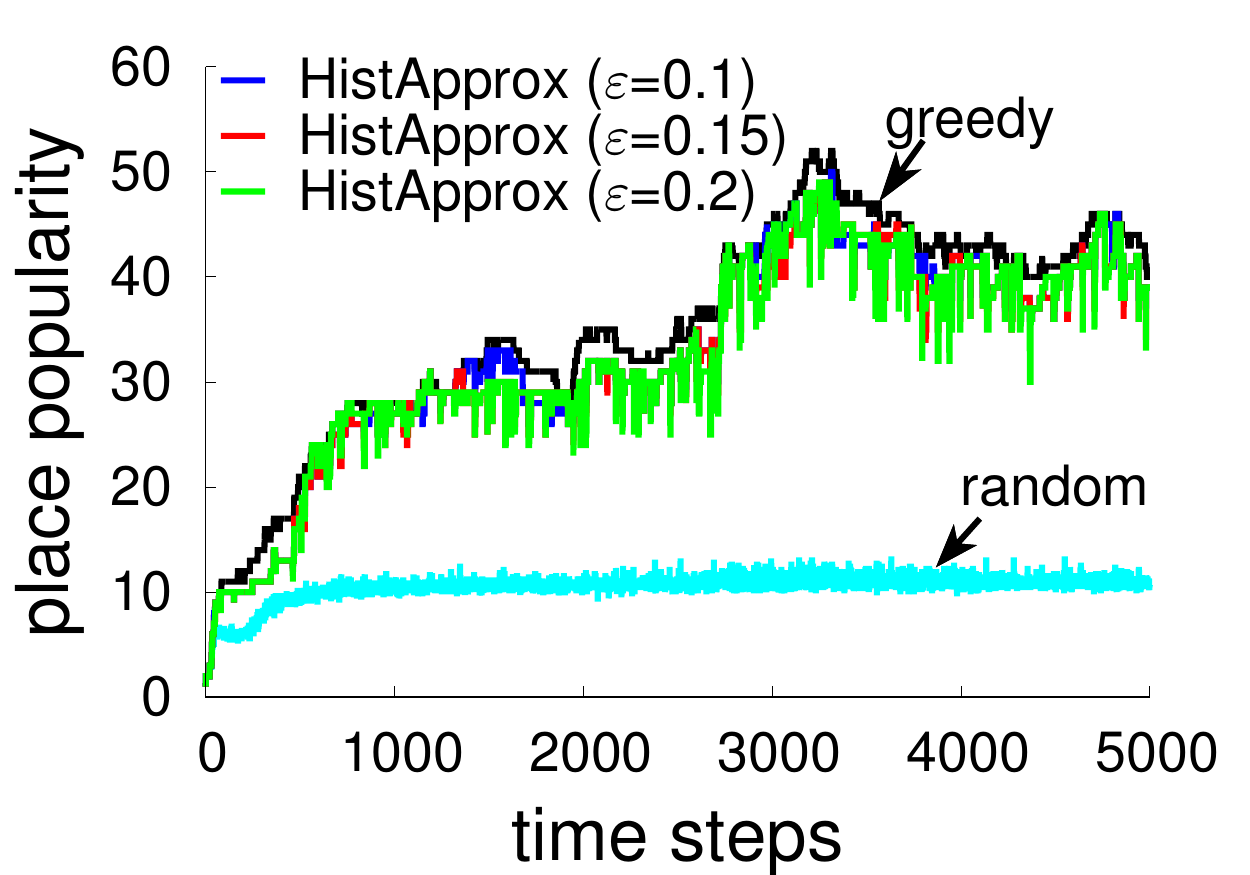}}
\subfloat[Gowalla]{%
\includegraphics[width=.5\linewidth]{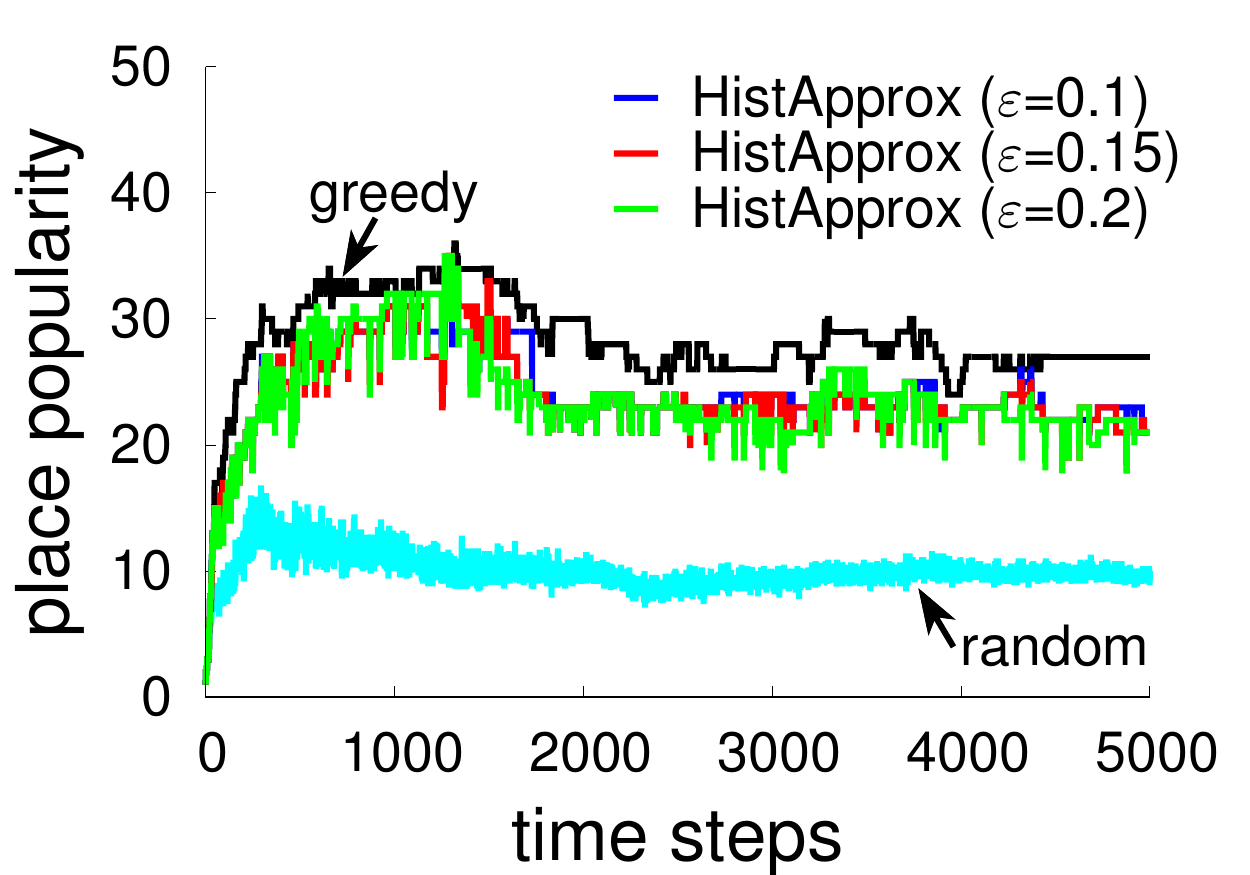}}\\
\subfloat[Twitter-Higgs]{%
\includegraphics[width=.5\linewidth]{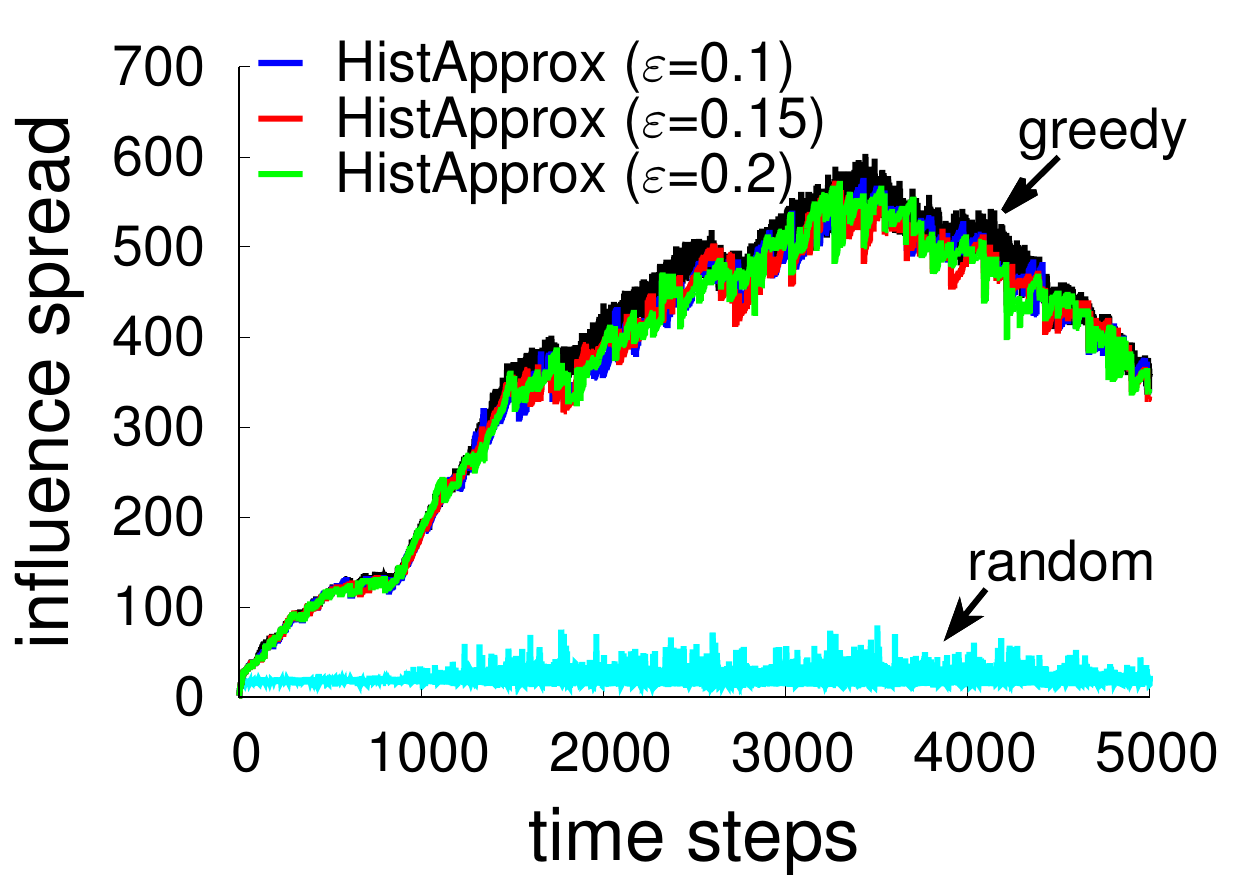}}
\subfloat[Twitter-HK]{%
\includegraphics[width=.5\linewidth]{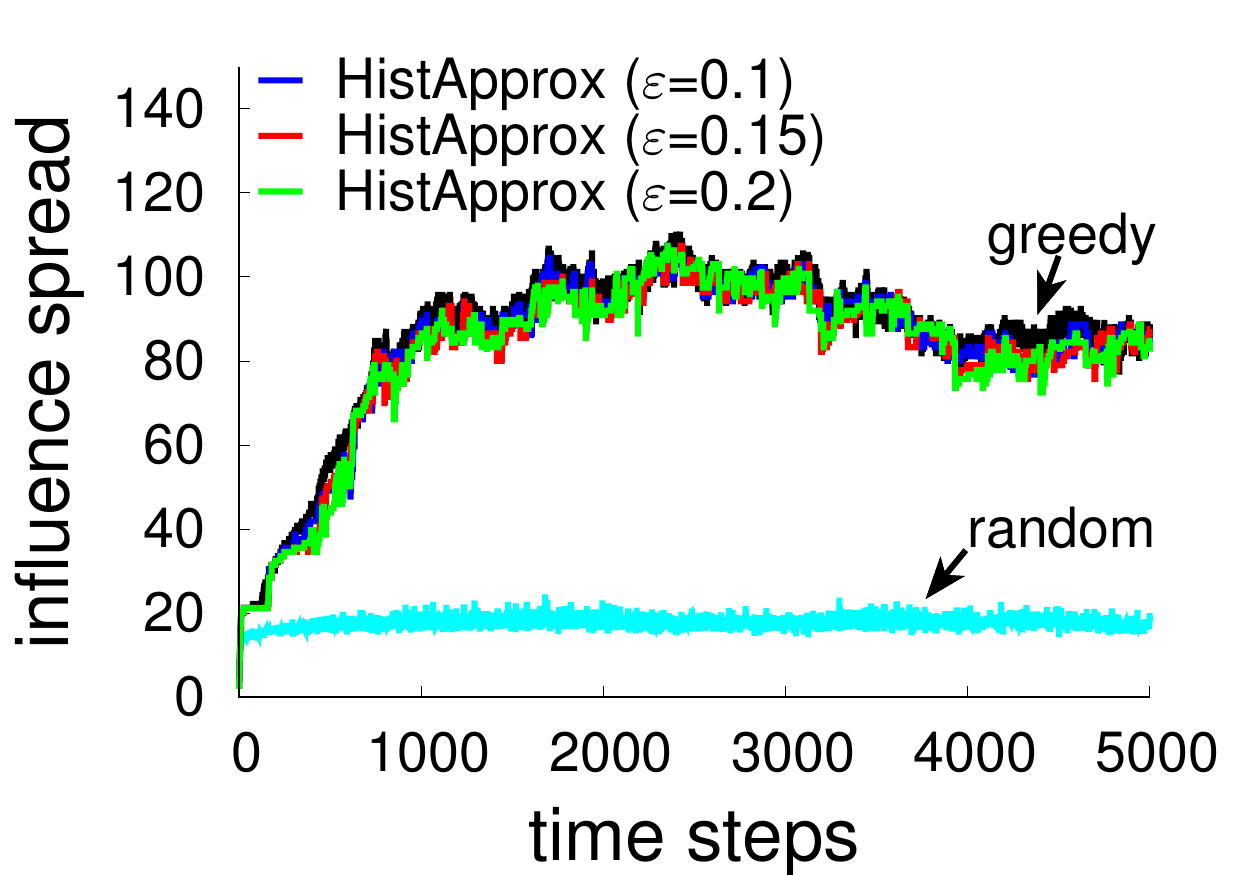}}\\
\subfloat[StackOverflow-c2q]{%
\includegraphics[width=.5\linewidth]{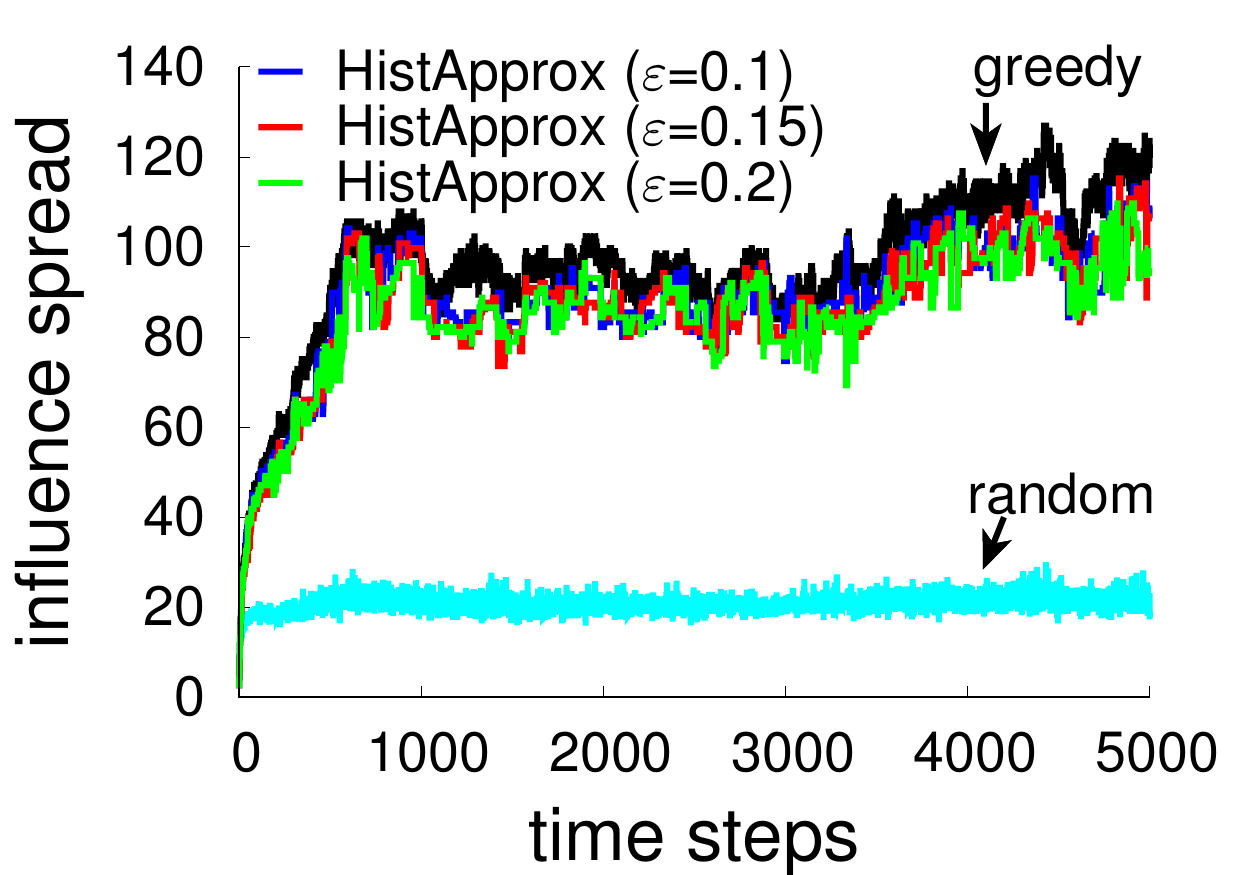}}
\subfloat[StackOverflow-c2a]{%
\includegraphics[width=.5\linewidth]{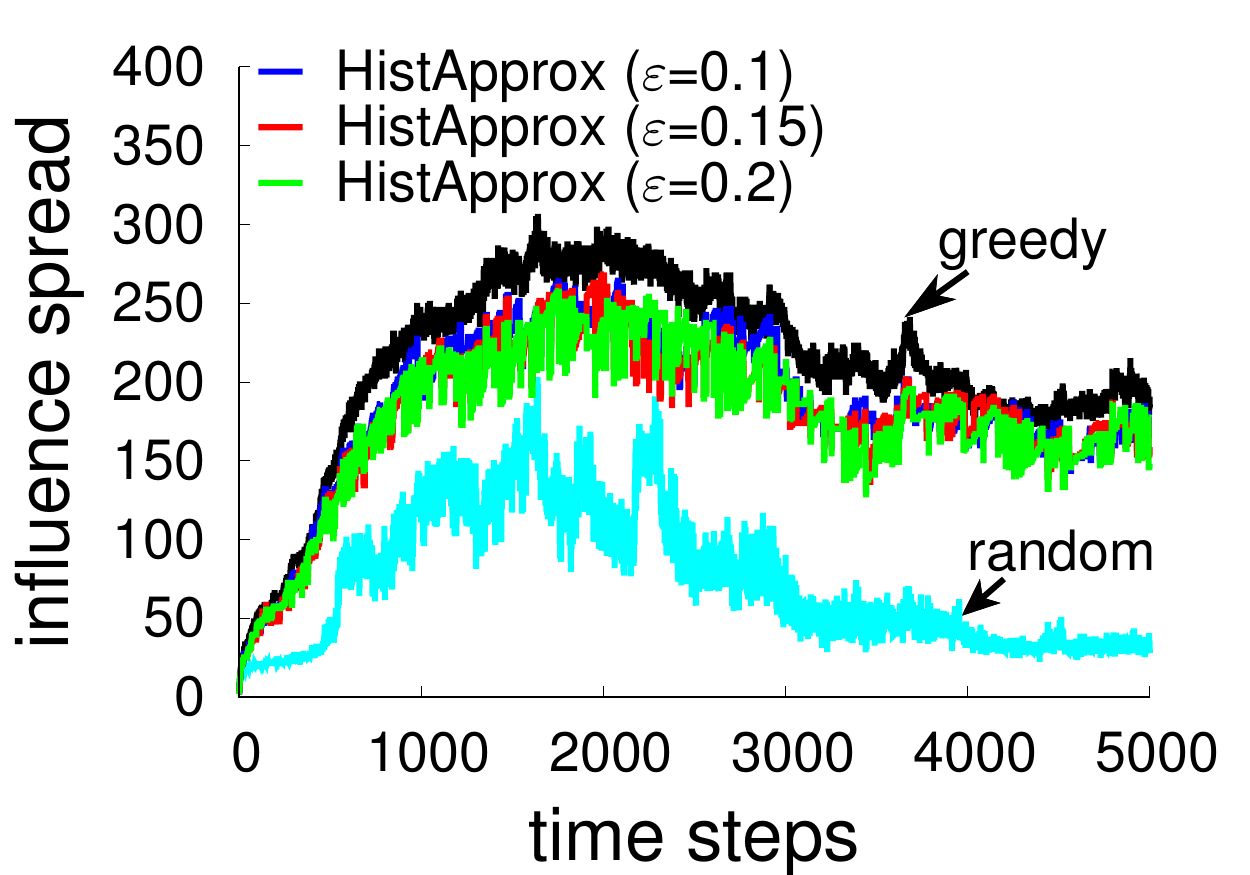}}
\caption{Solution value over time (higher is better).
$k=10, L=10K$}
\label{fig:quality}
\end{figure}

We observe that on all of the six datasets, Greedy achieves the highest solution
value, and Random achieves the lowest.
The solution value of {\sc HistApprox} is very close to Greedy, and is much better
than Random.
To clearly judge $\epsilon$'s effect on solution quality, we calculate the ratio
of solution value of {\sc HistApprox} to Greedy, and average the ratios along
time, and show the results in Fig.~\ref{fig:avg_value}.
It is clear to see that when $\epsilon$ increases, the solution value decreases in
general.

\begin{figure}[htp]
\centering
\includegraphics[width=.65\linewidth]{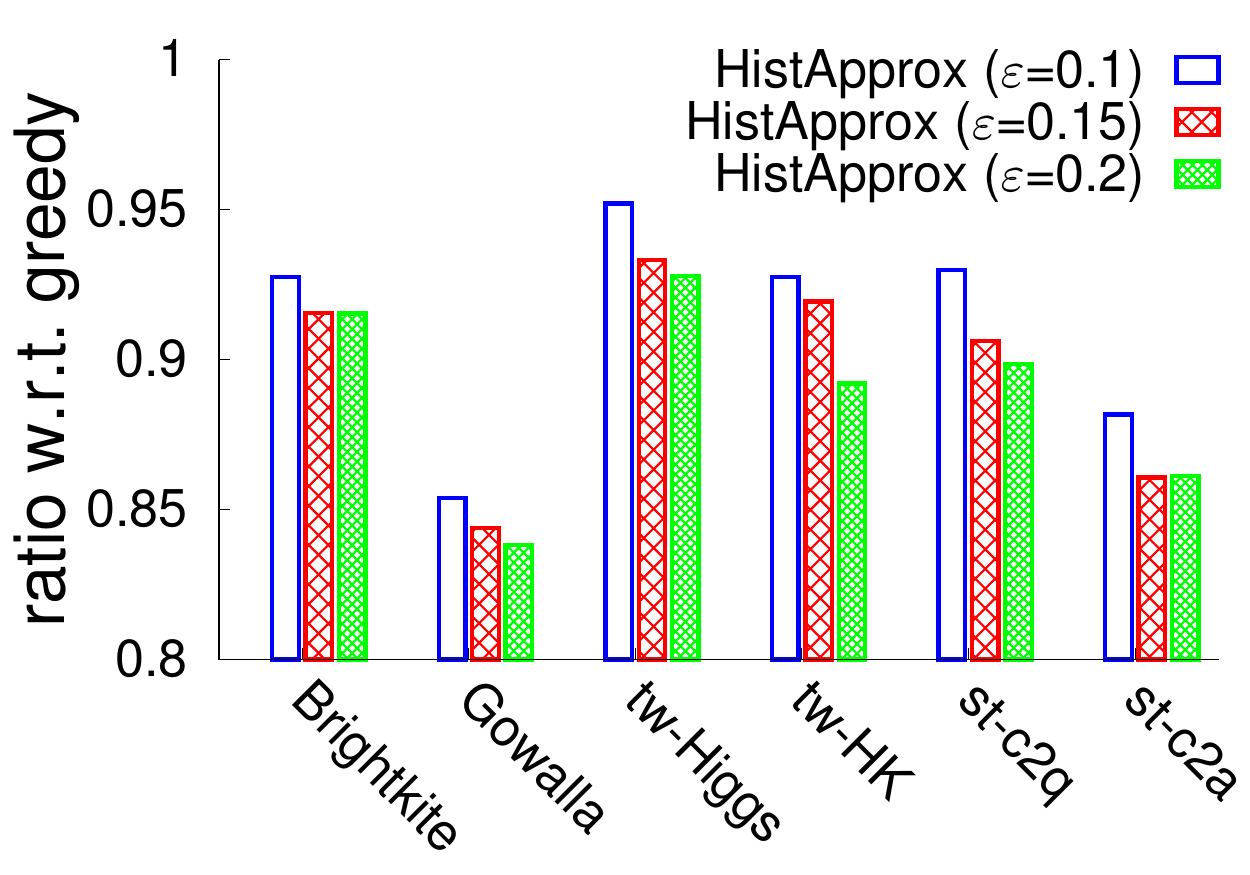}
\caption{Ratio of solution value averaged along time (higher is better)}
\label{fig:avg_value}
\end{figure}

These experiments demonstrate that {\sc HistApprox} achieves comparable solution
quality with Greedy, and smaller $\epsilon$ helps improving {\sc HistApprox}'s
solution quality.

\header{Evaluating Computational Efficiency of \textsc{HistApprox} Over Time.}
Next we compare {\sc HistApprox}'s computational efficiency with Greedy.
When calculating the number of oracle calls of Greedy, we use the lazy
evaluation~\cite{Minoux1978} trick, which is a useful heuristic to reduce Greedy's
number of oracle calls.

We run {\sc HistApprox} with $\epsilon=0.1, 0.15$ and $0.2$ on the six datasets,
respectively, and calculate the ratio of cumulative number of oracle calls at each
time step of {\sc HistApprox} to Greedy.
The results are depicted in Fig.~\ref{fig:efficiency}.

\begin{figure}[t]
\centering
\subfloat[Brightkite]{%
\includegraphics[width=.5\linewidth]{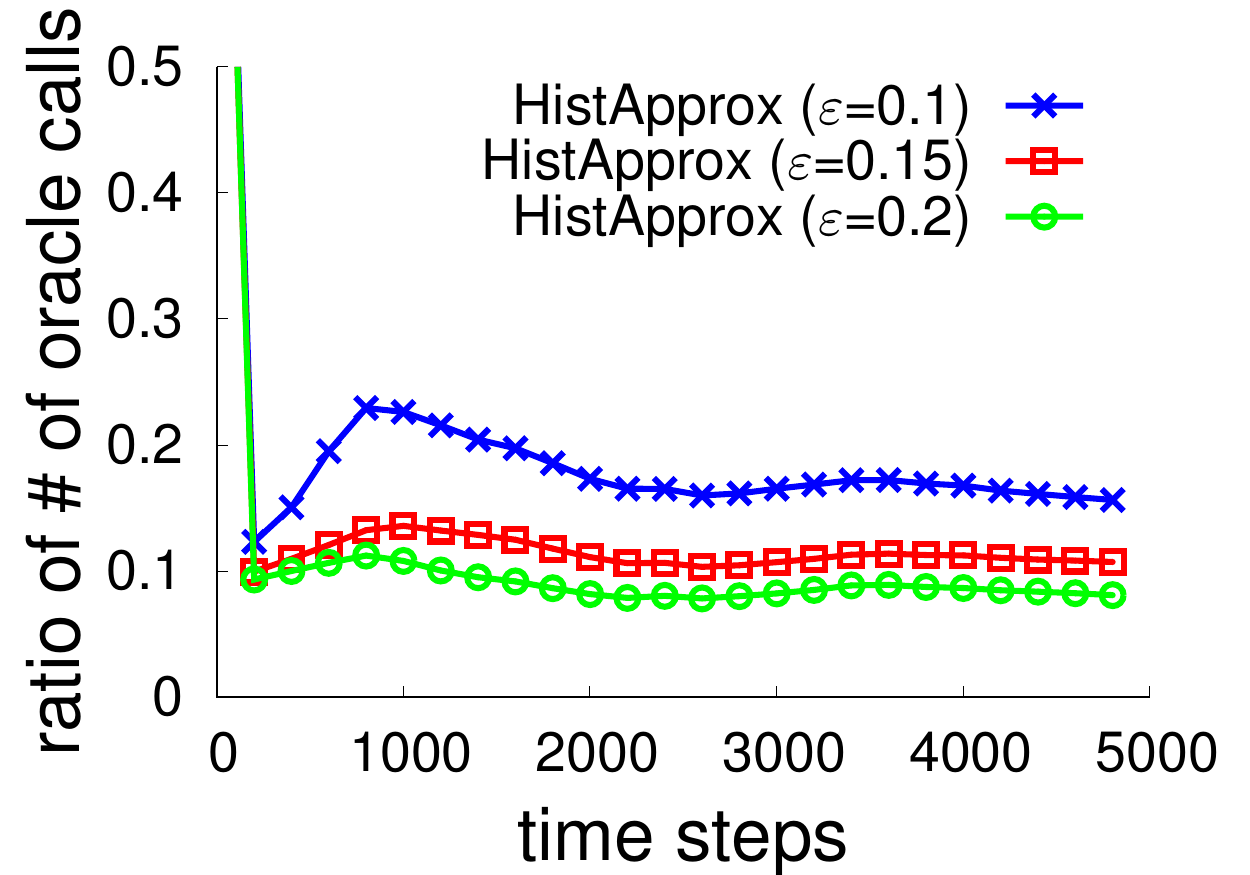}}
\subfloat[Gowalla]{%
\includegraphics[width=.5\linewidth]{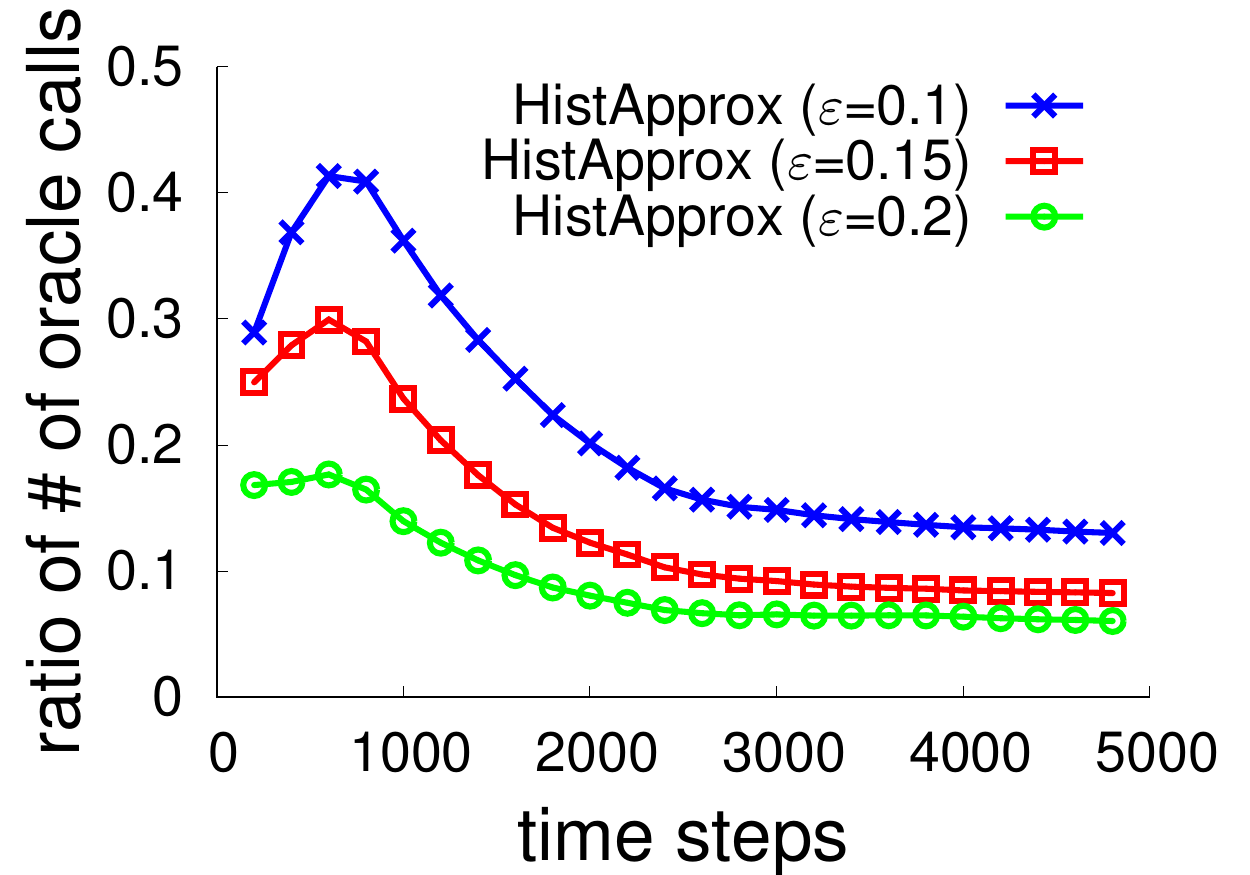}}\\
\subfloat[Twitter-Higgs]{%
\includegraphics[width=.5\linewidth]{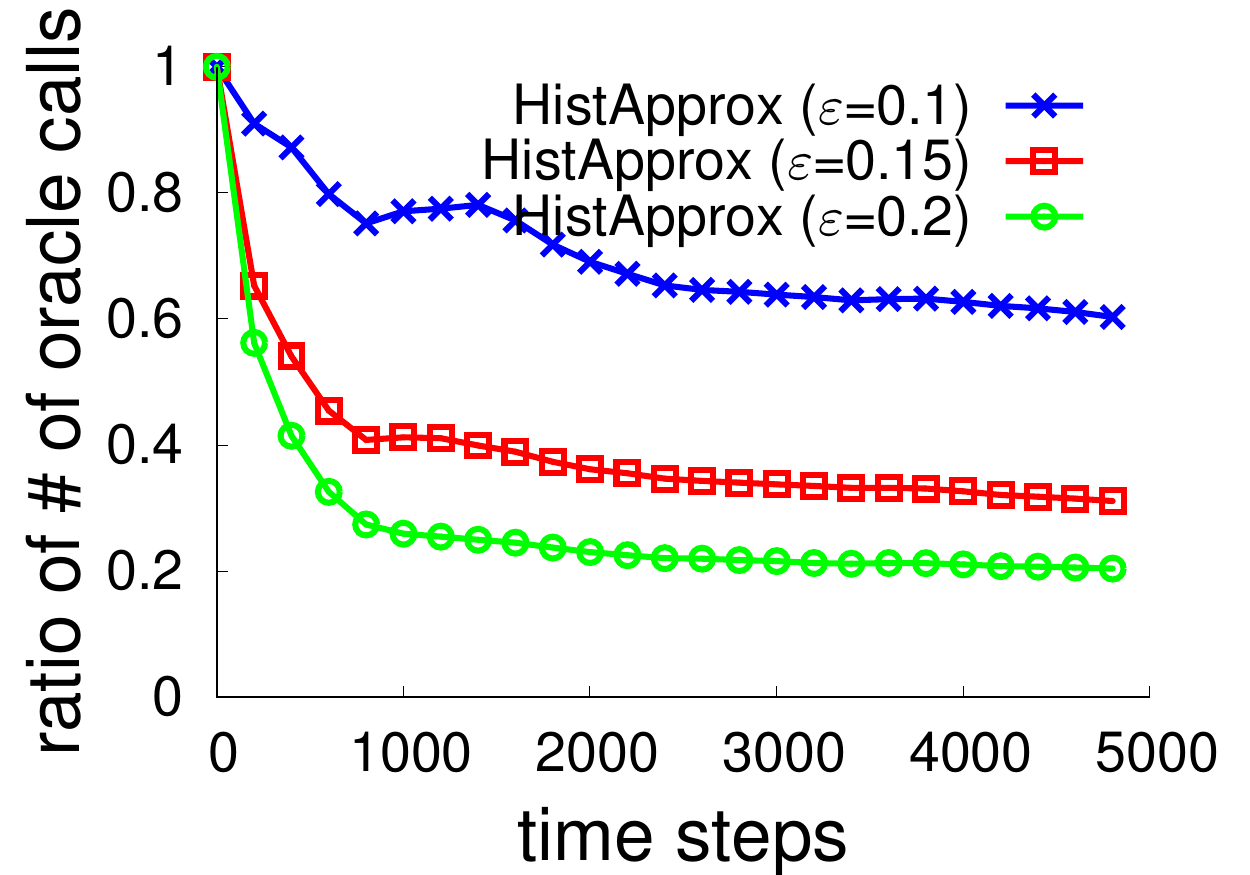}}
\subfloat[Twitter-HK]{%
\includegraphics[width=.5\linewidth]{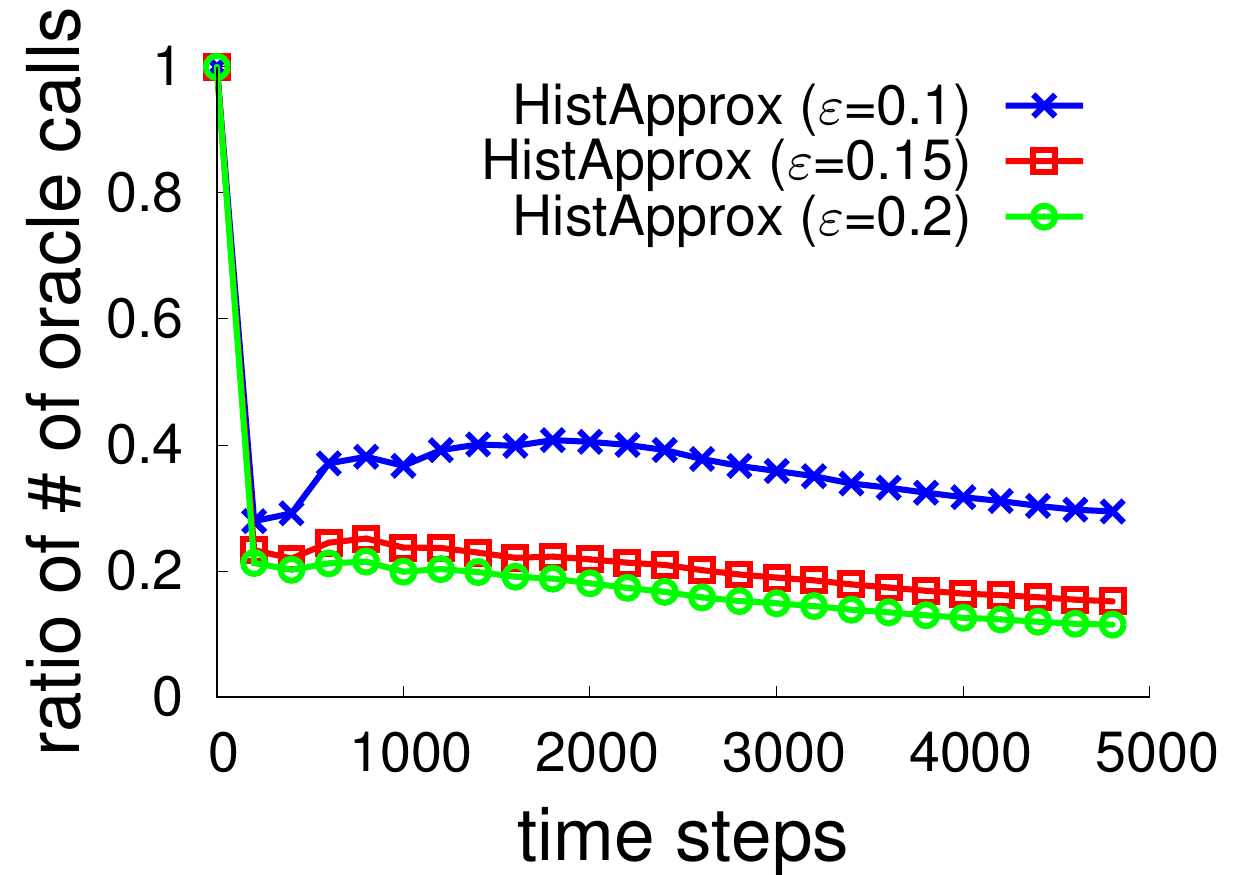}}\\
\subfloat[StackOverflow-c2q]{%
\includegraphics[width=.5\linewidth]{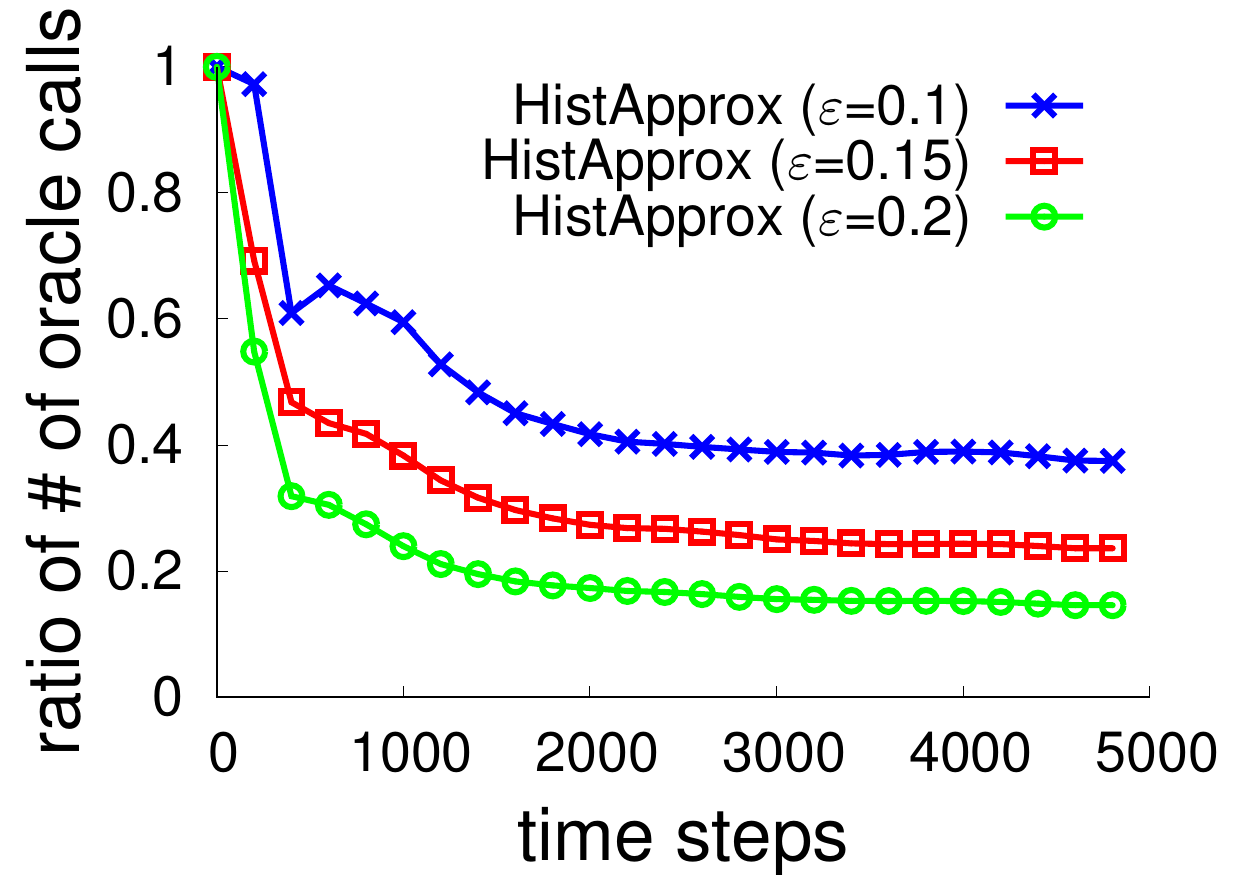}}
\subfloat[StackOverflow-c2a]{%
\includegraphics[width=.5\linewidth]{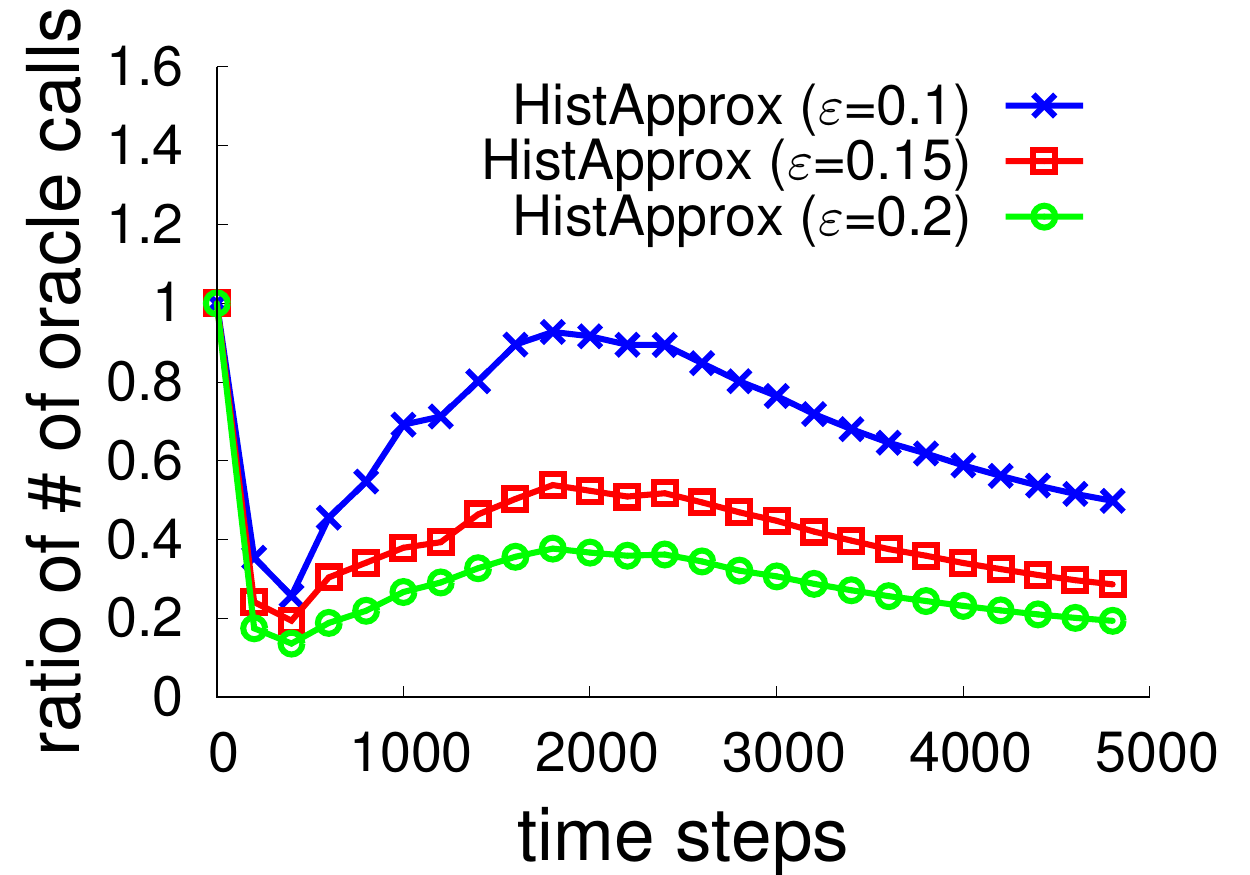}}
\caption{Number of oracle calls ratio (lower is better).
$k=10, L=10K$}
\label{fig:efficiency}
\end{figure}

On all of the six datasets, {\sc HistApprox} uses much less oracle calls than
Greedy.
When $\epsilon$ increases, the number of oracle calls of {\sc HistApprox}
decreases further.
For example, with $\epsilon=0.2$, {\sc HistApprox} uses $5$ to $15$ times less
oracle calls than Greedy.

This experiment demonstrates that {\sc HistApprox} is more efficient than Greedy.
Combining with the previous results, we conclude that $\epsilon$ can trade off
between solution quality and computational efficiency: larger $\epsilon$ makes
{\sc HistApprox} run faster but also decreases solution quality.

\header{Effects of Parameter $k$ and $L$.}
We evaluate {\sc HistApprox}'s performance using different budgets
$k=10,\ldots,100$ and lifetime upper bounds $L=10K,\ldots,100K$.
For each budget $k$ (and $L$), we run {\sc HistApprox} for $5000$ time steps, and
calculate two ratios: (1) the ratio of {\sc HistApprox}'s solution value to
Greedy's solution value; and (2) the ratio of {\sc HistApprox}'s number of oracle
calls to Greedy's number of oracle calls.
We average these ratios along time, and show the results in
Figs.~\ref{fig:ratio_k}~and~\ref{fig:ratio_l}.

\begin{figure}[tp]
\centering
\subfloat[Brightkite]{%
\includegraphics[width=.5\linewidth]{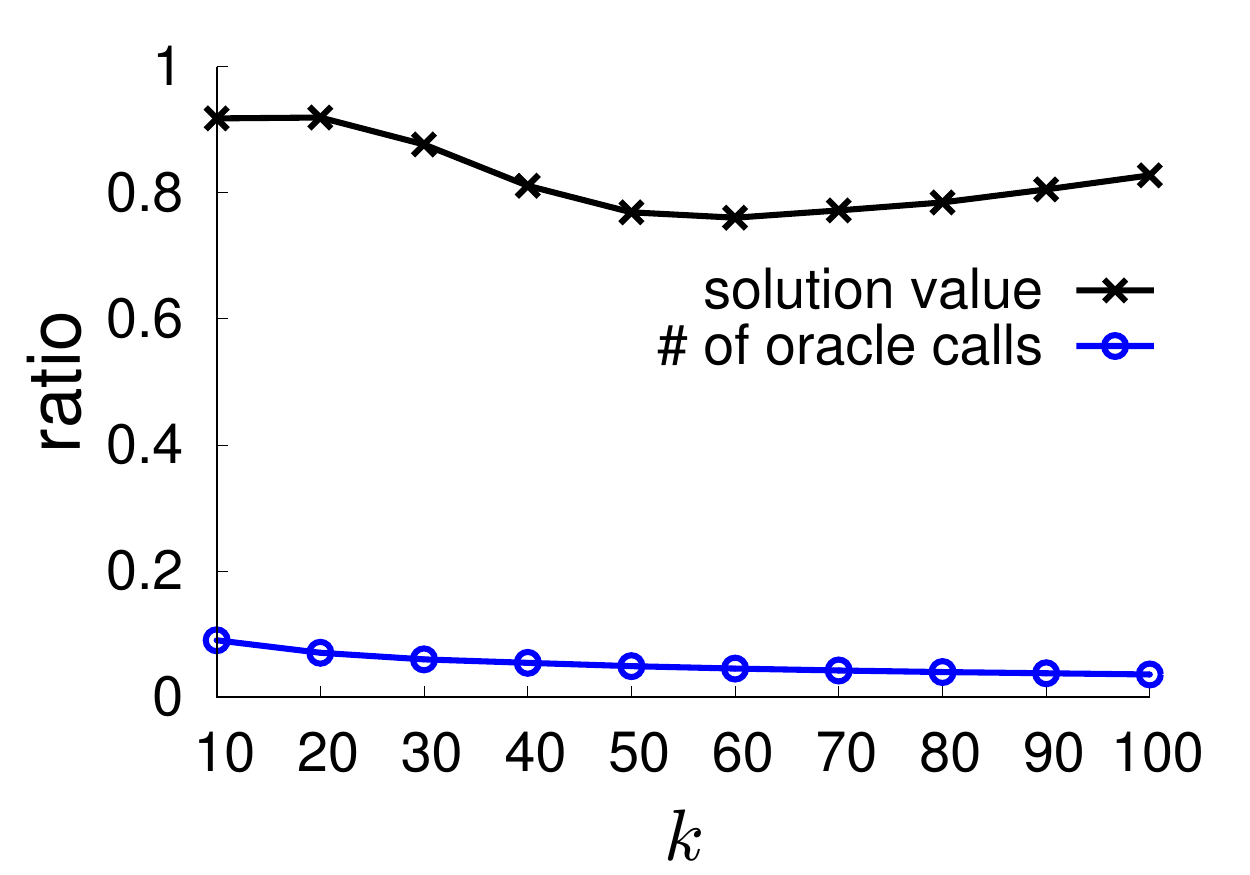}}
\subfloat[Gowalla]{%
\includegraphics[width=.5\linewidth]{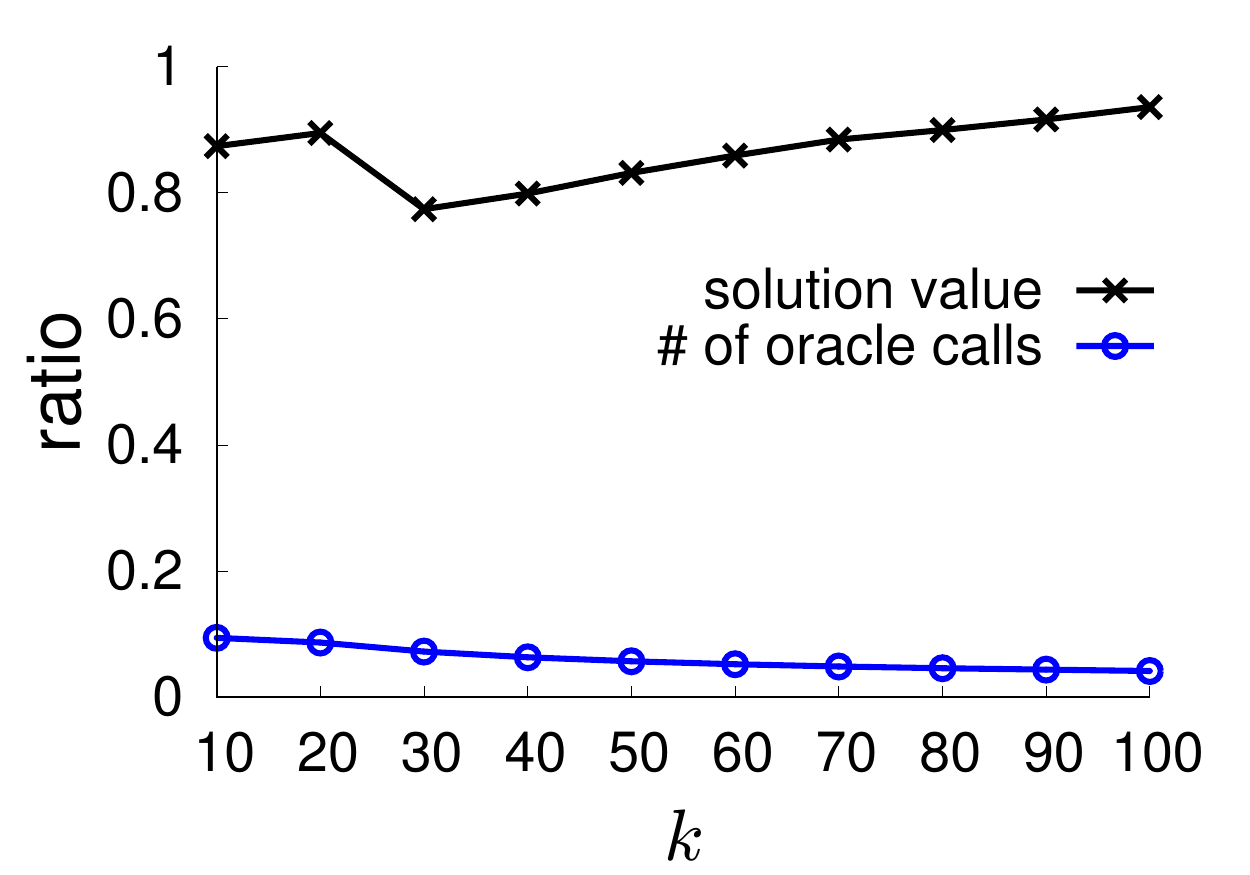}}
\caption{Performance of {\sc HistApprox} w.r.t.~different $k$.
($\epsilon=0.2, L=10K$. Each point is averaged over $5000$ time steps.)}
\label{fig:ratio_k}
\end{figure}

\begin{figure}[tp]
\centering
\subfloat[Brightkite]{%
\includegraphics[width=.5\linewidth]{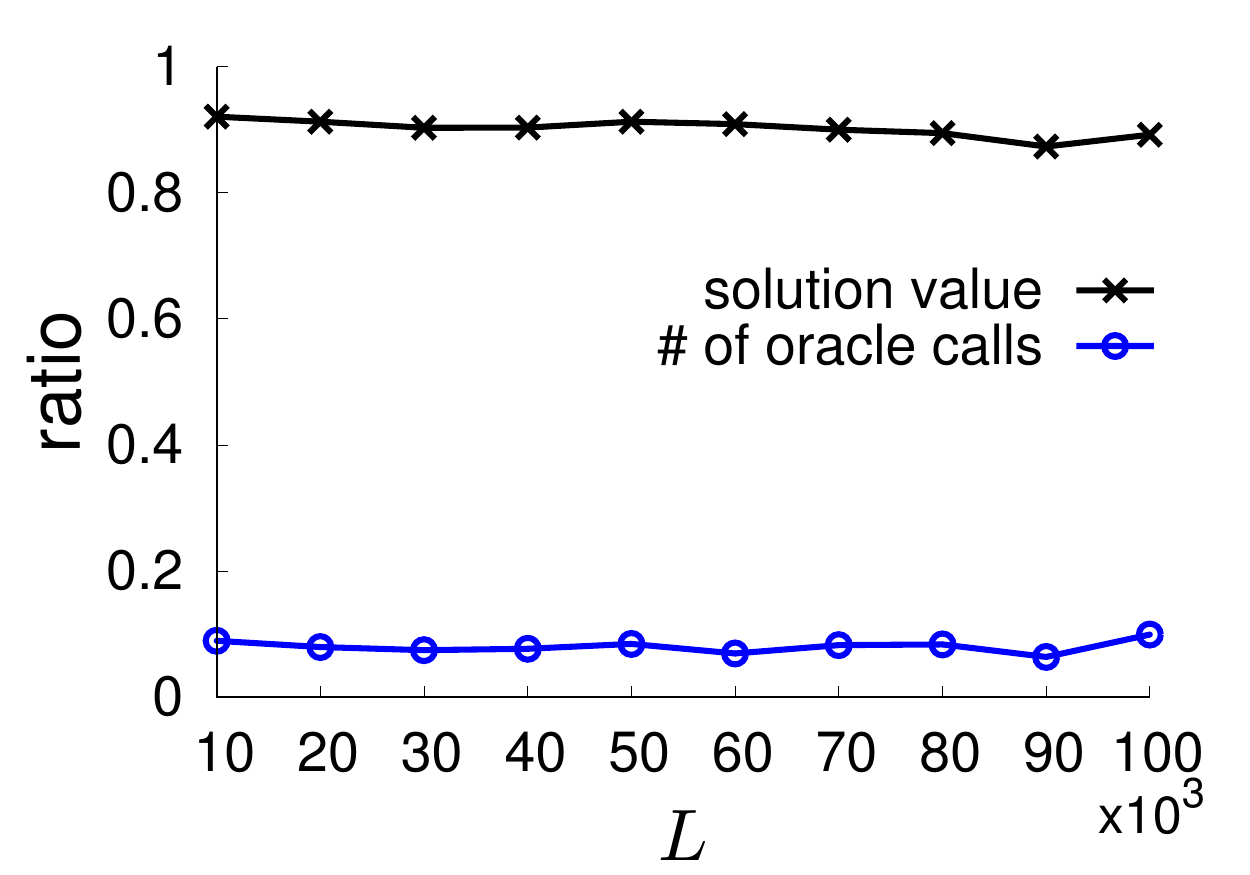}}
\subfloat[Gowalla]{%
\includegraphics[width=.5\linewidth]{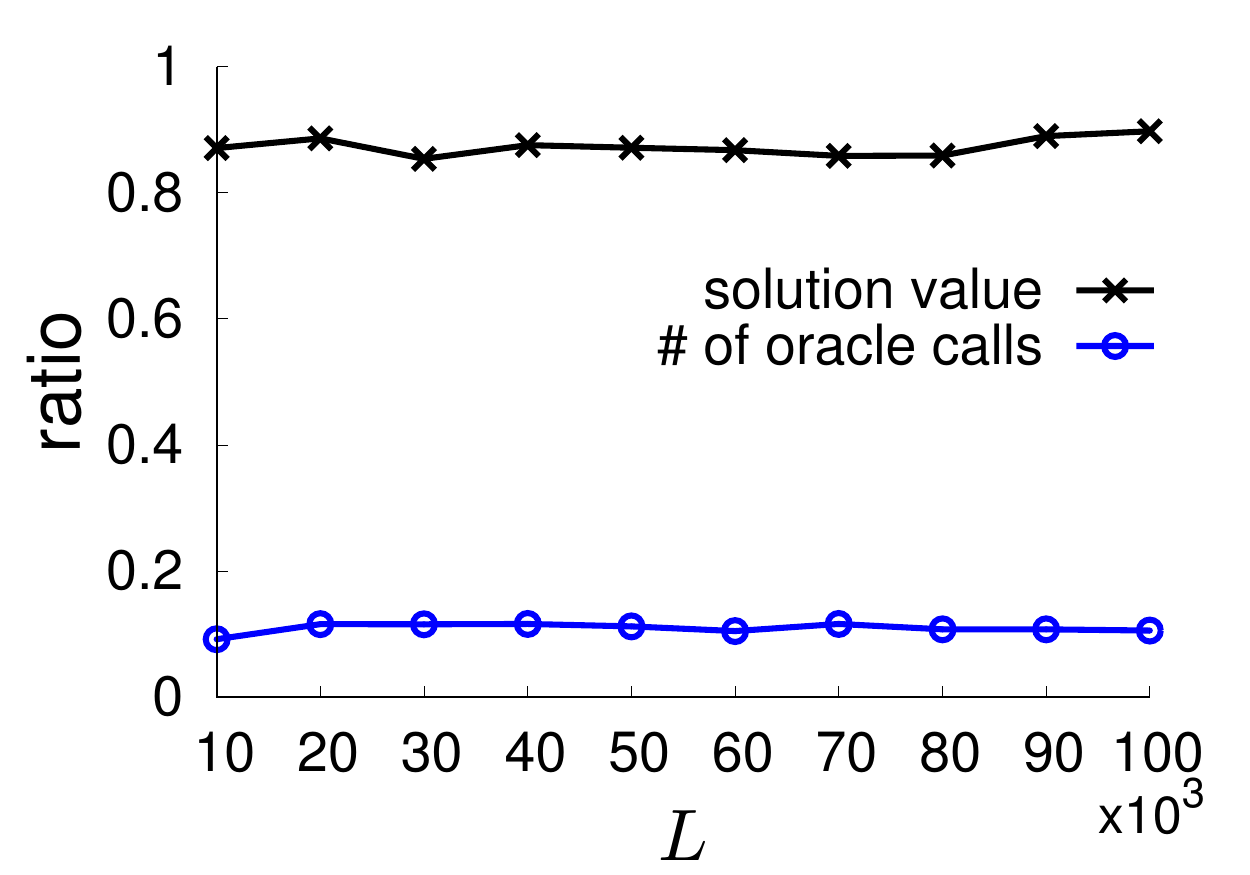}}
\caption{Performance of {\sc HistApprox} w.r.t.~different $L$.
($\epsilon=0.2, k=10$. Each point is averaged over $5000$ time steps.)}
\label{fig:ratio_l}
\end{figure}

We find that under different parameter settings, {\sc HistApprox}'s performance
consists with our previous observations.
In addition, we find that, for larger $k$, the efficiency of {\sc HistApprox}
improves more significantly than Greedy.
This is due to the fact that {\sc HistApprox}'s time complexity increases
logarithmically with $k$; while Greedy's time complexity increases linearly with
$k$.
We also find that $L$ does not affect {\sc HistApprox}'s performance very much.

\header{Performance and Throughput Comparisons.}
Next, we compare the solution quality and throughput with the other baseline
methods.
Since Greedy achieves the highest solution quality among all the methods, we use
Greedy as a reference and show solution value ratio w.r.t.~Greedy when evaluating
solution quality.
For throughput analysis, we will show the maximum stream processing speed (i.e.,
number of processed edges per second) using different methods.

We set budgets $k=10,\ldots,100$.
Lifetimes are sampled from $\textrm{Geo}(0.001)$ with upper bounds
$L=10K,\ldots,100K$ respectively.
We set $\epsilon=0.3$ in \textsc{HistApprox}.
All of the algorithms are ran for $10,000$ time steps.
For throughput analysis, we fix the budget to be $k=10$.
The results are depicted in Figs.~\ref{fig:val_cmp} and~\ref{fig:throughput}.

\begin{figure}[htp]
\centering
\subfloat[Twitter-Higgs]{%
\includegraphics[width=.5\linewidth]{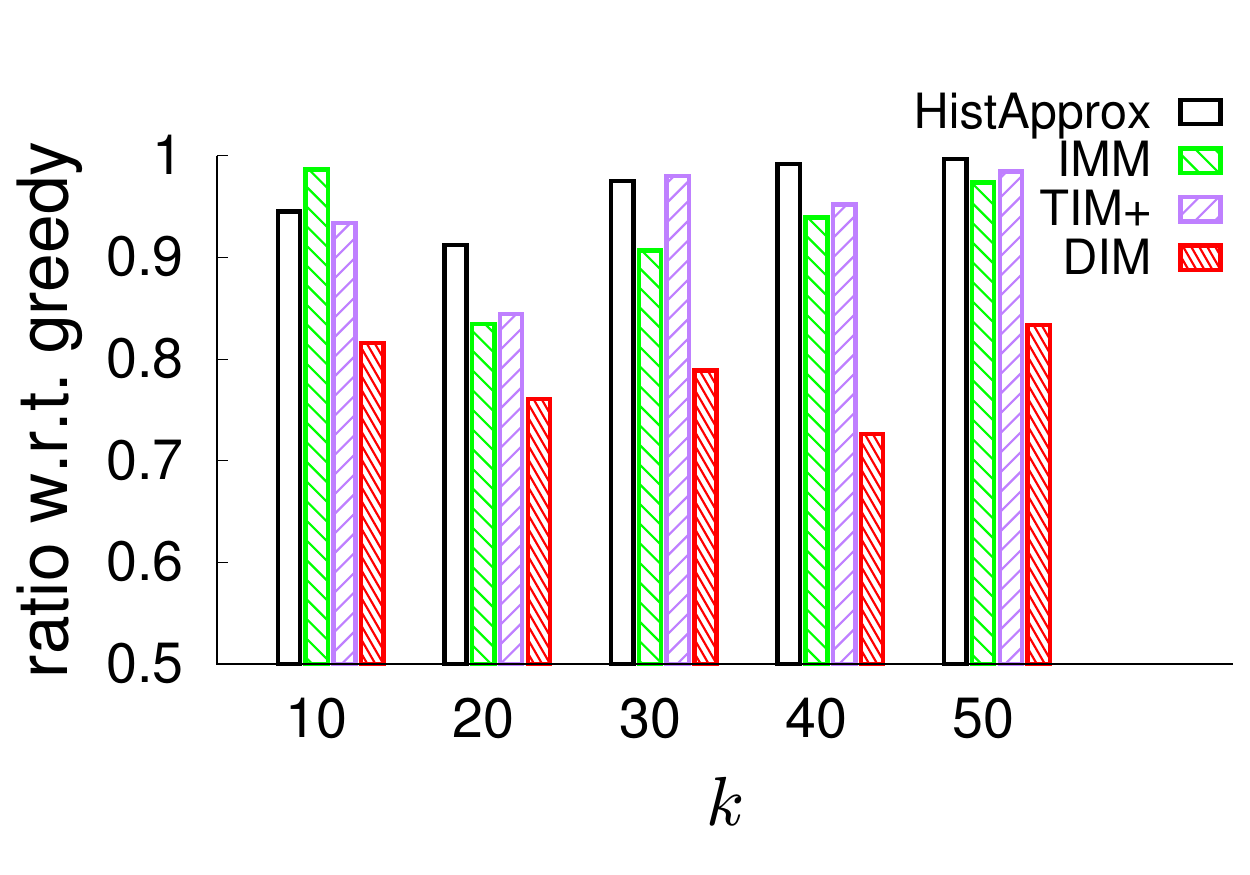}}
\subfloat[Twitter-Higgs]{%
\includegraphics[width=.5\linewidth]{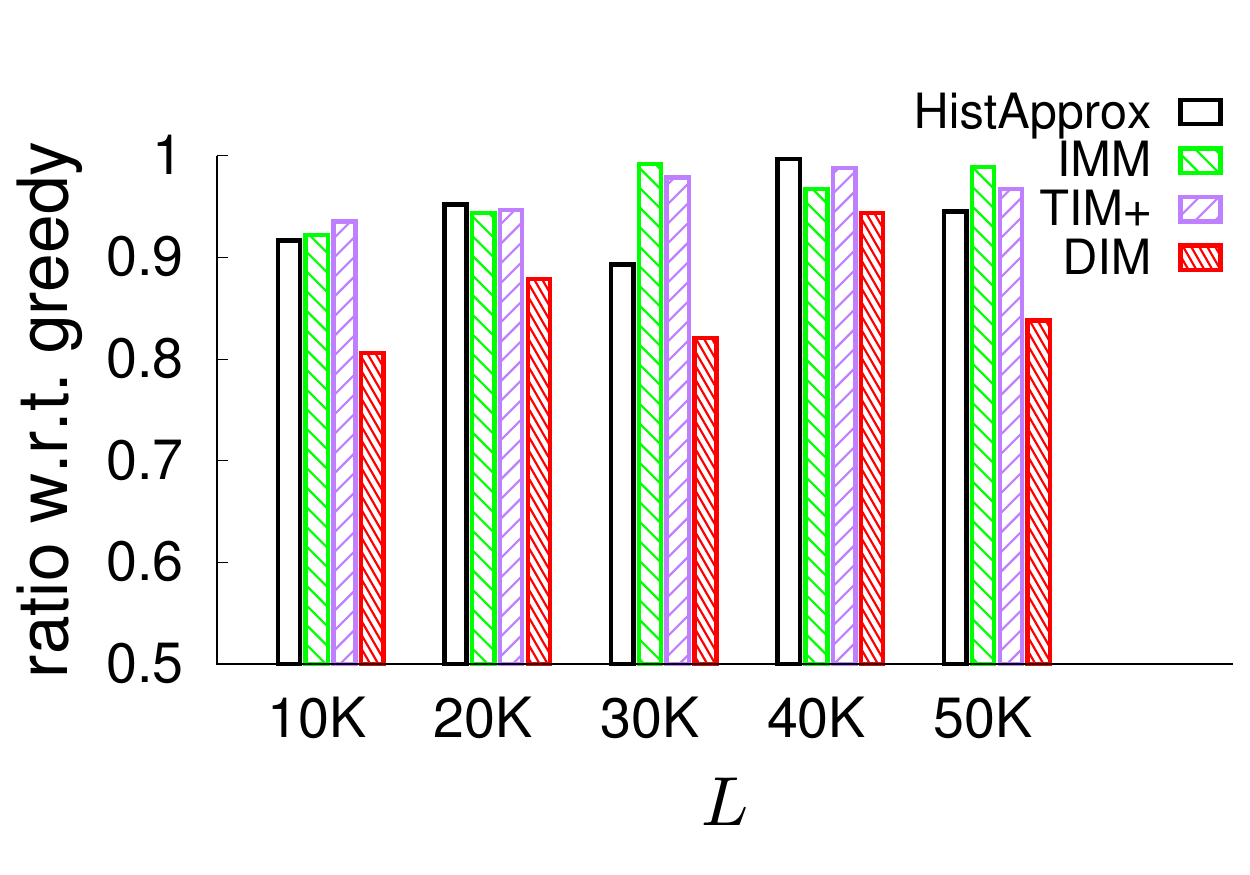}} \\
\subfloat[StackOverflow-c2q]{%
\includegraphics[width=.5\linewidth]{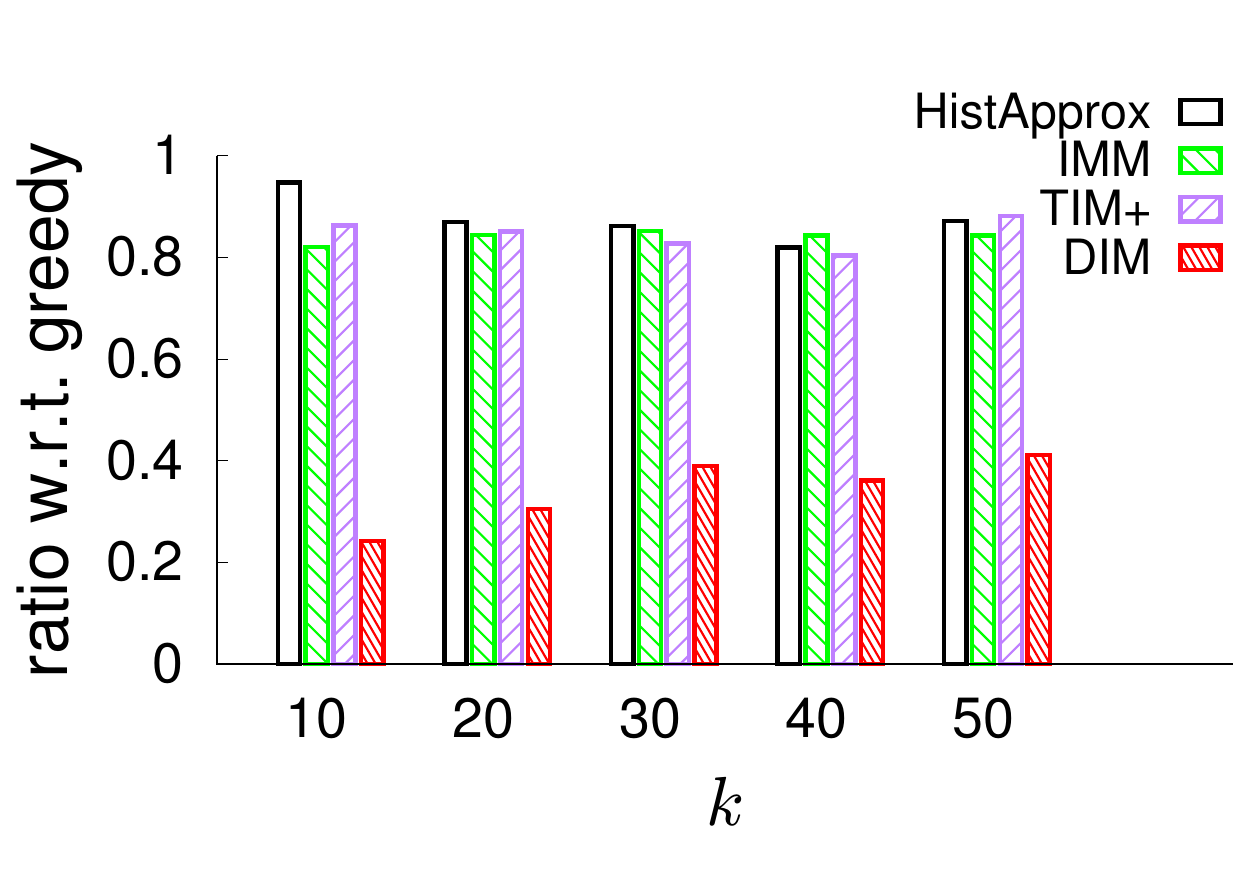}}
\subfloat[StackOverflow-c2q]{%
\includegraphics[width=.5\linewidth]{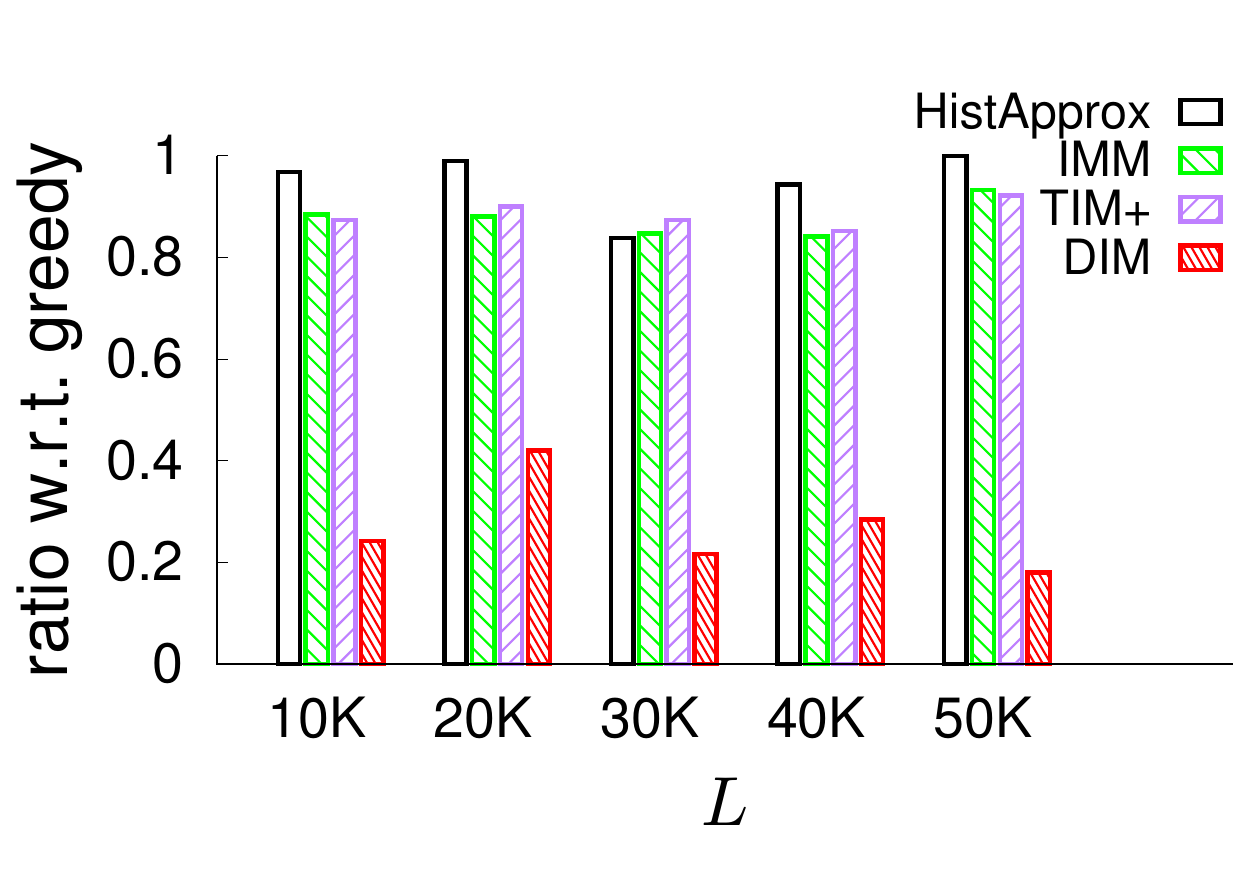}}
\caption{Solution quality comparison (higher is better)}
\label{fig:val_cmp}
\end{figure}

\begin{figure}[htp]
\centering
\subfloat[Twitter-Higgs]{%
\includegraphics[width=.5\linewidth]{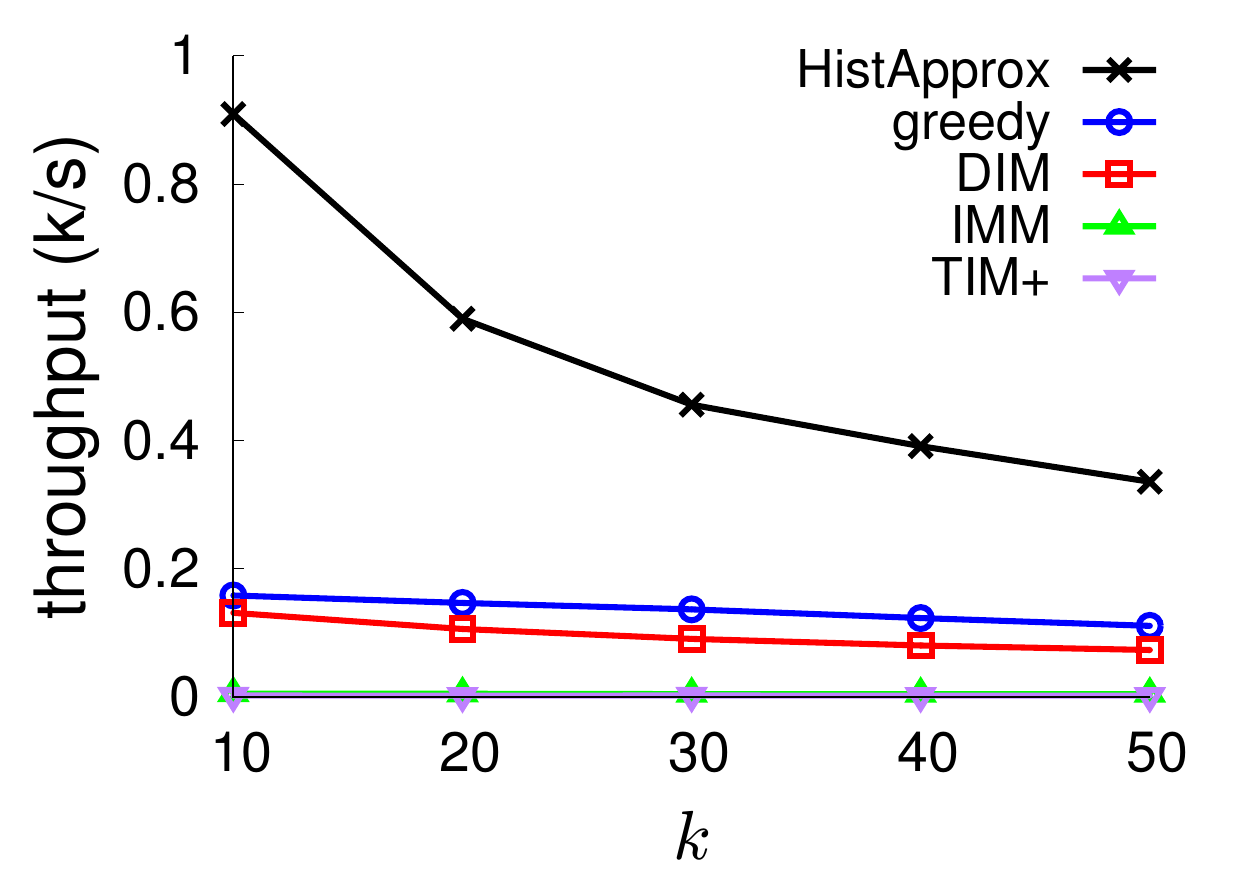}}
\subfloat[Twitter-Higgs]{%
\includegraphics[width=.5\linewidth]{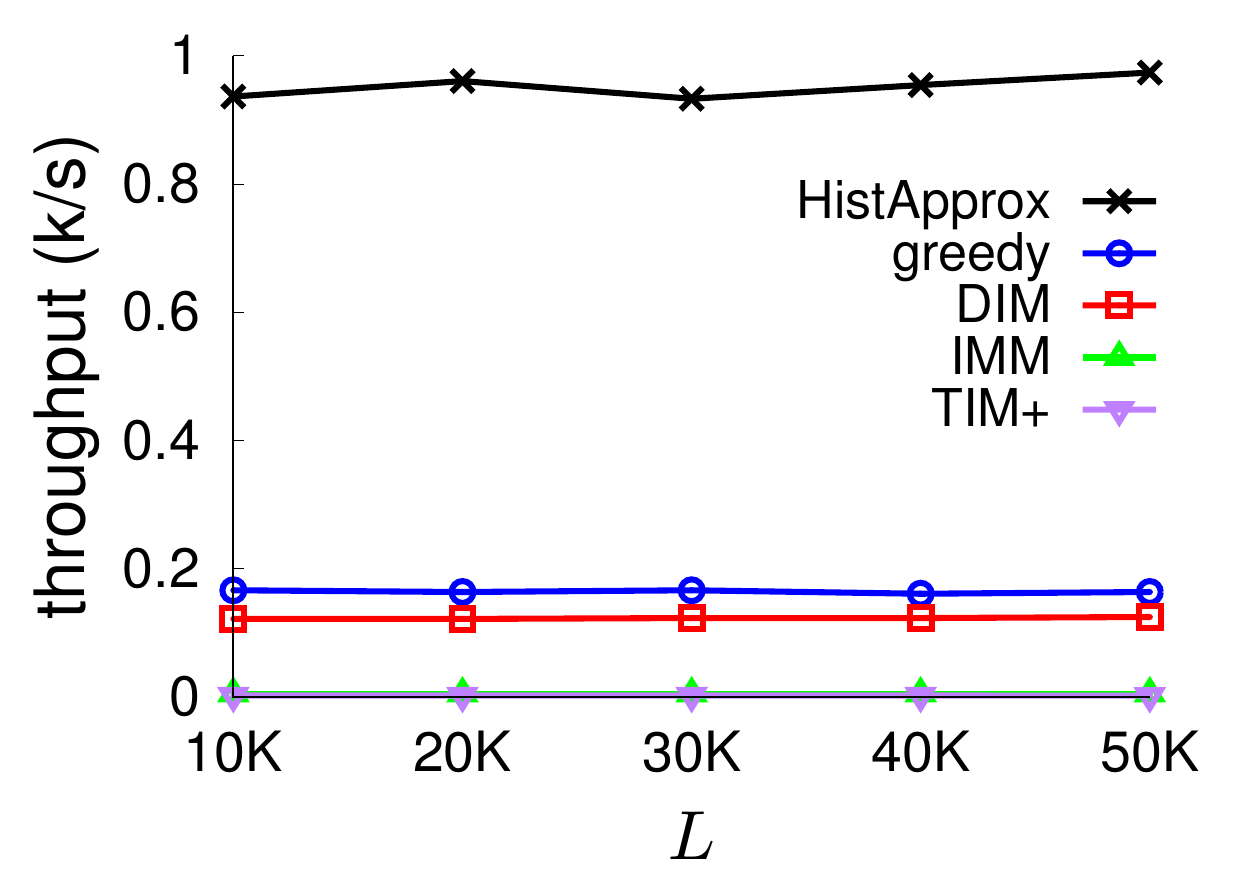}} \\
\subfloat[StackOverflow-c2q]{%
\includegraphics[width=.5\linewidth]{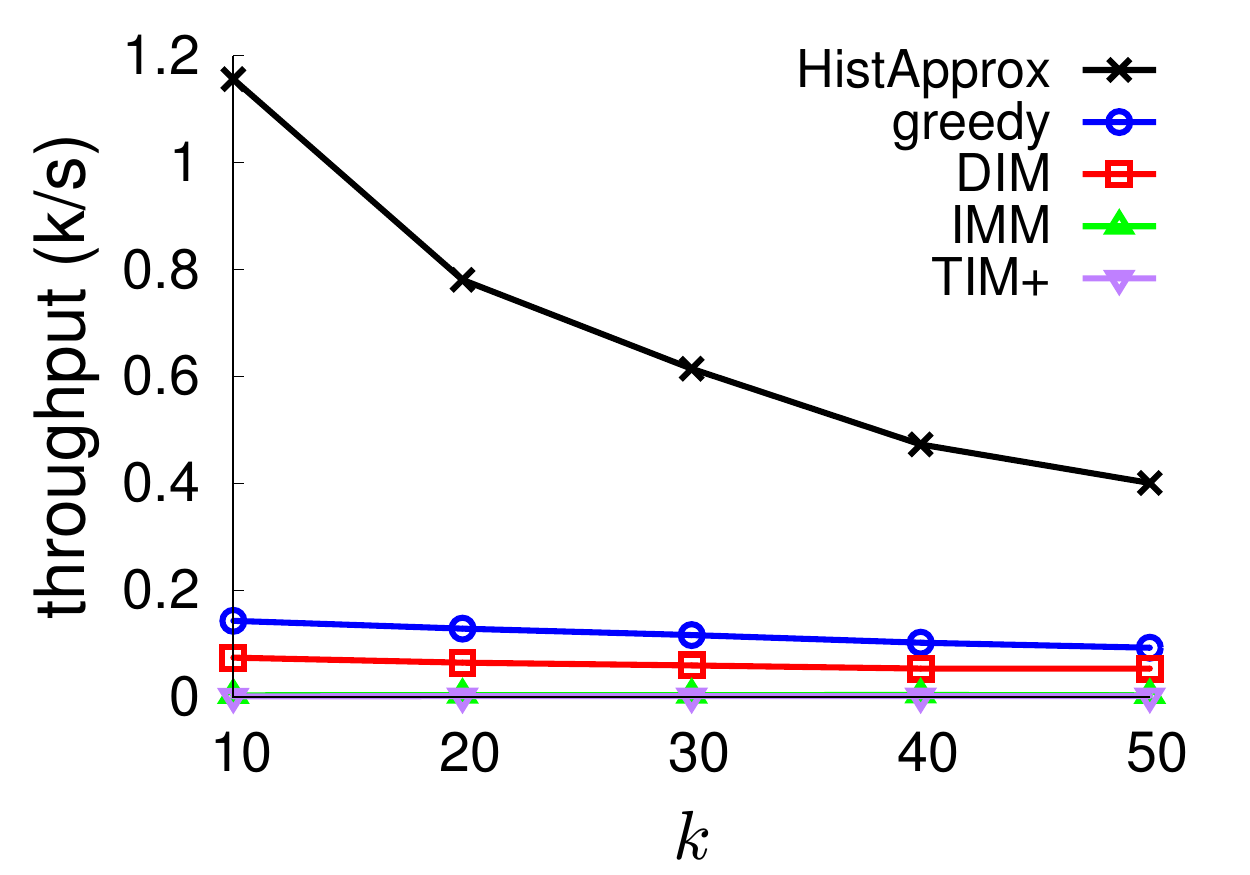}}
\subfloat[StackOverflow-c2q]{%
\includegraphics[width=.5\linewidth]{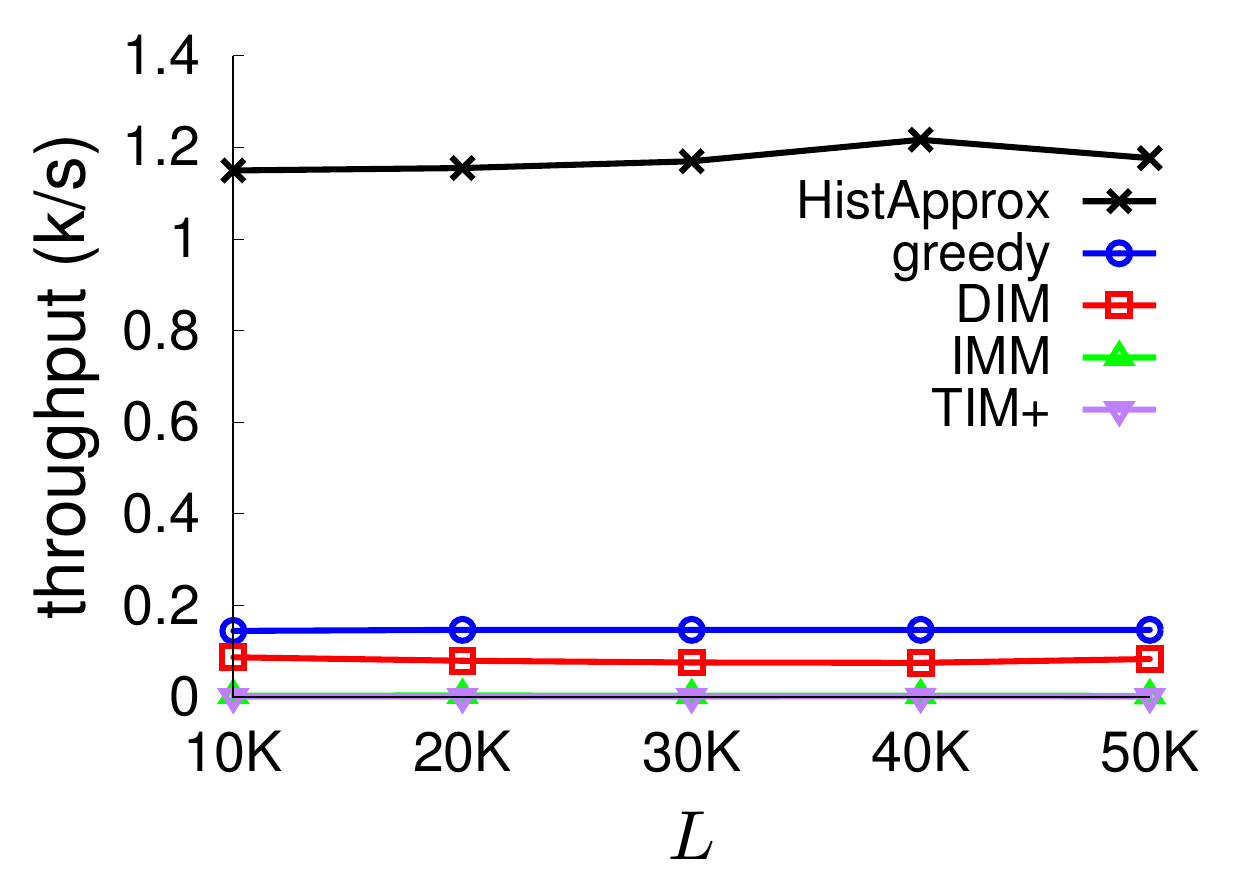}}
\caption{Throughput comparison (higher is better)}
\label{fig:throughput}
\end{figure}

From Fig.~\ref{fig:val_cmp}, we observe that \textsc{HistApprox}, IMM and TIM+
always find high quality solutions.
In contrast, DIM seems not so stable, and it performs even worse on the
StackOverflow-c2q dataset than on the Twitter-Higgs dataset.
From Fig.~\ref{fig:throughput}, we observe that \textsc{HistApprox} achieves the
highest throughput than the other methods, then comes Greedy and DIM, the two
methods IMM and TIM+ designed for static graphs have the lowest throughput (and
they almost overlap with each other).
Greedy using the lazy evaluation trick is faster than DIM, IMM, and TIM+ because
the oracle call implementation in Greedy is much faster than oracle call
implementations in the other three methods, which relay on Monte-Carlo
simulations.

This experiment demonstrates that our proposed \textsc{HistApprox} algorithm can
find high quality solutions with high throughput.

\section{\textbf{Related Work}}
\label{sec:related_work}

\header{Influence Maximization (IM) over Dynamic Networks.}
IM and its variants on static networks have been extensively
studied
(e.g.,~\cite{Kempe2003,Borgs2014,Chen2010,Tang2014,Lucier2015,Tang2015,Chen2016d,Litou2017,Lin2017},
to name a few).
Here we devote to review some literature related to IM on dynamic networks.

Aggarwal et al.~\cite{Aggarwal2012} and Zhuang et al.~\cite{Zhuang2013} propose
heuristic approaches to solve IM on dynamic networks.
These heuristic approaches, however, have no theoretical guarantee on the quality
of selected nodes.

Song et al.~\cite{Song2017} use the interchange greedy
approach~\cite{Nemhauser1978} to solve IM over a sequence of networks with the
assumption that these networks change smoothly.
The interchange greedy approach updates current influential nodes based on
influential nodes found in previous network and avoids constructing the solution
from an empty set.
This approach has an approximation factor $(1/2-\epsilon)$.
However, the interchange greedy approach degrades to running from scratch if these
networks do not change smoothly.
This limits its application on highly dynamic networks.

Both~\cite{Ohsaka2016} and~\cite{Yang2017c} extend the reverse-reachable (RR)
sets~\cite{Borgs2014} to updatable index structures for faster computing the
influence of a node or a set of nodes in dynamic networks.
In particular, \cite{Yang2017c} considers finding influential individuals in
dynamic networks, which is actually a different problem.
It is worth noting that RR-sets are approximation methods designed to speed up the
computation of the influence of a set of nodes under the IC or LT model, which is
\#P-hard; while our approach is a data-driven approach without any assumption of
influence spreading model.

There are also some variants of IM on dynamic networks such as topic-specific
influencers query~\cite{Subbian2013,SubbianEtAl2016} and online learning approach
for IM~\cite{Lei2015}.

\header{Streaming Submodular Optimization (SSO) Methods.}
SSO over insertion-only streams is first studied in~\cite{Kumar2013a} and the
state-of-the-art approach is the threshold based {\sc SieveStreaming}
algorithm~\cite{Badanidiyuru2014a} which has an approximation factor
$(1/2-\epsilon)$.

The sliding-window stream model is first introduced in~\cite{Datar2002} for
maintaining summation aggregates on a data stream (e.g., counting the number of
$1$'s in a $0$-$1$ stream).
Chen et al.~\cite{Chen2016f} leverage the smooth histogram
technique~\cite{Braverman2007} to solve SSO under the sliding-window stream, and
achieves a $(1/4-\epsilon)$ approximation factor.
Epasto et al.~\cite{Epasto2017} further improve the approximation factor to
$(1/3-\epsilon)$.

Note that our problem requires solving SSO over time-decaying streams, which is
more general than sliding-window streams.
To the best of our knowledge, there is no previous art that can solve SSO on
time-decaying streams.

Recently, \cite{Wang2017} leveraged the results of~\cite{Chen2016f} to develop a
streaming method to solve IM over sliding-window streams.
The best approximation factor of their framework is $(1/4-\epsilon)$.
In our work, we consider the more general time-decaying stream, and our approach
can achieve the $(1/2-\epsilon)$ approximation factor.

\header{Maintaining Time-Decaying Stream Aggregates.}
Cohen et al.~\cite{Cohen2006}~first extend the sliding-window model
in~\cite{Datar2002} to the general time-decaying model for the purpose of
approximating summation aggregates in data streams.
Cormode et al.~\cite{Cormode2009}~consider the similar problem by designing
time-decaying sketches.
These studies have inspired us to consider the more general time-decaying streams.

\section{\textbf{Conclusion}}
\label{sec:conclusion}

This work studied the problem of identifying influential nodes from highly dynamic
node interaction streams.
Our methods are based on a general TDN model which allows discarding outdated data
in a smooth manner.
We designed three algorithms, i.e., {\sc SieveADN}, {\sc BasicReduction}, and {\sc
HistApprox}.
{\sc SieveADN} identifies influential nodes over ADNs.
{\sc BasicReduction} uses {\sc SieveADN} as a basic building block to identify
influential nodes over TDNs.
{\sc HistApprox} significantly improves the efficiency of {\sc BasicReduction}.
We conducted extensive experiments on real interaction datasets and the results
demonstrated the efficiency and effectiveness of our methods.

\section*{Acknowledgment}

This work is supported by the King Abdullah University of Science and Technology
(KAUST), Saudi Arabia.

\bibliographystyle{IEEEtran}

\clearpage
\section*{Proof of Theorem~\ref{thm:ADN}}

The proof goes along the lines of proof in~\cite{Badanidiyuru2014a} with the major
difference that we allow duplicate nodes in the node stream and the time-varying
nature of objective $f_t$ in our problem.

Let $\OPT_t$ denote the value of an optimal solution at time $t$.
Let $\Delta_t$ denote the maximum value of singletons at time $t$.
The threshold set $\Theta$ in {\sc SieveADN} guarantees that there always exists a
threshold $\theta\in\Theta$ falling into interval $[(1 -
\epsilon)\frac{\OPT_t}{2k}, (1 + \epsilon)\frac{\OPT_t}{2k}]$.
To see this, because $\Delta_t\leq\OPT_t\leq k\Delta_t$, therefore the smallest
threshold in $\Theta$ is at most $(1+\epsilon)\frac{\OPT_t}{2k}$.
Setting $\theta$ to be the largest threshold in $\Theta$ that does not exceed $(1
+ \epsilon)\frac{\OPT_t}{2k}$ will satisfy the claim.

Let $S_\theta^{(t)}$ denote the set of nodes corresponding to threshold $\theta$
maintained in {\sc SieveADN} at time $t$.
We partition set $S_\theta^{(t)}$ into two disjoint subsets $S_\theta^{(t-1)}$ and
$ \bar{S}_\theta^{(t)}$ where $S_\theta^{(t-1)}$ is the maintained nodes at time
$t-1$ and $\bar{S}_\theta^{(t)}\subseteq \bar{V}_t$ is the set of nodes newly
selected from $\bar{V}_t$ at time $t$.
Our goal is to prove that $f_t(S_\theta^{(t)})\geq(1/2-\epsilon)\OPT_t$.

We first inductively show that $f_t(S_\theta^{(t)})\geq |S_\theta^{(t)}|\theta$.
Thus if $|S_\theta^{(t)}|=k$, then $f_t(S_\theta^{(t)})\geq
k\theta\geq\frac{1-\epsilon}{2}\OPT_t$.

At time $t=1$, by definition, $S_\theta^{(1)}=\bar{S}_\theta^{(1)}$ consists of
nodes such that their marginal gains are at least $\theta$.
Therefore $f_1(S_\theta^{(1)})\geq |S_\theta^{(1)}|\theta$.
Assume it holds that $f_{t-1}(S_\theta^{(t-1)})\geq |S_\theta^{(t-1)}|\theta$ at
time $t-1$, then let us consider $f_t(S_\theta^{(t)})$ at time $t$:
\begin{align*}
f_t(S_\theta^{(t)})
&= f_t(S_\theta^{(t-1)}\cup\bar{S}_\theta^{(t)}) \\
&= \underset{\text{first part}}{
\underbrace{f_t(S_\theta^{(t-1)}\cup\bar{S}_\theta^{(t)})-f_t(S_\theta^{(t-1)})}}
+ \underset{\text{second part}}{\underbrace{f_t(S_\theta^{(t-1)})}}
\end{align*}

The first part corresponds to the gain of newly selected nodes
$\bar{S}_\theta^{(t)}$ at time $t$ with respect to the nodes selected previously.
Because the newly selected nodes all have marginal gain at least $\theta$,
therefore, the first part $\geq |\bar{S}_\theta^{(t)}|\theta$.

For the second part, according to the property of ADNs, we have
$f_t(S_\theta^{(t-1)}) \geq f_{t-1}(S_\theta^{(t-1)})$.
Using the induction assumption, it follows that the second part $\geq
|S_\theta^{(t-1)}|\theta$.

We conclude that $f_t(S_\theta^{(t)})\geq (|\bar{S}_\theta^{(t)}| +
|S_\theta^{(t-1)}|)\theta= |S_\theta^{(t)}|\theta$, and $f_t(S_\theta^{(t)})\geq
\frac{1-\epsilon}{2}\OPT_t$ when $|S_\theta^{(t)}|=k$.

In the case $|S_\theta^{(t)}|<k$, let $S_t^*$ denote an optimal set of nodes at
time $t$.
We aim to upper bound the gap between $\OPT_t$ and $f_t(S_\theta^{(t)})$.
Using the submodularity of $f_t$, we have
\begin{align*}
&\OPT_t - f_t(S_\theta^{(t)}) \\
\leq& \sum_{x\in S_t^*\backslash S_\theta^{(t)}}\delta_{S_\theta^{(t)}}(x) \\
=& \sum_{x\in S_{t,1}^*\backslash \bar{S}_\theta^{(t)}}\delta_{S_\theta^{(t)}}(x)
+\sum_{y\in S_{t,2}^*\backslash S_\theta^{(t-1)}}\delta_{S_\theta^{(t)}}(y)
\end{align*}
where $S_{t,1}^*\triangleq S_t^*\cap\bar{V}_t$ is the set of optimal nodes in
$\bar{V}_t$, and $S_{t,2}^*\triangleq S_t^*\backslash S_{t,1}^*$ is the set of
rest optimal nodes.
For the first part, it is obvious that $\delta_{S_\theta^{(t)}}(x) < \theta$.
For the second part, because $S_\theta^{(t-1)}\subseteq S_\theta^{(t)}$, then
$\delta_{S_\theta^{(t)}}(y) \leq \delta_{S_\theta^{(t-1)}}(y)$ due to
submodularity.
By definition,
\begin{align*}
\delta_{S_\theta^{(t-1)}}(y)
&= f_t(\{y\}\cup S_\theta^{(t-1)}) - f_t(S_\theta^{(t-1)}) \\
&\leq f_{t-1}(\{y\}\cup S_\theta^{(t-1)}) - f_{t-1}(S_\theta^{(t-1)}) \\
&\leq \theta.
\end{align*}
The first inequality holds due to the fact that the influence spread of
$y\notin\bar{V}_t$ does not increase at $t$; thus adding edges to the graph at
time $t$ cannot increase $y$'s marginal gain.
For the second inequality, we continue the reasoning until time $t'<t$ when
$y\in\bar{V}_{t'}$; since $y$ is not selected, then its marginal gain is less than
$\theta$.
Therefore, it follows that
\[
\OPT_t - f_t(S_\theta^{(t)})
\leq k\theta
\leq k(1+\epsilon)\frac{\OPT_t}{2k}
=\frac{1+\epsilon}{2}\OPT_t,
\]
which implies $f_t(S_\theta^{(t)})\geq\frac{1-\epsilon}{2}\OPT_t$.

Because {\sc SieveADN} outputs a set of nodes whose value is at least
$f_t(S_\theta^{(t)})$, we thus conclude that {\sc SieveADN} achieves an $(1/2 -
\epsilon)$ approximation factor.
\qed

\section*{Proof of Theorem~\ref{thm:SieveADN-complexity}}

Notice that set $\Theta$ contains $\log_{1+\epsilon}2k=O(\epsilon^{-1}\log k)$
thresholds.
For each node, we need to calculate the marginal gains $|\Theta|$ times and each
uses time $O(\gamma)$.
Thus, for $|\bar{V}_t|=b$ nodes, the time complexity is
$O(b\gamma\epsilon^{-1}\log k)$.
{\sc SieveADN} needs to maintain sets $\{S_\theta\colon \theta\in\Theta\}$ in
memory.
Each set has cardinality at most $k$, and there are $|\Theta|=O(\epsilon^{-1}\log
k)$ sets at any time.
Thus the space complexity is $O(k\epsilon^{-1}\log k)$.
\qed

\section*{Proof of Theorem~\ref{thm:histogram_property}}

If $x_{i+1}'$ became the successor of $x_i'$ due to the removal of indices between
them at some most recent time $t'\leq t$, then procedure \ReduceRedundancy in
Alg.~\ref{alg:histogram} guarantees that $g_{t'}(x_{i+1}')\geq
(1-\epsilon)g_{t'}(x_i')$ after the removal at time $t'$.
From time $t'$ to $t$, it is also impossible to have edge with lifetime between
the two indices arriving.
Otherwise we will meet a contradiction: either these edges form redundant {\sc
SieveADN} instances again thus $t'$ is not the most recent time as claimed, or
these edges form non-redundant {\sc SieveADN} instances thus $x_i$ and $x_{i+1}$
cannot be consecutive at time $t$.
We thus get \textbf{C2}.

Otherwise $x_{i+1}'$ became the successor of $x_i'$ when one of them is inserted
in the histogram at some time $t'\leq t$.
Without lose of generality, let us assume $x_{i+1}'$ is inserted after $x_i'$ at
time $t'$.
If edges with lifetimes between the two indices arrive from time $t'$ to $t$,
these edges must form redundant {\sc SieveADN} instances.
We still get \textbf{C2}.
Or, there is no edge with lifetime between the two indices at all, i.e., $G_t$
contains no edge with lifetime between $x_i$ and $x_{i+1}$.
We get \textbf{C1}.
\qed

\section*{Proof of Theorem~\ref{thm:histogram_guarantee}}

If $x_1=1$ at time $t$, then $\CA_1^{(t)}$ exists.
By Theorem~\ref{thm:basic-reduction}, we have $g_t(x_1)=g_t(1)\geq (1/2 -
\epsilon)\OPT_t$.

Otherwise we have $x_1>1$ at time $t$.
If $G_t$ contains edges with lifetime less than $x_1$, then $\CA_{x_1}^{(t)}$ does
not process all of the edges in $G_t$, thus incurs a loss of solution quality.
Our goal is to bound this loss.

Let $x_0$ denote the most recent expired predecessor of $x_1$ at time $t$, and let
$t'<t$ denote the last time $x_0$'s ancestor $x_0'$ was still alive, i.e.,
$x_0'=1$ at time $t'$ (refer to Fig.~\ref{fig:proof_histapprox}).
For ease of presentation, we commonly refer to $x_0$ and $x_0$'s ancestors as the
left index, and refer to $x_1$ and $x_1$'s ancestors as the right index.
Obviously, in time interval $(t',t]$, no edge with lifetime less than the right
index arrives; otherwise, these edges would create new indices before the right
index; then $x_1$ will not be the first index at time $t$, or $x_0$ is not the
most recent expired predecessor of $x_1$ at time $t$.

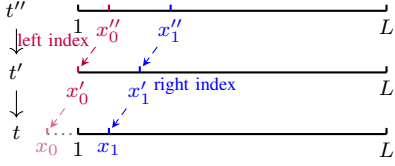
\begin{figure}[htp]
\centering
\begin{tikzpicture}[
scale=.82,
txt/.style={font=\footnotesize, inner sep=3pt, text depth=.25ex, align=center},
arr/.style={densely dashed, ->,>=stealth},
]

\node[txt] (tpp) at (-1,2) {$t''$};
\draw[thick] (0,2) -- (5,2);

\draw[thick] (0,2) -- ++(0,.1);
\draw[thick, purple] (.5,2) -- ++(0,.1);
\draw[thick, blue] (1.5,2) -- ++(0,.1);
\draw[thick] (5,2) -- ++(0,.1);

\node[txt, anchor=north] at (0,2) {$1$};
\node[txt, purple, anchor=north] (x0pp) at (.5,2) {$x_0''$};
\node[txt, blue, anchor=north] (x1pp) at (1.5,2) {$x_1''$};
\node[txt, anchor=north] at (5,2) {$L$};

\node[txt] (tp) at (-1,1) {$t'$};
\draw[thick] (0,1) -- (5,1);

\draw[thick, purple] (0,1) -- ++(0,.1);
\draw[thick, blue] (1,1) -- ++(0,.1);
\draw[thick] (5,1) -- ++(0,.1);

\node[txt, purple, anchor=north] (x0p) at (0,1) {$x_0'$};
\node[txt, blue, anchor=north] (x1p) at (1,1) {$x_1'$};
\node[txt, anchor=north] at (5,1) {$L$};

\node[txt] (t) at (-1,0) {$t$};
\draw[thick] (0,0) -- (5,0);
\draw[thick,dotted, gray] (-.5,0) -- (0,0);

\draw[thick, purple!60] (-.5,0) -- ++(0,.1);
\draw[thick] (0,0) -- ++(0,.1);
\draw[thick, blue] (.5,0) -- ++(0,.1);
\draw[thick] (5,0) -- ++(0,.1);

\node[txt, purple!60, anchor=north] at (-.5,0) {$x_0$};
\node[txt, anchor=north] at (0,0) {$1$};
\node[txt, anchor=north] at (5,0) {$L$};
\node[txt, blue, anchor=north] (x1) at (.5,0) {$x_1$};

\draw[arr, blue] (x1pp) -- (1.05,1.1);
\draw[arr, blue] (x1p) -- (.55,.1);
\draw[arr, purple] (x0pp) -- (.05,1.1);
\draw[arr, purple!60] (x0p) -- (-.45,.1);

\draw[->] (tpp) -- (tp);
\draw[->] (tp) -- (t);

\node[txt,purple,font=\scriptsize] at (-.4,1.5) {left index};
\node[txt,blue,font=\scriptsize] at (1.9,.8) {right index};
\end{tikzpicture}

\caption{Indices relations at time $t''\leq t'<t$.
$x_0$ (resp.~$x_1$) and its ancestors are commonly referred to as the left
(right) index.
}
\label{fig:proof_histapprox}
\end{figure}

Notice that $x_0'$ and $x_1'$ are two consecutive indices at time $t'$.
By Theorem~\ref{thm:histogram_property}, we have two cases.

\bullethdr{If C1 holds.}
In this case, $G_{t'}$ contains no edge with lifetime between $x_0'$ and $x_1'$.
Because there is also no edge with lifetime less than the right index from time
$t'$ to $t$, then $G_t$ has no edge with lifetime less than $x_1$ at time $t$.
Therefore, $\CA_{x_1}^{(t)}$ processed all of the edges in $G_t$.
By Theorem~\ref{thm:basic-reduction}, we still have $g_t(x_1) = g_t(1)\geq (1/2 -
\epsilon)\OPT_t$.

\bullethdr{If C2 holds.}
In this case, there exists time $t''\leq t'$ s.t.
$g_{t''}(x_1'')\geq (1 - \epsilon)g_{t''}(x_0'')$ holds (refer to
Fig.~\ref{fig:proof_histapprox}), and from time $t''$ to $t'$, no edge with
lifetime between the two indices arrived (however $G_t$ may have edges with
lifetime less than $x_1$ at time $t$ and these edges arrived before time $t''$).
Notice that edges with lifetime no larger than the left index all expired after
time $t'$ and they do not affect the solution at time $t$; therefore, we can
safely ignore these edges in our analysis and only care edges with lifetime no
less than the right index arrived in interval $[t'',t]$.
Notice that these edges are only inserted on the right side of the right index.

In other words, at time $t''$, the output values of the two instances satisfy
$g_{t''}(x_1'')\geq (1 - \epsilon)g_{t''}(x_0'')$; from time $t''$ to $t$, the two
instances are fed with same edges.
Such a scenario has been studied in the sliding-window case~\cite{Epasto2017}.
By the submodularity of $f_t$ and monotonicity of the \textsc{SieveADN} algorithm,
the following lemma guarantees that $g_t(x_1)$ is close to $\OPT_t$.

\begin{lemma}[\cite{Epasto2017}]\label{lem:smooth_histogram}
Consider a cardinality constrained monotone submodular function maximization
problem.
Let $\CA(S)$ denote the output value of applying the \textsc{SieveStreaming}
algorithm on stream $S$.
Let $S\Vert S'$ denote the concatenation of two streams $S$ and $S'$.
If $\CA(S_2)\geq(1-\epsilon)\CA(S_1)$ for $S_2\subseteq S_1$ (i.e., each element
in stream $S_2$ is also an element in stream $S_1$), then $\CA(S_2\Vert S)\geq
(1/3-\epsilon)\OPT$ for all $S$, where $\OPT$ is the value of an optimal
solution in stream $S_1\Vert S$.
\end{lemma}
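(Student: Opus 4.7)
The plan is to establish the $(1/3-\epsilon)$ bound by deriving two complementary lower bounds on $\CA(S_2\Vert S)$---one from the smoothness hypothesis, one from a per-bucket analysis of \textsc{SieveStreaming} run on $S_2\Vert S$---and then balancing them using the subadditivity that follows from monotone submodularity. First I would fix an optimal size-$k$ set $U^*\subseteq S_1\Vert S$ with $f(U^*)=\OPT$ and partition it along the $S_1/S$ boundary as $U^*=A\cup B$ with $A\subseteq S_1$ and $B\subseteq S$. Since $f$ is monotone submodular with $f(\emptyset)=0$, subadditivity yields $\OPT\leq f(A)+f(B)$, the workhorse inequality for combining bounds in terms of $A$ alone and $B$ alone.

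Bound~(I) comes from the hypothesis. Since $A$ is a feasible $k$-subset of $S_1$, \textsc{SieveStreaming}'s own $(1/2-\epsilon)$-approximation on $S_1$ gives $\CA(S_1)\geq(1/2-\epsilon)f(A)$; chaining the smoothness $\CA(S_2)\geq(1-\epsilon)\CA(S_1)$ with the prefix-monotonicity $\CA(S_2\Vert S)\geq\CA(S_2)$ (extending the stream only enlarges each maintained bucket) yields $\CA(S_2\Vert S)\geq(1-\epsilon)(1/2-\epsilon)f(A)$. Bound~(II) comes from tracking the \textsc{SieveStreaming} bucket on $S_2\Vert S$ at threshold $\theta^*\in\Theta$ closest to $\OPT/(2k)$; I would first justify that such a $\theta^*$ lies in the threshold set maintained on $S_2\Vert S$ using the max-singleton analysis from the \textsc{SieveStreaming} proof. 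Let $X$ denote the bucket's set at the end of $S_2\Vert S$. Either $|X|=k$, in which case $f(X)\geq k\theta^*\geq(1/2-\epsilon)\OPT$ and the conclusion holds immediately; or $|X|<k$, in which case every element of $B$ (and of $A\cap S_2$) was processed and rejected with marginal gain below $\theta^*$ against the intermediate bucket set, and by the decreasing-marginal-gain property of submodular $f$ we get $\delta_X(b)<\theta^*$ for every such element. Summing and using monotonicity gives $f(X)\geq f((A\cap S_2)\cup B)-k\theta^*$.

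The main obstacle is combining Bound~(I) and Bound~(II) tightly enough to reach $(1/3-\epsilon)\OPT$; a naive pointwise maximum, parameterized by $r=f(A)/\OPT$, only yields $(1/4-\epsilon)\OPT$ or worse. Closing the gap requires carefully controlling how much the ``unseen'' portion $A\setminus S_2$ contributes to $f(A\cup B)$ beyond $f((A\cap S_2)\cup B)$: intuitively, if $A\setminus S_2$ contributed substantially new value, then the bucket at threshold $\theta^*$ on $S_1$ would have absorbed it and $\CA(S_1)$ would strictly exceed $(1-\epsilon)^{-1}\CA(S_2)$, contradicting the smoothness hypothesis. Making this intuition rigorous---by tracking the same threshold $\theta^*$ on $S_1$ versus $S_2$, comparing their bucket sets, and using the hypothesis to bound the aggregate marginal contribution of $A\setminus S_2$ to $X$---is the technical core of the argument, and it is precisely the smooth-histogram-plus-submodularity machinery refined in \cite{Braverman2007,Chen2016f,Epasto2017}. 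I expect the final combining inequality to give $\CA(S_2\Vert S)\geq(1/3-\epsilon)\OPT$ after absorbing lower-order $\epsilon$ terms from the chained approximation factors.
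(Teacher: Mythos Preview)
The paper does not prove this lemma at all: it is quoted verbatim from \cite{Epasto2017} and invoked as a black box inside the proof of Theorem~\ref{thm:histogram_guarantee}. So there is no ``paper's own proof'' to compare against; any proof you supply is strictly more than what the paper offers.

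As for your sketch itself: the decomposition $U^*=A\cup B$, the subadditivity bound $\OPT\le f(A)+f(B)$, Bound~(I) via prefix-monotonicity plus the hypothesis, and Bound~(II) via the threshold-$\theta^*$ bucket analysis are all the right moving parts, and you correctly diagnose that the naive combination only reaches $(1/4-\epsilon)$. But the passage from $1/4$ to $1/3$ is exactly the content of \cite{Epasto2017}'s improvement over \cite{Chen2016f}, and at that point your proposal stops proving and starts citing: ``making this intuition rigorous\ldots is precisely the smooth-histogram-plus-submodularity machinery refined in \cite{Braverman2007,Chen2016f,Epasto2017}.'' That is circular---you are deferring the key step to the very reference the lemma is imported from. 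If you intend to give a self-contained proof, you need to actually execute the argument that bounds the marginal contribution of $A\setminus S_2$ on top of the bucket set $X$; the cleanest route is the submodular smoothness observation that if $f(Y)\ge(1-\epsilon)f(Z)$ with $Y\subseteq Z$ then $f(Y\cup C)\ge(1-\epsilon)f(Z\cup C)$ for any $C$ (one line from diminishing returns), and then translating that set-level statement through the \textsc{SieveStreaming} bucket structure. You should also be explicit about why the threshold $\theta^*\approx\OPT/(2k)$---with $\OPT$ taken over $S_1\Vert S$---is guaranteed to lie in the lazily maintained threshold set $\Theta$ for the run on $S_2\Vert S$, since the max-singleton $\Delta$ differs across the four streams; this is easy but not automatic.
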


In our scenario, at time $t''$ the two instances satisfy $g_{t''}(x_1'')\geq (1 -
\epsilon)g_{t''}(x_0'')$ and $\CA_{x_1''}$'s input is a subset of $\CA_{x_0''}$'s
input.
After time $t''$, the two instances are fed with same edges.
Because {\sc SieveADN} preserves the property of {\sc SieveStreaming}, hence our
case can be mapped to the scenario in Lemma~\ref{lem:smooth_histogram}.
We thus obtain $g_t(x_1)\geq (1/3 - \epsilon)\OPT_t$.

Combining the above results, we conclude that {\sc HistApprox} guarantees a $(1/3
- \epsilon)$ approximation factor.
\qed

\section*{Proof of Theorem~\ref{thm:histogram_complexity}}

At any time $t$, because $g_t(x_{i+2})<(1-\epsilon)g_t(x_i)$, and $g_t(l)\in
[\Delta, k\Delta]$, then the size of index set $\bx_t$ is upper bounded by
$O(\log_{(1-\epsilon)^{-1}}k)=O(\epsilon^{-1}\log k)$.
For each batch of edges, in the worst case, we need to update $|\bx_t|$ {\sc
SieveADN} instances, and each {\sc SieveADN} instance has update time
$O(b\gamma\epsilon^{-1}\log k)$ according to
Theorem~\ref{thm:SieveADN-complexity}.
In addition, procedure \ReduceRedundancy has time complexity
$O(\epsilon^{-2}\log^2k)$ and it is called at most $b$ times.
Thus the total time complexity is $O(b(\gamma+1)\epsilon^{-2}\log^2k)$.

For memory usage, because {\sc HistApprox} maintains $|\bx_t|$ {\sc SieveADN}
instances, and each instance uses memory $O(k\epsilon^{-1}\log k))$ according to
Theorem~\ref{thm:SieveADN-complexity}.
Thus the total memory used by {\sc HistApprox} is $O(k\epsilon^{-2}\log^2k))$.
\qed

\end{document}